\newtheorem{thm}{Theorem}[section]
\newtheorem{lem}[thm]{Lemma}
\newtheorem{pro}[thm]{Proposition}
\newtheorem{ex}[thm]{Example}
\newtheorem{defi}[thm]{Definition}
\newcommand {\emptycomment}[1]{}
\newcommand {\yh}[1]{{\marginpar{*}\scriptsize\textcolor{purple}{yh: #1}}}
\newcommand{\lon }{\,\rightarrow\,}
\newcommand{\be }{\begin{equation}}
\newcommand{\ee }{\end{equation}}
\newcommand{\g}{\mathfrak g}
\newcommand{\h}{\mathfrak h}
\newcommand{\huaC}{{\mathfrak{C}}}
\newcommand{\huaH}{\mathcal{H}}
\newcommand{\frkg}{\mathfrak g}
\newcommand{\frkX}{\mathfrak X}
\newcommand{\half}{\frac{1}{2}}
\newcommand{\Courant}[1]{\left\llbracket  #1\right\rrbracket }
\newcommand{\jetd}{\mathbbm{d}}
\newcommand{\Id}{\rm{Id}}
\newcommand{\p}{\mathbbm{p}}
\newcommand{\br}[1]{   [ \cdot,    \cdot  ]   }
\newcommand{\id}{\mathbbm{i}}
\newcommand{\dM}{\mathrm{d}}
\newcommand{\Hom}{\mathrm{Hom}}
\newcommand{\Nat}{\mathbb N}
\newcommand{\Der}{\mathrm{Der}}
\newcommand{\IDer}{\mathrm{InnDer}}
\newcommand{\gl}{\mathfrak {gl}}
\newcommand{\AD}{\mathfrak{ad}}
\newcommand{\ad}{\mathrm{ad}}
\newcommand{\Img}{\mathrm{Im}}
\newcommand{\sgn}{\mathrm{sgn}}
\newcommand{\K}{\mathbb{K}}
\newcommand{\perm}{\mathbb S}
\newcommand{\NR}{\mathrm{NR}}
\begin{document}

\title[Maurer-Cartan characterizations and cohomologies of compatible Lie
algebras]{Maurer-Cartan characterizations and cohomologies of compatible Lie
algebras}

\author{Jiefeng Liu}
\address{School of Mathematics and Statistics, Northeast Normal University, Changchun 130024, China}
\email{liujf534@nenu.edu.cn}
\author{Yunhe Sheng}
\address{Department of Mathematics, Jilin University, Changchun 130012, Jilin, China}
\email{shengyh@jlu.edu.cn}
\author{Chengming Bai}
\address{Chern Institute of Mathematics \& LPMC, Nankai University, Tianjin 300071, China}
\email{baicm@nankai.edu.cn}


\begin{abstract}
  In this paper, we give Maurer-Cartan characterizations as
well as a cohomology theory for compatible Lie algebras.
Explicitly, we first introduce the notion of a bidifferential
graded Lie algebra and  thus give
Maurer-Cartan characterizations of compatible Lie algebras. Then
we introduce a cohomology theory of compatible Lie algebras and
use it to   classify infinitesimal
deformations and abelian extensions of compatible Lie algebras. In
particular, we introduce the reduced cohomology of a compatible
Lie algebra and establish the relation between the reduced
cohomology of a compatible Lie algebra and the cohomology of the
corresponding compatible linear Poisson structures introduced by
Dubrovin and Zhang in their study of bi-Hamiltonian structures.
Finally, we use the Maurer-Cartan approach to classify nonabelian
extensions of compatible Lie algebras.
\end{abstract}

\subjclass[2010]{17B56,   13D10 }

\keywords{compatible Lie algebra, Maurer-Cartan element, cohomology, deformation, extension}

\maketitle

\tableofcontents

\allowdisplaybreaks


\section{Introduction}\label{sec:intr}


  A compatible Lie algebra is a pair of Lie algebras such
that any linear combination of these two Lie brackets is still a
Lie bracket. Such structures closely relate to the
(infinitesimal) deformations of Lie algebras. Compatible Lie
algebras appear in a lot of fields in mathematics and mathematical
physics. For example, compatible Lie algebras   naturally
one-to-one correspond  to compatible linear Poisson structures,
whereas the latter are important Poisson structures related to
bi-Hamiltonian structures. Recall that a bi-Hamiltonian structure
is a pair of Poisson structures on a manifold which are
compatible, i.e. their sum is again a Poisson structure. Note that
the pioneering work in the integrable bi-Hamiltonian systems was
done by Magri in \cite{Mag} and many completely integrable
Hamiltonian systems arising in mechanics, mathematical physics and
geometry have the bi-Hamiltonian structures
(\cite{GD,Kosmann1,MaMo}). Bi-Hamiltonian structures are very
powerful to construct separation of variables and description of
properties of solutions in the theory of integrable Hamiltonian
systems. See the book \cite{LPV} for more details on linear
Poisson structures and \cite{Bol1,Bol2,Rey,Tsy} for more details
of compatible Lie algebras and  compatible linear Poisson
structures and their applications. Furthermore, compatible Lie
algebras also appear in the study of the classical Yang-Baxter
equation and principal chiral field (\cite{GS1}), loop algebras
over Lie algebras (\cite{GS2}) and  elliptic theta functions
(\cite{GS3}). Also see \cite {DK,Pan,S,WB} for more details on
operads, classifications  and bialgebra theory of compatible Lie
algebras.

The cohomology theory is a classical approach  associating
invariants to a mathematical structure.  Cohomology controls
deformations and extension problems of the corresponding algebraic
structures. Cohomology theories of various kinds of algebras have
been developed and studied in \cite{Ch-Ei,Ge0,Har,Hor}. See the
review paper \cite{GLST} for more details.

In this paper, we study the cohomology theory of compatible Lie
algebras. By using the bidifferential graded Lie algebra
constructed by a compatible Lie algebra,   we give a
cohomology theory of compatible Lie algebras.   We would like to
point out that the cohomology theory of compatible Lie algebras
given here is not merely a combination of cohomologies of the two
Lie algebras as many other properties or structures of compatible
Lie algebras do so. In particular, we construct a new cochain
complex for a compatible Lie algebra  which is independent of the
usual cochain complex for the Chevalley-Eilenberg cohomology of
a Lie algebra.
Then we study infinitesimal deformations of a compatible Lie
algebra. We show that infinitesimal deformations of a compatible
Lie algebra  are classified by the second cohomology group.   In particular, it leads to the
introduction of the notion of a Nijenhuis operator on a compatible
Lie algebra which generates a trivial deformation of the
compatible Lie algebra, which is similar to the study of Nijenhuis
operators on  Lie algebras that play an important role in
the study of integrability of nonlinear evolution equations
(\cite{Dorf}).

Furthermore, we introduce the cohomology of a compatible Lie
algebra  associated to arbitrary representation.   We would also like to
point out that there might be more than one cohomology  theory  for compatible Lie algebras. Such a phenomenon
has already appeared in the study of cohomologies of some
algebraic systems with more than one product  such as Poisson
algebras (\cite{CohomologyPA1,FGV}). In particular, the coproduct
for a compatible Lie bialgebra given in \cite{WB} is not a
$1$-cocycle in the sense of the cohomology theory in this paper,
that is, there might be another cohomology theory for compatible
Lie algebras such that it is a $1$-cocycle. Then we study abelian
extensions of a compatible Lie algebra  and show that they are
classified by the second cohomology group.   We also
introduce the reduced cohomology of a compatible Lie algebra and
establish the relation between the reduced cohomology of a
compatible Lie algebra and the cohomology of the corresponding
compatible linear Poisson structures introduced by Dubrovin and
Zhang in \cite{DZ} in the study of bi-Hamiltonian structures.
Finally, we study nonabelian extensions of compatible Lie algebras
by a similar method introduced in \cite{nonabelin cohomology of
Lie}. We construct a bidifferential graded Lie algebra associated
to compatible Lie algebras $\g$ and $\h$ and describe nonabelian
extensions of $\g$ by $\h$ by its Maurer-Cartan elements.

The paper is organized as follows. In Section \ref{sec:NR}, we
briefly recall the cohomology theory and the Nijenhuis-Richardson
bracket for Lie algebras. In Section \ref{sec:cohomology I}, we
first recall the notion of compatible Lie algebras and their
representations. Then we introduce the notions of a bidifferential
graded  Lie algebra and its Maurer-Cartan elements. In particular,
we give the Maurer-Cartan characterizations of compatible Lie
algebras. Next we introduce
a cohomology theory  of compatible Lie algebras and study
infinitesimal deformations of compatible Lie algebras. In Section
\ref{sec:cohomology II}, we introduce a cohomology theory of a compatible
Lie algebra  with coefficients in a representation and use this
cohomology to classify abelian extensions of a compatible Lie
algebra.   Then we introduce the reduced cohomology of a
compatible Lie algebra and establish the relation between the
reduced cohomology of a compatible Lie algebra and the cohomology
of the corresponding compatible linear Poisson structures. In
Section \ref{sec:cohomology III}, we use the Maurer-Cartan
approach to classify nonabelian extensions of compatible Lie
algebras.

In this paper, all the vector spaces are over algebraically closed field $\mathbb K$ of characteristic $0$, and finite dimensional.
\vspace{2mm}

{\bf Acknowledgements. }This research was  supported by NSFC (11901501,11922110,11931009). C. Bai is also supported by the Fundamental Research Funds for the Central Universities and Nankai ZhiDe
 Foundation.

\section{The Nijenhuis-Richardson bracket}\label{sec:NR}

 In this section, we recall the cohomology theory for Lie algebras
and the Nijenhuis-Richardson bracket.

A permutation $\sigma\in\perm_n$ is called an $(i,n-i)$-unshuffle if $\sigma(1)<\cdots<\sigma(i)$ and $\sigma(i+1)<\cdots<\sigma(n)$. If $i=0$ or $i=n$, we assume $\sigma=\Id$. The set of all $(i,n-i)$-unshuffles will be denoted by $\perm_{(i,n-i)}$.

A {\bf Maurer-Cartan element} of a differential graded Lie algebra $(L=\oplus_{i\in\mathbb Z} L_i,[-,-],\partial)$ is an element $P\in L_1$ such that
$$\partial P+\frac{1}{2}[P,P]=0.$$

Let $\g$ be a vector space. We consider the graded vector space $C^*(\g,\g)=\oplus_{n=0}^{+\infty}C^n(\g,\g)$, where $C^n(\g,\g)=\Hom(\wedge^{n}\g,\g)$  and the degree of   elements in $C^n(\g,\g)$ are defined to be $n-1$. Then $C^*(\g,\g)$ equipped with the {\bf Nijenhuis-Richardson bracket}
\begin{equation}
  [P,Q]_{\NR}=P\circ Q-(-1)^{pq}Q\circ P,\quad \forall~P\in C^{p+1}(\g,\g),Q\in C^{q+1}(\g,\g)
\end{equation}
is a graded Lie algebra, where $P\circ Q\in C^{p+q+1}(\g,\g)$ is defined by
\begin{equation}
  P\circ Q(x_1,\cdots,x_{p+q+1})=\sum_{\sigma\in \perm_{(q+1,p)}}(-1)^\sigma P(Q(x_{\sigma(1)},\cdots,x_{\sigma(q+1)}),x_{\sigma(q+2)},\cdots,x_{\sigma(p+q+1)}).
\end{equation}

 Note that any graded Lie algebra is a differential
graded Lie algebra with the zero differential.
 The following result is well known.
\begin{pro}{\rm(\cite{NR})}\label{pro:MCLie} Let $\g$ be a vector
space. Then $\pi\in\Hom(\wedge^2\g,\g)$ defines a Lie algebra if and only if $\pi$ is a Maurer-Cartan element of the graded Lie algebra $(C^*(\g,\g),[-,-]_{\NR})$, i.e. $[\pi,\pi]_{\NR}=0.$
\end{pro}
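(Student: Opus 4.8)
The plan is to reduce the statement to a short computation identifying $\frac{1}{2}[\pi,\pi]_{\NR}$ with the Jacobiator of $\pi$. Since $\pi\in\Hom(\wedge^2\g,\g)=C^2(\g,\g)$ lies in $C^{p+1}(\g,\g)$ with $p=1$, applying the definition of the Nijenhuis-Richardson bracket with $p=q=1$ gives $(-1)^{pq}=-1$, so that
\begin{equation}
  [\pi,\pi]_{\NR}=\pi\circ\pi-(-1)^{1\cdot 1}\pi\circ\pi=2\,\pi\circ\pi .
\end{equation}
Hence $[\pi,\pi]_{\NR}=0$ if and only if $\pi\circ\pi=0$, and it suffices to analyze the latter.

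First I would expand $\pi\circ\pi\in C^3(\g,\g)$ from the defining formula for $P\circ Q$, specialized to $p=q=1$ (so that $q+1=2$). The sum then runs over the three unshuffles in $\perm_{(2,1)}$, namely those whose one-line notations are $(1,2,3)$, $(1,3,2)$ and $(2,3,1)$, with signatures $+1$, $-1$ and $+1$ respectively. Recording these gives
\begin{equation}
  \pi\circ\pi(x_1,x_2,x_3)=\pi(\pi(x_1,x_2),x_3)-\pi(\pi(x_1,x_3),x_2)+\pi(\pi(x_2,x_3),x_1).
\end{equation}

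Next I would invoke the skew-symmetry of $\pi$ built into its domain $\wedge^2\g$, rewriting the middle term as $-\pi(\pi(x_1,x_3),x_2)=\pi(\pi(x_3,x_1),x_2)$, so that the three summands reorganize into the cyclic sum
\begin{equation}
  \pi\circ\pi(x_1,x_2,x_3)=\pi(\pi(x_1,x_2),x_3)+\pi(\pi(x_2,x_3),x_1)+\pi(\pi(x_3,x_1),x_2),
\end{equation}
which is exactly the Jacobiator of $\pi$. Consequently $\pi\circ\pi=0$ holds precisely when $\pi$ satisfies the Jacobi identity; together with the automatic skew-symmetry this is the statement that $(\g,\pi)$ is a Lie algebra. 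Combined with $[\pi,\pi]_{\NR}=2\,\pi\circ\pi$, this yields the desired equivalence.

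I expect the only delicate step to be the sign bookkeeping: correctly listing the unshuffles in $\perm_{(2,1)}$ together with their signatures, and then verifying that after applying skew-symmetry the three terms assemble into the standard cyclic Jacobi expression rather than some sign-twisted variant. Everything else is a direct unwinding of the definitions.
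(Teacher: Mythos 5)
Your proof is correct: the three $(2,1)$-unshuffles and their signs are listed accurately, the identity $[\pi,\pi]_{\NR}=2\,\pi\circ\pi$ follows from $p=q=1$, and the skew-symmetry of $\pi$ does convert $-\pi(\pi(x_1,x_3),x_2)$ into $\pi(\pi(x_3,x_1),x_2)$, identifying $\pi\circ\pi$ with the Jacobiator. The paper itself offers no proof of this proposition, merely citing Nijenhuis--Richardson as well known, and your direct unwinding of the definitions is exactly the standard verification that such a citation stands in for.
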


Let $(\g,\pi)$ be a Lie algebra. Because of the graded Jacobi identity, we get a coboundary operator $\dM^n_\pi:C^n(\g,\g)\rightarrow C^{n+1}(\g,\g)$ defined by
$$\dM^n_\pi\omega =(-1)^{n-1}[\pi,\omega]_\NR,\quad \forall~\omega\in C^n(\g,\g).$$
More precisely, for $\omega\in C^n(\g,\g)$ and $x_1,x_2,\cdots, x_{n+1}\in\g$, we have
\begin{eqnarray*}
    \dM^n_\pi\omega(x_1,x_2,\cdots,x_{n+1})&=&\sum_{i=1}^{n+1}(-1)^{i+1}\pi(x_i,\omega(x_1,\cdots,\hat{x_i},\cdots,x_{n+1}))\\
    &&+\sum_{1\leq i<j\leq n+1}(-1)^{i+j}\omega(\pi(x_i,x_j),x_1,\cdots,\hat{x_i},\cdots,\hat{x_j},\cdots,x_{n+1}),
\end{eqnarray*}
which is just the Chevalley-Eilenberg  coboundary operator for the Lie algebra $(\g, \pi)$ associated to the adjoint representation $(\g;\ad)$.

Let $\g_1$ and $\g_2$ be two vector spaces. The elements in $\g_1$ are denoted by $x,y, x_i$ and the elements in $\g_2$ are denoted by $u,v,v_i$. For a given linear map $f:\wedge^{k}\g_1\otimes\wedge^{l}\g_2\lon\g_1$, we define a linear map $\hat{f}\in C^{k+l}(\g_1\oplus\g_2,\g_1\oplus\g_2)$ by
\begin{equation*}
 \hat{f}((x_1,v_1),(x_2,v_2),\cdots,(x_{k+l},v_{k+l}))=(\sum_{\tau\in\perm_{(k,l)}}\sgn(\tau)f(x_{\tau(1)},\cdots,x_{\tau(k)},v_{\tau(k+1)},\cdots,v_{\tau(k+l)}),0).
\end{equation*}
Similarly, for $f:\wedge^{k}\g_1\otimes\wedge^{l}\g_2\lon\g_2$, we define a linear map $\hat{f}\in C^{k+l}(\g_1\oplus\g_2,\g_1\oplus\g_2)$ by
\begin{equation*}
 \hat{f}((x_1,v_1),(x_2,v_2),\cdots,(x_{k+l},v_{k+l}))=(0,\sum_{\tau\in\perm_{(k,l)}}\sgn(\tau)f(x_{\tau(1)},\cdots,x_{\tau(k)},v_{\tau(k+1)},\cdots,v_{\tau(k+l)})).
\end{equation*}
The linear map $\hat{f}$ is called a {\bf lift} of $f$. For
example, the lifts of linear maps $\alpha:\wedge^2\g_1\lon\g_1$
and $\beta:\g_1\otimes\g_2\lon\g_2$ are  respectively
given by
\begin{eqnarray}
\label{semi-direct-1}\hat{\alpha}\big((x_1,v_1),(x_2,v_2)\big)&=&(\alpha(x_1,x_2),0),\\
\label{semi-direct-2}\hat{\beta}\big((x_1,v_1),(x_2,v_2)\big)&=&(0,\beta(x_1,v_2)-\beta(x_2,v_1)).
\end{eqnarray}

We define $\g^{k,l}=\wedge^{k}\g_1\otimes\wedge^{l}\g_2$. The vector space $\wedge^n(\g_1\oplus\g_2)$ is isomorphic to the direct sum of $\g^{k,l}$ with $k+l=n$.
\begin{defi} Let $\g_1$ and $\g_2$ be two vector
spaces. A linear map
$f\in\Hom(\wedge^{k+l+1}(\g_1\oplus\g_2),\g_1\oplus\g_2)$ has a
{\bf bidegree} $k|l$, if the following conditions hold:
\begin{itemize}
\item[\rm(i)] If $X$ is an element in $\g^{k+1,l}$, then $f(X)\in\g_1;$
\item[\rm(ii)] If $X$ is an element in $\g^{k,l+1}$, then $f(X)\in\g_2;$
\item[\rm(iii)] All the other case, $f(X)=0.$
\end{itemize}
\end{defi}
We call a linear map $f$ {\bf homogeneous} if $f$ has a bidegree and denote its bidegree by $||f||$. The linear maps $\hat{\alpha},~\hat{\beta}\in C^2(\g_1\oplus\g_2,\g_1\oplus\g_2)$ given by \eqref{semi-direct-1} and \eqref{semi-direct-2} have the bidegree  $||\hat{\alpha}||=||\hat{\beta}||=1|0$.
\emptycomment{Naturally we obtain a homogeneous linear map of the bidegree $1|0$,
\begin{eqnarray}
\label{semi-direct}\hat{\mu}:=\hat{\alpha}+\hat{\beta}.
\end{eqnarray}
Observe that $\hat{\mu}$ is a multiplication of the semi-direct product type,
$$
\hat{\mu}\big((x_1,v_1),(x_2,v_2)\big)=(\alpha(x_1,x_2),\beta(x_1,v_2)-\beta(x_2,v_1)).
$$
Even though $\hat{\mu}$ is not a lift (there is no $\mu$), we still use the symbol for our convenience below. \yh{delete something about the bidegree that is not used in the sequel.}}

The following lemma shows that the Nijenhuis-Richardson bracket on $C^*(\g_1\oplus\g_2,\g_1\oplus\g_2)$ is compatible with the
bigrading.
\begin{lem}\label{lem:bidegree perserve}
If $||f||=l_f|k_f$ and $||g||=l_g|k_g$, then $[f,g]_{\NR}$ has the bidegree $l_f+l_g|k_f+k_g.$
\end{lem}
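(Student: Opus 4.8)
The plan is to reduce the statement to an analysis of the composition product $\circ$ together with the defining formula $[f,g]_{\NR}=f\circ g-(-1)^{pq}g\circ f$, where $p=l_f+k_f$ and $q=l_g+k_g$. The key observation is that conditions (i)--(iii) defining a bidegree $l|k$ are \emph{linear} in the map: they assert $f(\g^{l+1,k})\subseteq\g_1$, $f(\g^{l,k+1})\subseteq\g_2$, and $f(\g^{a,b})=0$ for all other $(a,b)$. Hence the maps of a fixed bidegree $l|k$ form a linear subspace of $C^*(\g_1\oplus\g_2,\g_1\oplus\g_2)$. It therefore suffices to prove that each composite $f\circ g$ and $g\circ f$ has bidegree $l_f+l_g|k_f+k_g$; the bracket, being a linear combination of the two, will then have the same bidegree. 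By the symmetry of the roles of $f$ and $g$, I only need to treat $f\circ g$.

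First I would fix the arities. Writing $f\in C^{l_f+k_f+1}$ and $g\in C^{l_g+k_g+1}$, the composite $f\circ g$ lies in $C^{l_f+k_f+l_g+k_g+1}$, so it is evaluated on elements of $\g^{a,b}$ with $a+b=l_f+l_g+k_f+k_g+1$. This matches the total arity demanded of a map of bidegree $l_f+l_g|k_f+k_g$, which is a useful first consistency check.

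The heart of the argument is then a bookkeeping of how the $\g_1$- and $\g_2$-arguments distribute across the composite. Evaluating $f\circ g$ on a decomposable $X\in\g^{a,b}$ via the unshuffle sum over $\perm_{(q+1,p)}$, each term feeds $q+1=l_g+k_g+1$ arguments into $g$ and leaves the remaining $p=l_f+k_f$ arguments, together with the output of $g$, as inputs for $f$. Since $g$ is homogeneous of bidegree $l_g|k_g$, a term survives only when $g$ receives exactly $l_g+1$ arguments from $\g_1$ and $k_g$ from $\g_2$ (its output then lying in $\g_1$), or exactly $l_g$ from $\g_1$ and $k_g+1$ from $\g_2$ (its output then lying in $\g_2$). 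In either case a short count shows $f$ receives exactly $a-l_g$ arguments from $\g_1$ and $b-k_g$ from $\g_2$, counting the single output of $g$ as one of them. Imposing homogeneity of $f$ on this input forces either $a-l_g=l_f+1,\ b-k_g=k_f$ (output in $\g_1$), giving $X\in\g^{l_f+l_g+1,k_f+k_g}$, or $a-l_g=l_f,\ b-k_g=k_f+1$ (output in $\g_2$), giving $X\in\g^{l_f+l_g,k_f+k_g+1}$; for every other pair $(a,b)$ all terms vanish. This is exactly conditions (i)--(iii) for bidegree $l_f+l_g|k_f+k_g$.

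The only step I expect to pose any difficulty is the combinatorial count just described: one must check that \emph{both} ways $g$ can fire lead to the \emph{same} two surviving input types, and that the inputs $a-l_g$ and $b-k_g$ to $f$ are computed correctly in each case (remembering that the single output of $g$ is itself one argument of $f$, contributing $1$ to either the $\g_1$- or the $\g_2$-count depending on whether $g$ landed in $\g_1$ or $\g_2$). Once this is verified, the identical argument applied to $g\circ f$ gives bidegree $l_g+l_f|k_g+k_f=l_f+l_g|k_f+k_g$, and linearity of the bidegree conditions completes the proof.
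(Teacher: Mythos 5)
Your argument is correct: the reduction to the composites $f\circ g$ and $g\circ f$ via linearity of the bidegree conditions, followed by the count showing that both ways $g$ can fire leave $f$ with exactly $a-l_g$ inputs from $\g_1$ and $b-k_g$ from $\g_2$, is a complete and sound verification. The paper states this lemma without proof, treating it as routine, and your bookkeeping is precisely the standard argument that the omitted proof would consist of, so there is nothing to contrast.
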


The following result is well known. See the survey \cite{GLST} for more details.
\begin{pro}\label{pro:MC}
 Let $(\g,\pi)$ be a Lie algebra and $(V;\rho)$  a representation of $\g$. Then we have
 \begin{equation}
   [\hat{\pi}+\hat{\rho},\hat{\pi}+\hat{\rho}]_{\NR}=0.
 \end{equation}
\end{pro}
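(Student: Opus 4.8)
The plan is to verify the Maurer-Cartan equation $[\hat\pi+\hat\rho,\hat\pi+\hat\rho]_{\NR}=0$ by expanding bilinearly into three pieces and identifying each with a known structural identity. Writing $\mu:=\hat\pi+\hat\rho$, bilinearity and graded symmetry of the Nijenhuis-Richardson bracket give
\begin{equation*}
[\mu,\mu]_{\NR}=[\hat\pi,\hat\pi]_{\NR}+2[\hat\pi,\hat\rho]_{\NR}+[\hat\rho,\hat\rho]_{\NR}.
\end{equation*}
First I would recall that $\mu$ is precisely the Lie bracket on the semidirect product $\g\ltimes_\rho V$: by the lift formulas \eqref{semi-direct-1} and \eqref{semi-direct-2}, $\mu\big((x,a),(y,b)\big)=(\pi(x,y),\rho(x)b-\rho(y)a)$. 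Thus the cleanest route is to observe that $\mu$ is a genuine Lie bracket on $\g\oplus V$, so by Proposition \ref{pro:MCLie} it is a Maurer-Cartan element of $(C^*(\g\oplus V,\g\oplus V),[-,-]_{\NR})$, which is exactly the assertion $[\mu,\mu]_{\NR}=0$.

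Accordingly, the only real content is to check that the semidirect product bracket $\mu$ satisfies the Jacobi identity, equivalently that the three summands above encode, respectively, the Jacobi identity for $\pi$, the representation condition for $\rho$, and a vanishing term. Concretely, I would evaluate each summand on a triple $\big((x_1,a_1),(x_2,a_2),(x_3,a_3)\big)$ and read off the bidegree via Lemma \ref{lem:bidegree perserve}: since $\|\hat\pi\|=\|\hat\rho\|=1|0$, all three brackets have bidegree $2|0$, so each output lands in $\g$ in the $\wedge^3\g$ slot and in $V$ in the $\wedge^2\g\otimes V$ slot, and vanishes elsewhere. In the $\wedge^3\g$ slot only $[\hat\pi,\hat\pi]_{\NR}$ contributes and equals twice the Jacobiator of $\pi$, which vanishes because $(\g,\pi)$ is a Lie algebra (indeed $[\hat\pi,\hat\pi]_{\NR}=0$ already by Proposition \ref{pro:MCLie} applied to $\g$). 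In the $\wedge^2\g\otimes V$ slot, $[\hat\pi,\hat\pi]_{\NR}$ again contributes the Jacobiator of $\pi$ acting through $\rho$, $2[\hat\pi,\hat\rho]_{\NR}$ contributes terms of the form $\rho(\pi(x_i,x_j))$ together with $\rho(x_i)\rho(x_j)$, and $[\hat\rho,\hat\rho]_{\NR}$ contributes the commutator $\rho(x_i)\rho(x_j)-\rho(x_j)\rho(x_i)$.

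The main step, then, is to collect the $V$-valued terms and see that their vanishing is equivalent to the representation identity $\rho([x,y])=\rho(x)\rho(y)-\rho(y)\rho(x)$, which holds by hypothesis that $(V;\rho)$ is a representation of $(\g,\pi)$. The one point requiring care is the combinatorics of the unshuffle signs in the definition of $P\circ Q$: when $P=Q=\hat\rho$ the circle product symmetrizes over $(1,1)$-unshuffles and produces both orderings of $\rho(x_1)\rho(x_2)$ on $a$-arguments, and one must track the factor of $2$ from $[\hat\rho,\hat\rho]_{\NR}=-2\,\hat\rho\circ\hat\rho$ against the factor of $2$ multiplying $[\hat\pi,\hat\rho]_{\NR}$ so that the coefficients match. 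I expect this sign-and-coefficient bookkeeping to be the only obstacle; conceptually the result is immediate once one recognizes $\hat\pi+\hat\rho$ as the semidirect-product Lie bracket and invokes Proposition \ref{pro:MCLie}.
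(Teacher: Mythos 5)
Your main argument is correct, and it is worth noting that the paper itself gives no proof of this proposition at all: it is stated as well known, with a pointer to the survey \cite{GLST}. So you are supplying the missing argument, and the route you choose --- observing via \eqref{semi-direct-1} and \eqref{semi-direct-2} that $\hat\pi+\hat\rho$ is exactly the semidirect product bracket $\big((x,a),(y,b)\big)\mapsto\big(\pi(x,y),\rho(x)b-\rho(y)a\big)$ on $\g\oplus V$, which is a Lie bracket precisely because $(\g,\pi)$ is a Lie algebra and $(V;\rho)$ is a representation, and then invoking Proposition \ref{pro:MCLie} --- is the standard proof and is complete as stated. Nothing more is needed.

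However, your secondary ``cross-check'' paragraph, where you expand $[\hat\pi+\hat\rho,\hat\pi+\hat\rho]_{\NR}=[\hat\pi,\hat\pi]_{\NR}+2[\hat\pi,\hat\rho]_{\NR}+[\hat\rho,\hat\rho]_{\NR}$ and attribute terms to slots, contains concrete errors you should fix if you keep it. First, for degree-one elements the bracket is symmetric, so $[\hat\rho,\hat\rho]_{\NR}=\hat\rho\circ\hat\rho-(-1)^{1\cdot 1}\hat\rho\circ\hat\rho=+2\,\hat\rho\circ\hat\rho$, not $-2\,\hat\rho\circ\hat\rho$. Second, $[\hat\pi,\hat\pi]_{\NR}$ contributes \emph{nothing} in the $\wedge^2\g\otimes V$ slot: $\hat\pi$ annihilates any argument whose $\g$-component vanishes, so on inputs containing an element of $V$ every composite $\hat\pi(\hat\pi(-,-),-)$ is zero. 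Third, the distribution of the $V$-valued terms is cleaner than you describe: since $\hat\rho$ takes values in $V$ and $\hat\pi$ kills $V$-valued first arguments, one has $\hat\pi\circ\hat\rho=0$, hence $[\hat\pi,\hat\rho]_{\NR}=\hat\rho\circ\hat\pi$ contributes only the terms $\rho(\pi(x_i,x_j))a_k$, while \emph{all} the composition terms $\rho(x_i)\rho(x_j)a_k$ (both orderings, giving the commutator) come from $[\hat\rho,\hat\rho]_{\NR}$. With these corrections the slot-by-slot computation does reduce to the Jacobi identity for $\pi$ plus the identity $\rho(\pi(x,y))=[\rho(x),\rho(y)]$, consistent with your main argument; as written, though, that paragraph would not survive the bookkeeping you yourself flag as the delicate point.
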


Let $(\g,\pi)$ be a Lie algebra and $(V;\rho)$  a representation
of $\g$. Note that $\hat{\pi}+\hat{\rho}\in C^{1\mid 0}(\g\oplus
V,\g\oplus V)$. Denote by $C^n(\g,V):=C^{n\mid -1}(\g\oplus
V,\g\oplus V)$ and $C^*(\g,V)=\oplus_{n=0}^{+\infty}C^n(\g,V)$.
 By the lift
map defined above, we have
$$C^n(\g,V):=C^{n\mid -1}(\g\oplus V,\g\oplus V)\cong\Hom(\wedge^n\g,V).$$
Define the coboundary operator $\dM^n_{\pi+\rho}:C^n(\g,V)\rightarrow C^{n+1}(\g,V)$ by
\begin{equation}\label{eq:CE-operator}
  \dM^n_{\pi+\rho} f:=(-1)^{n-1}[\hat{\pi}+\hat{\rho},\hat{f}]_{\NR},\quad \forall~f\in C^n(\g,V).
\end{equation}
In fact, since $\hat{\pi}+\hat{\rho}\in C^{1|0}(\g\oplus V,\g\oplus V)$ and $f\in C^{n|-1}(\g\oplus V,\g\oplus V)$, by Lemma \ref{lem:bidegree perserve}, we have $\dM^n_{\pi+\rho} f\in C^{n+1}(\g,V) $. By Proposition \ref{pro:MC}, we have $[\hat{\pi}+\hat{\rho},\hat{\pi}+\hat{\rho}]_{\NR}=0$. Because of the graded Jacobi identity, we have $\dM^{n+1}_{\pi+\rho}\circ \dM^n_{\pi+\rho}=0$. Thus we obtain a well-defined cochain complex $(C^*(\g,V),\dM^*_{\pi+\rho})$.

By a direct calculation, we have
\begin{pro}\label{pro:CE-operator} Let $(\g,\pi)$ be a Lie algebra and $(V;\rho)$  a representation of
$\g$. Then for all $f\in C^n(\g,V)$ and
$x_1,x_2,\cdots,x_{n+1}\in \g$, we have
  \begin{eqnarray*}
\dM^n_{\pi+\rho} f(x_1,\ldots,x_{n+1})&=&\sum_{i=1}^{n+1}(-1)^{i+1}\rho({x_i}) f(x_1,\ldots,\hat{x_i},\ldots,x_{n+1})\\
&&+\sum_{1\leq i<j\leq n+1}(-1)^{i+j}f(\pi(x_i,x_j),\ldots,\hat{x_{i}},\ldots,\hat{x_{j}},\ldots,x_{n+1}).
\end{eqnarray*}
 Thus the coboundary operator $\dM^n_{\pi+\rho}$ defined by \eqref{eq:CE-operator} is exactly the Chevalley-Eilenberg  coboundary operator for the Lie algebra $(\g, \pi)$ with coefficients in $(V;\rho)$.
\end{pro}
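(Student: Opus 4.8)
The plan is to compute the right-hand side of \eqref{eq:CE-operator} directly by expanding the Nijenhuis-Richardson bracket and evaluating on elements of $\g$. Since $\hat{\pi}+\hat{\rho}\in C^2(\g\oplus V,\g\oplus V)$ and $\hat{f}\in C^n(\g\oplus V,\g\oplus V)$, setting $P=\hat{\pi}+\hat{\rho}$ (so $p=1$) and $Q=\hat{f}$ (so $q=n-1$), the defining formula for $[-,-]_{\NR}$ gives
$$[\hat{\pi}+\hat{\rho},\hat{f}]_{\NR}=(\hat{\pi}+\hat{\rho})\circ\hat{f}-(-1)^{n-1}\hat{f}\circ(\hat{\pi}+\hat{\rho}).$$
By Lemma \ref{lem:bidegree perserve} this has bidegree $n|{-1}$, hence lies in $C^{n+1}(\g,V)\cong\Hom(\wedge^{n+1}\g,V)$, and it suffices to read off its $V$-component on inputs $(x_1,0),\ldots,(x_{n+1},0)$.

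First I would expand $(\hat{\pi}+\hat{\rho})\circ\hat{f}$ via the circle-product formula, summed over $\perm_{(n,1)}$. Since $\hat{f}$ is the lift of $f\in\Hom(\wedge^n\g,V)$, each inner evaluation yields $(0,f(x_{\sigma(1)},\ldots,x_{\sigma(n)}))$; applying $\hat{\pi}+\hat{\rho}$ to a pair whose first slot is purely in $V$ annihilates $\hat{\pi}$ by \eqref{semi-direct-1} and leaves only the $\hat{\rho}$-contribution $-\rho(x_{\sigma(n+1)})f(\cdots)$ coming from \eqref{semi-direct-2}. The $(n,1)$-unshuffles are indexed by the choice of last entry $\sigma(n+1)=i$ and carry sign $(-1)^{n+1-i}$, so this term produces the $V$-element $-\sum_{i=1}^{n+1}(-1)^{n+1-i}\rho(x_i)f(x_1,\ldots,\hat{x_i},\ldots,x_{n+1})$.

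Next I would expand $\hat{f}\circ(\hat{\pi}+\hat{\rho})$ over $\perm_{(2,n-1)}$. Here $\hat{\rho}$ vanishes on two purely $\g$-valued inputs while $\hat{\pi}$ returns $(\pi(x_i,x_j),0)$ by \eqref{semi-direct-1}, so only the bracket term survives; the unshuffle selecting the pair $i<j$ for the first block carries sign $(-1)^{i+j+1}$, yielding $\sum_{i<j}(-1)^{i+j+1}f(\pi(x_i,x_j),x_1,\ldots,\hat{x_i},\ldots,\hat{x_j},\ldots,x_{n+1})$. Multiplying by the overall prefactor $(-1)^{n-1}$ from \eqref{eq:CE-operator} and collecting signs, namely $(-1)^{n-1}\cdot(-1)\cdot(-1)^{n+1-i}=(-1)^{i+1}$ for the first family and $(-1)^{n-1}\cdot(-(-1)^{n-1})\cdot(-1)^{i+j+1}=(-1)^{i+j}$ for the second, reproduces exactly the Chevalley-Eilenberg formula in the statement.

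The routine but delicate part is the sign bookkeeping: correctly reading the signatures of the $(n,1)$- and $(2,n-1)$-unshuffles and combining them with the $(-1)^{pq}=(-1)^{n-1}$ inside the bracket and the $(-1)^{n-1}$ normalization in \eqref{eq:CE-operator}. Everything else is forced: once one verifies through the lift formulas \eqref{semi-direct-1} and \eqref{semi-direct-2} that only the two ``mixed'' evaluations survive, namely the action term from $\hat{\rho}$ in the first circle product and the bracket term from $\hat{\pi}$ in the second, the identification with the standard Chevalley-Eilenberg differential is immediate.
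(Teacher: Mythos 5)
Your proposal is correct and takes essentially the same route as the paper, which gives no details beyond ``by a direct calculation'': you carry out exactly that calculation, and your identification of the surviving terms (the $\hat{\rho}$-action term in $(\hat{\pi}+\hat{\rho})\circ\hat{f}$ and the $\hat{\pi}$-bracket term in $\hat{f}\circ(\hat{\pi}+\hat{\rho})$) together with the unshuffle signs $(-1)^{n+1-i}$, $(-1)^{i+j+1}$ and the combined prefactors $(-1)^{i+1}$, $(-1)^{i+j}$ all check out. One trivial slip: by Lemma \ref{lem:bidegree perserve} the bracket $[\hat{\pi}+\hat{\rho},\hat{f}]_{\NR}$ has bidegree $(n+1)|{-1}$ rather than $n|{-1}$, which is what actually justifies your (correct) conclusion that it lies in $C^{n+1}(\g,V)\cong\Hom(\wedge^{n+1}\g,V)$.
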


\section{Bidifferential graded Lie algebras, Maurer-Cartan characterizations and  cohomologies  of compatible Lie algebras} \label{sec:cohomology I}

In this section, we introduce the notions of  a bidifferential graded Lie algebra  and its
Maurer-Cartan elements. We construct the bidifferential graded Lie
algebra whose Maurer-Cartan elements are compatible Lie algebra
structures. We also   give a  cohomology
theory of a compatible Lie algebra, and use the second cohomology
group   to classify infinitesimal deformations.

\begin{defi}{\rm(\cite{GS1,GS2,OS1})} A {\bf compatible Lie algebra} is a triple $(\mathfrak{g},[-,-],\{-,-\})$, where $\g$ is a vector space,
   $ [-,-] $ and
$ \{-,-\} $ are Lie algebra structures on $\g$
such that
\begin{equation}[\{x,y\},z]+[\{y,z\},x]+[\{z,x\},y]+\{[x,y],z\}+\{[y,z],x\}+\{[z,x],y\}=0,\quad \forall x,y,z\in \mathfrak{g}.\label{eq:cl}\end{equation}
\end{defi}

\begin{pro}{\rm(\cite{Ku,S})}\label{eq:CL}
A triple $(\mathfrak{g},[-,-],\{-,-\})$ is a  compatible Lie
algebra if and only if
   $ [-,-] $ and
$ \{-,-\} $ are Lie algebra structures on $\g$ such that for any
$k_1,k_2\in\mathbb{K}$, the following bilinear operation
\begin{equation}\label{eq:Courant bracket}
\Courant{x,y}=k_1[x,y]+k_2\{x,y\},\quad \forall x,y\in\mathfrak{g}
\end{equation}
defines a Lie algebra structure on $\mathfrak{g}$.\end{pro}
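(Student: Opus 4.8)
The plan is to reduce the entire statement to one computation: expanding the Jacobiator of the one-parameter family $\Courant{-,-}=k_1[-,-]+k_2\{-,-\}$ and isolating the coefficients of the monomials $k_1^2$, $k_1k_2$, $k_2^2$. Since $[-,-]$ and $\{-,-\}$ are already bilinear and skew-symmetric, so is $\Courant{-,-}$ for every choice of scalars $k_1,k_2$; hence the only property to analyze is the Jacobi identity, and all the content sits in a single algebraic identity.

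First I would compute $\Courant{\Courant{x,y},z}$ by bilinearity, producing the four terms $k_1^2[[x,y],z]$, $k_1k_2[\{x,y\},z]$, $k_1k_2\{[x,y],z\}$ and $k_2^2\{\{x,y\},z\}$. Summing over the cyclic permutations of $(x,y,z)$ and collecting by powers of the scalars yields
\begin{equation*}
\Courant{\Courant{x,y},z}+\Courant{\Courant{y,z},x}+\Courant{\Courant{z,x},y}=k_1^2\,J_{[-,-]}+k_1k_2\,C+k_2^2\,J_{\{-,-\}},
\end{equation*}
where $J_{[-,-]}$ and $J_{\{-,-\}}$ denote the Jacobiators of the two brackets evaluated on $(x,y,z)$, and $C$ is exactly the left-hand side of the compatibility condition \eqref{eq:cl}. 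This identity is the crux; everything else is formal bookkeeping, so I would carry out the cyclic sum carefully but not belabor it.

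For the forward implication, assuming $(\g,[-,-],\{-,-\})$ is a compatible Lie algebra gives $J_{[-,-]}=0$, $J_{\{-,-\}}=0$ and $C=0$, so the displayed cyclic sum vanishes identically in $k_1,k_2$; thus $\Courant{-,-}$ satisfies the Jacobi identity and is a Lie bracket for every $k_1,k_2$. For the converse, the hypothesis already supplies that $[-,-]$ and $\{-,-\}$ are Lie brackets, so $J_{[-,-]}=J_{\{-,-\}}=0$; specializing the one-parameter family to $(k_1,k_2)=(1,1)$, where $\Courant{-,-}$ is a Lie bracket by assumption, forces the cyclic sum to vanish, and the displayed identity then collapses to $C=0$, which is precisely \eqref{eq:cl}. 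Hence $(\g,[-,-],\{-,-\})$ is a compatible Lie algebra.

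I do not anticipate any genuine obstacle: the result is a direct consequence of skew-symmetry together with the bilinear expansion of the Jacobiator, and the separation into $k_1^2$, $k_1k_2$, $k_2^2$ parts is what lets a single evaluation at $(1,1)$ detect the compatibility term in isolation. The only point worth a remark, which I would state but not dwell on, is that over a field of characteristic $0$ one may equivalently view the cyclic sum as a polynomial in $k_1,k_2$ and compare coefficients, so that requiring $\Courant{-,-}$ to be a Lie bracket for \emph{all} scalars is in fact equivalent to requiring it for the single pair $(1,1)$ once both brackets are known to be Lie brackets.
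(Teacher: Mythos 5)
Your proposal is correct: the bilinear expansion of the Jacobiator of $\Courant{-,-}$ into $k_1^2 J_{[-,-]}+k_1k_2\,C+k_2^2 J_{\{-,-\}}$, with $C$ the left-hand side of \eqref{eq:cl}, together with evaluation at $(k_1,k_2)=(1,1)$ for the converse, proves the equivalence. Note that the paper itself gives no proof of this proposition (it is cited from Kupershmidt and Strohmayer), but your computation is exactly the paper's own later argument in disguise: in the proof of Theorem \ref{pro:lsymNR} the three conditions $[\pi_1,\pi_1]_{\NR}=0$, $[\pi_2,\pi_2]_{\NR}=0$, $[\pi_1,\pi_2]_{\NR}=0$ are identified with the two Jacobi identities and the compatibility condition \eqref{eq:cl}, and your $k_1^2,\,k_1k_2,\,k_2^2$ decomposition is precisely the polarization of $[k_1\pi_1+k_2\pi_2,\,k_1\pi_1+k_2\pi_2]_{\NR}=0$, carried out in elementary terms rather than in the Nijenhuis--Richardson formalism. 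One small refinement: your closing remark about characteristic $0$ is unnecessary, since your argument never compares polynomial coefficients --- the single evaluation at $(1,1)$ already extracts $C=0$ once both Jacobiators vanish, so the equivalence holds over any field.
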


\begin{defi}
  A {\bf homomorphism} between two compatible Lie algebras $(\mathfrak{g},[-,-]_\g,\{-,-\}_\g)$ and $(\mathfrak{h},[-,-]_\h,\{-,-\}_\h)$ is a linear map $\varphi:\g\rightarrow \mathfrak{\h}$ such that $\varphi$ is both a Lie algebra homomorphism between $(\mathfrak{g},[-,-]_\g)$ and $(\mathfrak{h},[-,-]_\h)$ and a Lie algebra homomorphism between $(\mathfrak{g},\{-,-\}_\g)$ and $(\mathfrak{h},\{-,-\}_\h)$.
\end{defi}

\begin{defi} {\rm (\cite{WB})}
 A {\bf representation} of  a compatible Lie algebra
$(\mathfrak{g},[-,-],\{-,-\})$ on a
vector space $V$  consists of a pair of linear maps
 $\rho, \mu: \mathfrak{g}\rightarrow \mathfrak{gl}(V)$ such that $\rho$ is a representation of the Lie algebra $(\mathfrak{g},[-,-])$ on $V$, $\mu$ is a representation of the Lie algebra $(\mathfrak{g}, \{-,-\})$ on $V$, and
 \begin{eqnarray}
  \rho(\{x,y\})+\mu([x,y])&=&[\rho(x),\mu(y)]-[\rho(y),\mu(x)].\label{eq:r3}
 \end{eqnarray}
\end{defi}

\begin{ex}{\rm Let $(\mathfrak{g},[-,-],\{-,-\})$ be a compatible Lie algebra. Define $\ad,\AD:\g\lon\gl(\g)$ by $\ad_xy=[x,y]$ and $\AD_xy=\{x,y\}$ for all $x,y\in \g$. Then $(\g;\ad,\AD)$ is a representation of $(\mathfrak{g},[-,-],\{-,-\})$, which is called the {\bf adjoint representation}.
   }
 \end{ex}

\subsection{Bidifferential graded Lie algebras and  Maurer-Cartan characterizations of compatible Lie algebras}

In this subsection, we introduce the notions of a bidifferential
graded  Lie algebra and its Maurer-Cartan elements. In particular,
we give the Maurer-Cartan characterizations of compatible Lie
algebras and the bidifferential graded  Lie algebra that controls
deformations of  a compatible Lie
algebra.
\begin{defi}
  Let $(L,[-,-],\partial_1)$ and $(L,[-,-],\partial_2)$ be two differential graded Lie algebras. We call the quadruple $(L,[-,-],\partial_1,\partial_2)$ a {\bf bidifferential graded Lie algebra} if $\partial_1$ and $\partial_2$
  satisfy
  \begin{equation}
  \partial_1\circ \partial_2+\partial_2\circ \partial_1=0.
  \end{equation}
\end{defi}

The following proposition gives an equivalent description of
 a  bidifferential graded Lie algebra.
\begin{pro}
Let $(L,[-,-],\partial_1)$ and $(L,[-,-],\partial_2)$ be two differential graded Lie algebras. Then  $(L,[-,-],\partial_1,\partial_2)$ is a  bidifferential graded Lie algebra if and only if for any $k_1,k_2\in\K$, $(L,[-,-],\partial_{k_1,k_2})$ is a differential graded Lie algebra, where $\partial_{k_1,k_2}=k_1\partial_1+k_2\partial_2$.
\end{pro}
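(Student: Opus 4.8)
The plan is to reduce the whole equivalence to a single computation involving the square of the combined operator $\partial_{k_1,k_2}=k_1\partial_1+k_2\partial_2$, since everything else is inherited for free. First I would dispose of the derivation axiom: each $\partial_i$ is a degree $+1$ derivation of the bracket $[-,-]$, and the graded Leibniz identity is $\K$-linear in the differential, so any linear combination $k_1\partial_1+k_2\partial_2$ is again a degree $+1$ derivation of $[-,-]$. Hence $\partial_{k_1,k_2}$ automatically satisfies the Leibniz rule for every $k_1,k_2\in\K$, and this part of the verification needs no hypothesis beyond the two given differential graded Lie algebra structures. All the content of the statement therefore sits in the nilpotency condition $\partial_{k_1,k_2}\circ\partial_{k_1,k_2}=0$.

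The heart of the argument is to expand this square and simplify. Directly,
\begin{equation*}
\partial_{k_1,k_2}\circ\partial_{k_1,k_2}=k_1^2\,\partial_1\circ\partial_1+k_1k_2\,(\partial_1\circ\partial_2+\partial_2\circ\partial_1)+k_2^2\,\partial_2\circ\partial_2.
\end{equation*}
Because $(L,[-,-],\partial_1)$ and $(L,[-,-],\partial_2)$ are differential graded Lie algebras by hypothesis, we have $\partial_1\circ\partial_1=0$ and $\partial_2\circ\partial_2=0$ throughout, so the two extremal terms drop out and the identity collapses to
\begin{equation*}
\partial_{k_1,k_2}\circ\partial_{k_1,k_2}=k_1k_2\,(\partial_1\circ\partial_2+\partial_2\circ\partial_1).
\end{equation*}
Thus the nilpotency of $\partial_{k_1,k_2}$ is governed entirely by the anticommutator $\partial_1\circ\partial_2+\partial_2\circ\partial_1$.

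From here both directions are immediate. For the forward implication I would assume the bidifferential condition $\partial_1\circ\partial_2+\partial_2\circ\partial_1=0$; then the right-hand side of the last display vanishes for all $k_1,k_2$, so $\partial_{k_1,k_2}^2=0$, and combined with the derivation property already secured this exhibits $(L,[-,-],\partial_{k_1,k_2})$ as a differential graded Lie algebra. For the converse I would assume $(L,[-,-],\partial_{k_1,k_2})$ is a differential graded Lie algebra for all $k_1,k_2$ and specialize to $k_1=k_2=1$: the vanishing of $\partial_{1,1}\circ\partial_{1,1}$ forces $\partial_1\circ\partial_2+\partial_2\circ\partial_1=0$, which is exactly the defining condition of a bidifferential graded Lie algebra. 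I do not anticipate a genuine obstacle; the only point that must not be overlooked is that the standing assumptions $\partial_1^2=\partial_2^2=0$ are precisely what allow me to discard the $k_1^2$ and $k_2^2$ terms in both directions, isolating the anticommutator as the sole obstruction.
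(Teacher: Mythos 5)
Your proof is correct and follows essentially the same route as the paper: expand $\partial_{k_1,k_2}\circ\partial_{k_1,k_2}$, use $\partial_1^2=\partial_2^2=0$ to isolate the anticommutator $\partial_1\circ\partial_2+\partial_2\circ\partial_1$, and check the Leibniz rule by linearity. In fact you go slightly further than the paper, which omits the converse as ``proved similarly,'' whereas you make it explicit by specializing to $k_1=k_2=1$.
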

\begin{proof}
Assume that $(L,[-,-],\partial_1,\partial_2)$ is a  bidifferential graded Lie algebra. Then we have
\begin{eqnarray*}
    \partial_{k_1,k_2}\circ \partial_{k_1,k_2}&=&(k_1\partial_1+k_2\partial_2)\circ (k_1\partial_1+k_2\partial_2)\\
    &=&k_1^2\partial_1\circ \partial_1+k_1k_2(\partial_1\circ \partial_2+\partial_2\circ \partial_1)+k^2_2\partial_2\circ \partial_2=0.
    \end{eqnarray*}
     Furthermore, for any homogenous elements $x,y\in L$, we have
    \begin{eqnarray*}
        \partial_{k_1,k_2}[x,y]&=&(k_1\partial_1+k_2\partial_2)[x,y]\\
        &=&k_1([\partial_1 x,y]+(-1)^{|x|}[x,\partial_1y])+k_2([\partial_2 x,y]+(-1)^{|x|}[x,\partial_2y])\\
        &=&[(k_1\partial_1+k_2\partial_2) x,y]+(-1)^{|x|}[x,(k_1\partial_1+k_2\partial_2)y] \\
        &=&[\partial_{k_1,k_2} x,y]+(-1)^{|x|}[x,\partial_{k_1,k_2}y].\end{eqnarray*}
Thus for any $k_1,k_2\in\K$, $(L,[-,-],\partial_{k_1,k_2})$ is a differential graded Lie algebra.

The converse can be proved similarly. We omit the details.
\end{proof}

\begin{defi}
Let $(L,[-,-],\partial_1,\partial_2)$ be a bidifferential graded Lie algebra. A pair $(P_1,P_2)\in L_1\oplus L_1$ is called a {\bf Maurer-Cartan element} of the bidifferential graded Lie algebra $(L,[-,-],\partial_1,\partial_2)$ if $P_1$ and $P_2$ are Maurer-Cartan elements of the  differential graded Lie algebras $(L,[-,-],\partial_1)$ and  $(L,[-,-],\partial_2)$ respectively, and
\begin{eqnarray}
\partial_1 P_2+\partial_2 P_1 +[P_1,P_2]=0.
\end{eqnarray}
\end{defi}

\begin{pro}\label{pro:MC-equivalent 2-para}
  Let $(L,[-,-],\partial_1,\partial_2)$ be a bidifferential graded Lie algebra. A pair $(P_1,P_2)$ is a Maurer-Cartan element of  $(L,[-,-],\partial_1,\partial_2)$ if and only if for any $k_1,k_2\in\K$, $k_1 P_1+k_2 P_2$ is a Maurer-Cartan element of the differential graded Lie algebra $(L,[-,-],\partial_{k_1,k_2})$.
\end{pro}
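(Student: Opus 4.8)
The plan is to reduce the equivalence to a single algebraic identity: I would compute the Maurer-Cartan expression of the differential graded Lie algebra $(L,[-,-],\partial_{k_1,k_2})$ evaluated on the degree-one element $k_1 P_1 + k_2 P_2$, and reorganize it as a polynomial in the scalars $k_1,k_2$ whose coefficients are exactly the three defining conditions of a Maurer-Cartan element of the bidifferential graded Lie algebra. Both implications then fall out of this one identity.

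First I would expand
$$\partial_{k_1,k_2}(k_1 P_1 + k_2 P_2) + \tfrac{1}{2}[k_1 P_1 + k_2 P_2,\, k_1 P_1 + k_2 P_2].$$
Using $\partial_{k_1,k_2}=k_1\partial_1+k_2\partial_2$ together with the bilinearity of $\partial_1,\partial_2$ and $[-,-]$, the differential part contributes $k_1^2\partial_1 P_1 + k_2^2\partial_2 P_2 + k_1 k_2(\partial_1 P_2+\partial_2 P_1)$, while the bracket part contributes $\tfrac{1}{2}\big(k_1^2[P_1,P_1]+k_2^2[P_2,P_2]\big)+\tfrac{1}{2}k_1 k_2\big([P_1,P_2]+[P_2,P_1]\big)$. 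The one point deserving attention is the graded symmetry of the bracket: since $P_1,P_2\in L_1$, we have $[P_1,P_2]=-(-1)^{|P_1||P_2|}[P_2,P_1]=[P_2,P_1]$, so the two cross terms reinforce rather than cancel, yielding $k_1 k_2[P_1,P_2]$.

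Collecting by monomials in $k_1,k_2$ then gives
$$k_1^2\Big(\partial_1 P_1 + \tfrac{1}{2}[P_1,P_1]\Big) + k_2^2\Big(\partial_2 P_2 + \tfrac{1}{2}[P_2,P_2]\Big) + k_1 k_2\Big(\partial_1 P_2 + \partial_2 P_1 + [P_1,P_2]\Big).$$
For the forward direction, if $(P_1,P_2)$ is a Maurer-Cartan element of $(L,[-,-],\partial_1,\partial_2)$ then each of the three parenthesized expressions vanishes by definition, so the whole sum is zero for every $k_1,k_2\in\K$, which is precisely the statement that $k_1 P_1 + k_2 P_2$ is a Maurer-Cartan element of $(L,[-,-],\partial_{k_1,k_2})$. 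For the converse, I would specialize the scalars: evaluating at $(k_1,k_2)=(1,0)$ and $(0,1)$ forces the first two parenthesized expressions to vanish, and then evaluating at $(1,1)$ forces the third; equivalently, since $\K$ is infinite, a homogeneous polynomial identity in $k_1,k_2$ with coefficients in the vector space $L$ forces each coefficient to be zero. These three vanishing conditions are exactly the definition of a Maurer-Cartan element of the bidifferential graded Lie algebra.

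I do not expect any genuine obstacle: the argument is bookkeeping around a single polynomial identity, and the only subtlety worth flagging is the sign in the graded symmetry, which is what makes the $k_1 k_2$ cross terms add up to the mixed condition $\partial_1 P_2 + \partial_2 P_1 + [P_1,P_2]=0$.
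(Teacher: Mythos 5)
Your proof is correct and takes essentially the same route as the paper: expand $\partial_{k_1,k_2}(k_1P_1+k_2P_2)+\tfrac{1}{2}[k_1P_1+k_2P_2,k_1P_1+k_2P_2]$ and collect it into the three Maurer--Cartan conditions weighted by $k_1^2$, $k_1k_2$, $k_2^2$, exactly as in the paper's displayed computation. Your only addition is to spell out the converse (which the paper dismisses with ``can be proved similarly'') by specializing at $(k_1,k_2)=(1,0),(0,1),(1,1)$, and your remark on the graded symmetry $[P_1,P_2]=[P_2,P_1]$ for degree-one elements correctly justifies the cross term.
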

\begin{proof}
Assume that the pair $(P_1,P_2)$ is a Maurer-Cartan element of  $(L,[-,-],\partial_1,\partial_2)$. Then for any $k_1,k_2\in\K$, we have
\begin{eqnarray*}
&&\partial_{k_1,k_2}(k_1P_1+k_2 P_2)+\half[k_1P_1+k_2 P_2,k_1P_1+k_2 P_2]\\
&=&k_1^2(\partial_1 P_1+\frac{1}{2}[P_1,P_1])+k_1k_2(\partial_1
P_2+\partial_2 P_1 +[P_1,P_2])+k_2^2(\partial_2
P_2+\frac{1}{2}[P_2,P_2])=0.\end{eqnarray*} Thus  $k_1 P_1+k_2
P_2$ is a Maurer-Cartan element of the differential graded Lie
algebra $(L,[-,-],\partial_{k_1,k_2})$.

The converse can be proved similarly. We omit the details.
\end{proof}

Let $(L,[-,-])$ be a   graded Lie algebra. It is obvious that $(L,[-,-],\partial_1=0,\partial_2=0)$ is a bidifferential graded Lie algebra. Consider the graded Lie algebra $(C^*(\g,\g),[-,-]_{\NR})$, we obtain the following main result.

\begin{thm}\label{pro:lsymNR}
  Let $\g$ be a vector space, $\pi_1,\pi_2\in\Hom(\wedge^2\g,\g)$. Then $(\g,\pi_1,\pi_2)$ is a compatible Lie algebra if and only if $(\pi_1,\pi_2)$ is a Maurer-Cartan element of the bidifferential graded Lie algebra $(C^*(\g,\g),[-,-]_{\NR}, \partial_1=0,\partial_2=0)$.
\end{thm}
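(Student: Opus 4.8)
The plan is to unwind the definition of a Maurer-Cartan element of the bidifferential graded Lie algebra $(C^*(\g,\g),[-,-]_{\NR},\partial_1=0,\partial_2=0)$ and match the resulting equations against the definition of a compatible Lie algebra. Since both differentials vanish and $\pi_1,\pi_2\in\Hom(\wedge^2\g,\g)$ are exactly the degree-one elements, the three defining conditions of a Maurer-Cartan pair $(\pi_1,\pi_2)$ collapse to
\begin{equation*}
[\pi_1,\pi_1]_{\NR}=0,\qquad [\pi_2,\pi_2]_{\NR}=0,\qquad [\pi_1,\pi_2]_{\NR}=0.
\end{equation*}

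First I would dispose of the two ``diagonal'' equations by Proposition \ref{pro:MCLie}: $[\pi_i,\pi_i]_{\NR}=0$ holds precisely when $\pi_i$ is a Lie bracket, so these two conditions are exactly the requirement that $\pi_1$ and $\pi_2$ both be Lie algebra structures on $\g$. The substance of the theorem is therefore the ``cross'' equation $[\pi_1,\pi_2]_{\NR}=0$, which I would verify is equivalent to the compatibility condition \eqref{eq:cl}. For degree-one elements ($p=q=1$) the Nijenhuis-Richardson bracket reads $[\pi_1,\pi_2]_{\NR}=\pi_1\circ\pi_2+\pi_2\circ\pi_1$, and expanding each composition over the three $(2,1)$-unshuffles of $\{1,2,3\}$ and using the antisymmetry of $\pi_1$ and $\pi_2$ gives, on inputs $x,y,z$,
\begin{equation*}
[\pi_1,\pi_2]_{\NR}(x,y,z)=[\{x,y\},z]+[\{y,z\},x]+[\{z,x\},y]+\{[x,y],z\}+\{[y,z],x\}+\{[z,x],y\},
\end{equation*}
which is exactly the left-hand side of \eqref{eq:cl}. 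Hence $[\pi_1,\pi_2]_{\NR}=0$ is equivalent to \eqref{eq:cl}, and combining this with the first step proves both directions of the theorem.

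A more economical route reuses the machinery already assembled. Applying Proposition \ref{pro:MC-equivalent 2-para} with $\partial_1=\partial_2=0$ (so that $\partial_{k_1,k_2}=0$), the pair $(\pi_1,\pi_2)$ is a Maurer-Cartan element of the bidifferential graded Lie algebra if and only if $k_1\pi_1+k_2\pi_2$ is a Maurer-Cartan element of $(C^*(\g,\g),[-,-]_{\NR},0)$ for all $k_1,k_2\in\K$; by Proposition \ref{pro:MCLie} this means $k_1\pi_1+k_2\pi_2$ is a Lie bracket for every $k_1,k_2\in\K$; and by Proposition \ref{eq:CL} the latter is precisely the statement that $(\g,\pi_1,\pi_2)$ is a compatible Lie algebra. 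The one point worth noting is that the quantifier ``for all $k_1,k_2$'' already forces $\pi_1$ (take $k_1=1,k_2=0$) and $\pi_2$ (take $k_1=0,k_2=1$) to be Lie brackets, so this route loses nothing relative to the direct one.

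I expect the only genuine obstacle to be bookkeeping in the direct computation: arranging the unshuffle sum and the signs so that the six terms of \eqref{eq:cl} emerge with the correct signs. The key mechanism is that the antisymmetry of the two brackets converts the ``minus'' unshuffle contributions into the cyclically symmetric form appearing in \eqref{eq:cl}.
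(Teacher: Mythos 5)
Your first argument is exactly the paper's proof: the paper likewise reduces the Maurer-Cartan condition (with $\partial_1=\partial_2=0$) to the three equations $[\pi_1,\pi_1]_{\NR}=0$, $[\pi_2,\pi_2]_{\NR}=0$, $[\pi_1,\pi_2]_{\NR}=0$, invokes Proposition \ref{pro:MCLie} for the two diagonal equations, and identifies $[\pi_1,\pi_2]_{\NR}=0$ with the compatibility condition \eqref{eq:cl} --- the only difference being that the paper asserts this last identification without writing out the unshuffle expansion, which you carry out correctly. Your alternative route via Propositions \ref{pro:MC-equivalent 2-para} and \ref{eq:CL} is also sound (and your remark that the quantifier over all $k_1,k_2$ recovers the two individual Jacobi identities is the right point to check), but since your primary argument already coincides with the paper's, nothing further is needed.
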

\begin{proof}
By Proposition \ref{pro:MCLie},
$\pi_1,\pi_2\in\Hom(\wedge^2\g,\g)$ define Lie algebra structures
on $\g$   respectively  if and only if
$$
[\pi_1,\pi_1]_\NR=0,\quad[\pi_2,\pi_2]_\NR=0.
$$
Moreover, $[\pi_1,\pi_2]_{\NR}=0$ exactly corresponds to the compatibility condition (\ref{eq:cl}).
  Thus, $(\g,\pi_1,\pi_2)$ is a compatible Lie algebra if and only if
 \begin{equation}\label{eq:Maurer-Cartatn 1}
 [\pi_1,\pi_1]_{\NR}=0,\quad [\pi_1,\pi_2]_{\NR}=0,\quad[\pi_2,\pi_2]_{\NR}=0,
 \end{equation}
which means that  $(\pi_1,\pi_2)$ is a Maurer-Cartan element of the bidifferential graded Lie algebra $(C^*(\g,\g),[-,-]_{\NR}, \partial_1=0,\partial_2=0)$.
\end{proof}

By twisting with a Maurer-Cartan element, we can obtain a new bidifferential graded Lie algebra, which controls deformations the original Maurer-Cartan element.
\begin{pro}\label{thm:main proposition}
Let  $(P_1,P_2)$ be a Maurer-Cartan element of a bidifferential
graded Lie algebra $(L,[-,-],\partial_1,\partial_2)$. Then
$(L,[-,-],\tilde{\partial}_1,\tilde{\partial}_2)$ is a
bidifferential graded Lie algebra, where $\tilde{\partial}_1$ and
$\tilde{\partial}_2$ are defined by
$$\tilde{\partial}_1 Q=\partial_1Q+[P_1,Q],\quad\tilde{\partial}_2 Q=\partial_2 Q+[P_2,Q],\quad \forall~Q\in L.$$
 For any $\tilde{P}_1,\tilde{P}_2\in L_1$,
$(P_1+\tilde{P}_1,P_2+\tilde{P}_2)$ is a Maurer-Cartan element of
the bidifferential graded Lie algebra
$(L,[-,-],\partial_1,\partial_2)$ if and only if
$(\tilde{P}_1,\tilde{P}_2)$ is a Maurer-Cartan element of the
bidifferential graded Lie algebra
$(L,[-,-],\tilde{\partial}_1,\tilde{\partial}_2)$.
\end{pro}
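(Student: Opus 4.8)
The plan is to verify the two assertions separately: first that $(L,[-,-],\tilde\partial_1,\tilde\partial_2)$ is again a bidifferential graded Lie algebra, and second the twisting equivalence of Maurer-Cartan elements. For the first part, I would check the three required conditions on the twisted differentials. That each $\tilde\partial_i$ is a degree $+1$ graded derivation of $[-,-]$ follows from the fact that $\partial_i$ is a derivation and that $[P_i,-]$ is a derivation by the graded Jacobi identity (with $|P_i|=1$). That $\tilde\partial_i$ squares to zero uses the Maurer-Cartan equation $\partial_i P_i+\tfrac12[P_i,P_i]=0$: expanding $\tilde\partial_i^2 Q=\partial_i^2 Q+\partial_i[P_i,Q]+[P_i,\partial_i Q]+[P_i,[P_i,Q]]$, the first term vanishes, the middle two combine via the derivation property into $[\partial_i P_i,Q]$, and the last term equals $\tfrac12[[P_i,P_i],Q]$ by graded Jacobi; their sum is $[\partial_i P_i+\tfrac12[P_i,P_i],Q]=0$. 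This is precisely the standard single-differential twisting computation, so each $(L,[-,-],\tilde\partial_i)$ is a DGLA.

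The genuinely new ingredient is the cross-term relation $\tilde\partial_1\tilde\partial_2+\tilde\partial_2\tilde\partial_1=0$. I would expand both compositions and collect:
\begin{eqnarray*}
\tilde\partial_1\tilde\partial_2 Q+\tilde\partial_2\tilde\partial_1 Q
&=&(\partial_1\partial_2+\partial_2\partial_1)Q
+\partial_1[P_2,Q]+[P_1,\partial_2 Q]\\
&&+\partial_2[P_1,Q]+[P_2,\partial_1 Q]
+[P_1,[P_2,Q]]+[P_2,[P_1,Q]].
\end{eqnarray*}
The first parenthesis vanishes because $(\partial_1,\partial_2)$ satisfy the bidifferential condition. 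Using the derivation property of $\partial_1,\partial_2$ the four middle terms rearrange into $[\partial_1 P_2+\partial_2 P_1,\,Q]$, and the graded Jacobi identity turns the last two bracket terms into $[[P_1,P_2],Q]$. The total is $[\partial_1 P_2+\partial_2 P_1+[P_1,P_2],\,Q]$, which vanishes by the mixed Maurer-Cartan equation in the definition of a Maurer-Cartan element of a bidifferential graded Lie algebra.

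For the second assertion, I would compare the two defining triples of Maurer-Cartan equations directly. By Proposition \ref{pro:MC-equivalent 2-para} it suffices, for each $i$, to expand the three components. For instance, the $\partial_1$-component of $(P_1+\tilde P_1,P_2+\tilde P_2)$ reads
$$\partial_1(P_1+\tilde P_1)+\tfrac12[P_1+\tilde P_1,P_1+\tilde P_1]
=\big(\partial_1 P_1+\tfrac12[P_1,P_1]\big)+\big(\partial_1\tilde P_1+[P_1,\tilde P_1]\big)+\tfrac12[\tilde P_1,\tilde P_1],$$
where the first grouped term vanishes by hypothesis and the middle term is exactly $\tilde\partial_1\tilde P_1$; hence this equals $\tilde\partial_1\tilde P_1+\tfrac12[\tilde P_1,\tilde P_1]$, the $\tilde\partial_1$-Maurer-Cartan component of $(\tilde P_1,\tilde P_2)$. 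An identical computation handles the $\partial_2$-component, and the mixed component $\partial_1(P_2+\tilde P_2)+\partial_2(P_1+\tilde P_1)+[P_1+\tilde P_1,P_2+\tilde P_2]$ splits off the mixed equation for $(P_1,P_2)$ and reassembles as $\tilde\partial_1\tilde P_2+\tilde\partial_2\tilde P_1+[\tilde P_1,\tilde P_2]$. Since each of the three equations for $(P_1+\tilde P_1,P_2+\tilde P_2)$ is equivalent to the corresponding equation for $(\tilde P_1,\tilde P_2)$ in the twisted structure, the two Maurer-Cartan conditions coincide.

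I expect the main obstacle to be purely bookkeeping: tracking the signs $(-1)^{|Q|}$ in the graded derivation rule and the graded Jacobi identity so that the middle terms actually combine into a single bracket $[\,\cdot\,,Q]$ rather than leaving a residual sign discrepancy. Because $P_1,P_2\in L_1$ have odd degree $1$, these signs are manageable, but they must be handled carefully, particularly in the cross-term computation where two distinct brackets $[P_1,[P_2,Q]]$ and $[P_2,[P_1,Q]]$ must be shown to sum to $[[P_1,P_2],Q]$ via graded Jacobi.
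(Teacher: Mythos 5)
Your proposal is correct and follows essentially the same route as the paper's proof: twist each differential separately via the standard single-differential argument, expand the anticommutator $\tilde\partial_1\tilde\partial_2+\tilde\partial_2\tilde\partial_1$ using the derivation property and the graded Jacobi identity to reduce it to $[\partial_1P_2+\partial_2P_1+[P_1,P_2],Q]=0$, and then expand the three Maurer-Cartan equations for $(P_1+\tilde P_1,P_2+\tilde P_2)$ so that they split into the vanishing equations for $(P_1,P_2)$ plus exactly the twisted equations for $(\tilde P_1,\tilde P_2)$. If anything, your sign bookkeeping is cleaner than the paper's displayed computation (which carries some spurious factors of $\tfrac12$), and phrasing each component as an equivalence gives both directions at once where the paper defers the converse to ``can be proved similarly.''
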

\begin{proof}
Since $P_1$ is a Maurer-Cartan element of the differential graded Lie algebra $(L,[-,-],\partial_1)$,  $(L,[-,-],\tilde{\partial}_1)$ is a differential graded Lie algebra. Similarly,  $(L,[-,-],\tilde{\partial}_2)$ is a differential graded Lie algebra. Since  $\partial_1P_2+\partial_2P_1+[P_1,P_2]=0$, by properties of $\partial_1$ and $\partial_2$ and the graded Jacobi identity,  for any $Q\in L$, we have
\begin{eqnarray*}
    \tilde{\partial}_1\tilde{\partial}_2Q+\tilde{\partial}_2\tilde{\partial}_1Q
    &=&\partial_1\partial_2 Q+[P_1,\partial_2 Q]+\half \partial_1[P_2,Q]+\half [P_1,[P_2,Q]]\\
    &&+\partial_2\partial_1 Q+[P_2,\partial_1Q]+\half \partial_2[P_1,Q]+\half [P_2,[P_1,Q]]\\
    &=&\half [\partial_2 P_1,Q]+\half [\partial_1 P_2,Q]+\half[[P_1,P_2],Q]=0,
    \end{eqnarray*}
    which implies that $\tilde{\partial}_1\tilde{\partial}_2+\tilde{\partial}_2\tilde{\partial}_1=0$.
Thus $(L,[-,-],\tilde{\partial}_1,\tilde{\partial}_2)$ is a bidifferential graded Lie algebra.

If $(P_1+\tilde{P}_1,P_2+\tilde{P}_2)$ is a Maurer-Cartan element of the bidifferential graded Lie algebra $(L,[-,-],\partial_1,\partial_2)$, then we have
\begin{eqnarray*}
\partial_1 (P_1+\tilde{P}_1)+\half [P_1+\tilde{P}_1,P_1+\tilde{P}_1]&=&0,\\
 \partial_2 (P_2+\tilde{P}_2)+  \half[P_2+\tilde{P}_2,P_2+\tilde{P}_2]&=&0,\\
 \partial_1 (P_2+\tilde{P}_2)+\partial_2 (P_1+\tilde{P}_1)+[P_1+\tilde{P}_1,P_2+\tilde{P}_2]&=&0.
\end{eqnarray*}
Furthermore, since $(P_1,P_2)$ is a Maurer-Cartan element of the bidifferential graded Lie algebra $(L,[-,-],\partial_1,\partial_2)$, we have
\begin{eqnarray*}
\tilde{\partial}_1 \tilde{P}_1+\frac{1}{2}[\tilde{P}_1,\tilde{P}_1]&=&0,\\
\tilde{\partial}_2 \tilde{P}_2+\frac{1}{2}[\tilde{P}_2,\tilde{P}_2]&=&0,\\
\tilde{\partial}_1 \tilde{P}_2+\tilde{\partial}_2 \tilde{P}_1+[\tilde{P}_1,\tilde{P}_2]&=&0.
\end{eqnarray*}
Thus $(\tilde{P}_1,\tilde{P}_2)$ is a Maurer-Cartan element of the bidifferential graded Lie algebra $(L,[-,-],\tilde{\partial}_1,\tilde{\partial}_2)$. The converse can be proved similarly. We omit the details.
\end{proof}

Now we are ready to give the bidifferential graded Lie algebra that controls deformations of a compatible Lie algebra.

\begin{thm}\label{pro:new differential Lie algebra}
Let $(\g,\pi_1,\pi_2)$ be a compatible Lie algebra.  Then
we have the following conclusions.
\begin{itemize}
\item[\rm (1)]$(C^*(\g,\g),[-,-]_\NR,
\tilde{\partial}_1,\tilde{\partial}_2)$ is a bidifferential graded
Lie algebra, where $\tilde{\partial}_1$ and $\tilde{\partial}_2$
are  respectively  defined by
$$\tilde{\partial}_1 ={[\pi_1,Q]}_{\NR},\quad\tilde{\partial}_2 ={[\pi_2,Q]}_{\NR},\quad\forall~Q\in C^q(\g,\g).$$
\item[\rm(2)] For any
$\tilde{\pi}_1,\tilde{\pi}_2\in\Hom(\wedge^2\g,\g)$,
$(\g,\pi_1+\tilde{\pi}_1,\pi_2+\tilde{\pi}_2)$ is  a compatible
Lie algebra if and only if the pair
$(\tilde{\pi}_1,\tilde{\pi}_2)$ is a Maurer-Cartan element of the
bidifferential graded Lie algebra $(C^*(\g,\g),[-,-]_\NR,
\tilde{\partial}_1,\tilde{\partial}_2)$.
\end{itemize}
 \end{thm}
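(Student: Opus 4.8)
The plan is to realize both statements as direct consequences of Theorem~\ref{pro:lsymNR} and the twisting Proposition~\ref{thm:main proposition}, taking as the ambient object the graded Lie algebra $(C^*(\g,\g),[-,-]_\NR)$ regarded as a bidifferential graded Lie algebra with $\partial_1=\partial_2=0$. The point is that the entire analytic content—the graded Jacobi-type identities and the anticommutation relation—has already been dispatched at the abstract level, so the work here is purely a matter of specializing and translating.

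First I would observe that, by Theorem~\ref{pro:lsymNR}, the hypothesis that $(\g,\pi_1,\pi_2)$ is a compatible Lie algebra means precisely that $(\pi_1,\pi_2)$ is a Maurer-Cartan element of the bidifferential graded Lie algebra $(C^*(\g,\g),[-,-]_\NR,\partial_1=0,\partial_2=0)$. This is exactly the input required by Proposition~\ref{thm:main proposition}. Applying that proposition with $(L,[-,-],\partial_1,\partial_2)=(C^*(\g,\g),[-,-]_\NR,0,0)$ and Maurer-Cartan element $(P_1,P_2)=(\pi_1,\pi_2)$ yields at once that the twisted quadruple $(C^*(\g,\g),[-,-]_\NR,\tilde{\partial}_1,\tilde{\partial}_2)$ is a bidifferential graded Lie algebra, where $\tilde{\partial}_i Q=0+[\pi_i,Q]_\NR=[\pi_i,Q]_\NR$. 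Since these specialized twisted differentials coincide with the ones in the statement, this proves part~(1).

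For part~(2), I would again invoke Proposition~\ref{thm:main proposition}, now using its Maurer-Cartan correspondence: for $\tilde{\pi}_1,\tilde{\pi}_2\in L_1=\Hom(\wedge^2\g,\g)$, the pair $(\pi_1+\tilde{\pi}_1,\pi_2+\tilde{\pi}_2)$ is a Maurer-Cartan element of the untwisted $(C^*(\g,\g),[-,-]_\NR,0,0)$ if and only if $(\tilde{\pi}_1,\tilde{\pi}_2)$ is a Maurer-Cartan element of the twisted $(C^*(\g,\g),[-,-]_\NR,\tilde{\partial}_1,\tilde{\partial}_2)$. It then remains only to translate the left-hand condition back into the language of compatible Lie algebras: by Theorem~\ref{pro:lsymNR} applied to the pair $(\pi_1+\tilde{\pi}_1,\pi_2+\tilde{\pi}_2)$, being a Maurer-Cartan element of the untwisted structure is equivalent to $(\g,\pi_1+\tilde{\pi}_1,\pi_2+\tilde{\pi}_2)$ being a compatible Lie algebra. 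Chaining the two equivalences gives~(2).

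There is essentially no computational obstacle here. The only points requiring care are bookkeeping ones: checking that specializing $\partial_1=\partial_2=0$ genuinely reproduces the stated differentials $\tilde{\partial}_i=[\pi_i,-]_\NR$, and confirming that $\Hom(\wedge^2\g,\g)$ is the correct degree-one piece $L_1$ (indeed $C^n(\g,\g)=\Hom(\wedge^n\g,\g)$ has degree $n-1$, so $L_1=C^2(\g,\g)$), which ensures that $(\tilde{\pi}_1,\tilde{\pi}_2)$ lies in the domain where the Maurer-Cartan equation of Proposition~\ref{thm:main proposition} is being asserted. Once these identifications are in place, parts~(1) and~(2) follow without further calculation.
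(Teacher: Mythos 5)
Your proof is correct and follows exactly the same route as the paper: both invoke Theorem~\ref{pro:lsymNR} to recognize $(\pi_1,\pi_2)$ as a Maurer--Cartan element of $(C^*(\g,\g),[-,-]_\NR,\partial_1=0,\partial_2=0)$ and then apply the twisting result, Proposition~\ref{thm:main proposition}, to obtain both conclusions. Your write-up simply makes explicit the bookkeeping (the identification of the twisted differentials and of $L_1$ with $\Hom(\wedge^2\g,\g)$) that the paper leaves implicit.
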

\begin{proof}
 By  Theorem \ref{pro:lsymNR},   $(\pi_1,\pi_2)$ is a Maurer-Cartan element of the bidifferential graded Lie algebra $(C^*(\g,\g),[-,-]_{\NR},\partial_1=0,\partial_2=0)$. By Proposition \ref{thm:main proposition}, we obtain the conclusions.
\end{proof}

\subsection{Cohomologies  of compatible Lie algebras}
In this subsection, we introduce a cohomology theory of compatible Lie algebras.

Let $(\mathfrak{g},[-,-],\{-,-\})$ be a compatible Lie algebra with $\pi_1(x,y)=[x,y]$ and $\pi_2(x,y)=\{x,y\}$. By Theorem \ref{pro:lsymNR},  $(\pi_1,\pi_2)$ is a Maurer-Cartan element of the bidifferential graded Lie algebra $(C^*(\g,\g),[-,-]_{\NR},\partial_1=0,\partial_2=0)$.
Recall that $C^n(\g,\g)=\Hom(\wedge^n\g,\g)$.

Define the space of 0-cochains $\huaC^0(\g,\g)$ by $\huaC^0(\g,\g)=\{x\in\g\mid [x,y]=\{x,y\},~\forall~y\in\g\}$. For $n\ge 1$, define the space of $n$-cochains $\huaC^n(\g,\g)$ by
$$\huaC^n(\g,\g)=\underbrace{C^n(\g,\g)\oplus C^n(\g,\g)\cdots\oplus C^n(\g,\g)}_{n~{copies}}.$$

Note that $[\pi_1,x]_\NR(y)=-[x,y]$ and $[\pi_2,x]_\NR(y)=-\{x,y\}$. Define $\jetd^0:\huaC^{0}(\g,\g)\longrightarrow \huaC^{1}(\g,\g)$  by
$$\jetd^0 x=-[\pi_1,x]_\NR=-[\pi_2, x]_\NR,\quad \forall x\in \huaC^0(\g,\g),$$
define $\jetd^1:\huaC^{1}(\g,\g)\longrightarrow \huaC^{2}(\g,\g)$  by
$$\jetd^1 f=([\pi_1, f]_\NR,[\pi_2, f]_\NR),\quad \forall f\in \huaC^1(\g,\g)$$
and define $\jetd^n:\huaC^{n}(\g,\g)\longrightarrow \huaC^{n+1}(\g,\g)$  by
\begin{eqnarray*}
    \jetd^n(\omega_1,\cdots,\omega_{n})=(-1)^{n-1}([\pi_1,\omega_1]_\NR,\cdots,\underbrace{[\pi_2,\omega_{i-1}]_\NR+[\pi_1,\omega_i]_\NR}_i,\cdots,[\pi_2,\omega_{n}]_\NR),
\end{eqnarray*}
where $(\omega_1,\omega_2,\cdots,\omega_n)\in\huaC^n(\g,\g)$ and $2\leq i\leq n$.

The definition of $\jetd$ can be presented by the following diagram:

{\footnotesize{\begin{tikzpicture}[>=stealth,sloped]
    \matrix (tree) [%
      matrix of nodes,
      minimum size=0.2cm,
      column sep=1.6cm,
      row sep=0.2cm,
    ]
    {     & & & &$C^4(\g,\g)$ &\\
          & &  & $C^3(\g,\g)$ & &\\
          & &$C^2(\g,\g)$ &   & $C^4(\g,\g)$ &\\
      $\huaC^0(\g,\g)$ & $C^1(\g,\g)$  &  & $C^3(\g,\g)$ & &$\cdots$\\
         & & $C^2(\g,\g)$ &   & $C^4(\g,\g)$ &\\
         & &   & $C^3(\g,\g)$ & &\\
         & & & & $C^4(\g,\g)$ &\\
    };
    \draw[->] (tree-4-1) -- (tree-4-2) node [midway,above] {$-[\pi_1,-]_\NR$};
    \draw[->] (tree-4-2) -- (tree-3-3) node [midway,above] {$[\pi_1,-]_\NR$};
    \draw[->] (tree-4-2) -- (tree-5-3) node [midway,below] {$[\pi_2,-]_\NR$};
    \draw[->] (tree-3-3) -- (tree-2-4) node [midway,above] {$-[\pi_1,-]_\NR$};
    \draw[->] (tree-3-3) -- (tree-4-4) node [midway,above] {$-[\pi_2,-]_\NR$};
    \draw[->] (tree-5-3) -- (tree-4-4) node [midway,above] {$-[\pi_1,-]_\NR$};
    \draw[->] (tree-5-3) -- (tree-6-4) node [midway,below] {$-[\pi_2,-]_\NR$};
    \draw[->] (tree-2-4) -- (tree-1-5) node [midway,above] {$[\pi_1,-]_\NR$};
    \draw[->] (tree-2-4) -- (tree-3-5) node [midway,above] {$[\pi_2,-]_\NR$};
    \draw[->] (tree-4-4) -- (tree-3-5) node [midway,above] {$[\pi_1,-]_\NR$};
    \draw[->] (tree-4-4) -- (tree-5-5) node [midway,above] {$[\pi_2,-]_\NR$};
    \draw[->] (tree-6-4) -- (tree-5-5) node [midway,above] {$[\pi_1,-]_\NR$};
    \draw[->] (tree-6-4) -- (tree-7-5) node [midway,below] {$[\pi_2,-]_\NR$};
  \end{tikzpicture}}}

\begin{thm}\label{thm:cohomology of CLA}
  With the above notations, we have $\jetd^{n+1}\circ\jetd^n=0$, i.e. $(\huaC^*(\g,\g)=\oplus_{n=0}^\infty \huaC^n(\g,\g),\jetd^*)$ is a cochain complex.
\end{thm}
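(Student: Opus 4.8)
The plan is to encode the differential through the two operators $D_1 := [\pi_1,-]_\NR$ and $D_2 := [\pi_2,-]_\NR$ on $C^*(\g,\g)$, and then to verify $\jetd^{n+1}\circ\jetd^n=0$ by a component-wise computation in which every entry collapses to one of the three expressions
$$D_1^2=0,\qquad D_2^2=0,\qquad D_1 D_2+D_2 D_1=0.$$
These three relations are exactly the statement that $(C^*(\g,\g),[-,-]_\NR,D_1,D_2)$ is a bidifferential graded Lie algebra, which is Theorem~\ref{pro:new differential Lie algebra}(1). Equivalently, since $(\g,\pi_1,\pi_2)$ is a compatible Lie algebra, Theorem~\ref{pro:lsymNR} supplies the Maurer--Cartan equations $[\pi_1,\pi_1]_\NR=[\pi_2,\pi_2]_\NR=[\pi_1,\pi_2]_\NR=0$, and running these through the graded Jacobi identity gives $D_iD_j+D_jD_i=\tfrac{2}{1+\delta_{ij}}[[\pi_i,\pi_j]_\NR,-]_\NR$, hence all three relations. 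So the first step is merely to record these identities.

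For the main range $n\ge 2$ I would write $\jetd^n(\omega_1,\dots,\omega_n)=(-1)^{n-1}\eta$ with $\eta=(\eta_1,\dots,\eta_{n+1})$, where $\eta_1=D_1\omega_1$, $\eta_i=D_2\omega_{i-1}+D_1\omega_i$ for $2\le i\le n$, and $\eta_{n+1}=D_2\omega_n$, and then apply $\jetd^{n+1}$, whose $j$-th entry (up to the global factor $(-1)^n$) is $D_1\eta_1$ for $j=1$, $D_2\eta_{j-1}+D_1\eta_j$ for $2\le j\le n+1$, and $D_2\eta_{n+1}$ for $j=n+2$. The two extreme entries give $D_1^2\omega_1=0$ and $D_2^2\omega_n=0$. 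Each interior entry expands to $D_2^2\omega_{j-2}+(D_2 D_1+D_1 D_2)\omega_{j-1}+D_1^2\omega_j$, with the boundary interior entries $j=2$ and $j=n+1$ being the obvious truncations (the $D_2^2$, resp.\ $D_1^2$, term is simply absent), and every summand is annihilated by one of the three relations. Thus all $n+2$ entries vanish; the overall sign $(-1)^{n-1}(-1)^n=-1$ is irrelevant.

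It remains to treat the low-degree cases, where the only genuine subtlety lies. The case $n=1$ is immediate: $\jetd^2\jetd^1 f=-\bigl(D_1^2 f,\,(D_2 D_1+D_1 D_2)f,\,D_2^2 f\bigr)=0$. For $n=0$ one computes $\jetd^1\jetd^0 x=\bigl(-D_1^2 x,\,-D_2 D_1 x\bigr)$; the first entry vanishes by $D_1^2=0$, but the vanishing of the second does \emph{not} follow from the three relations alone. Here I would invoke the definition of $\huaC^0(\g,\g)$: the defining condition $[x,y]=\{x,y\}$ for all $y\in\g$ says precisely that $D_1 x=D_2 x$ in $C^1(\g,\g)$, so that $D_2 D_1 x=D_2 D_2 x=D_2^2 x=0$. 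This use of the special shape of the $0$-cochains is the one non-formal point; everything else is bookkeeping of the zig-zag pattern and the signs.
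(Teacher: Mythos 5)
Your proposal is correct and takes essentially the same route as the paper: both rest on the Maurer--Cartan relations $[\pi_1,\pi_1]_\NR=[\pi_2,\pi_2]_\NR=[\pi_1,\pi_2]_\NR=0$ together with the graded Jacobi identity, followed by the same component-wise zig-zag verification, and your explicit treatment of the $n=0$ case (using that $D_1x=D_2x$ for $x\in\huaC^0(\g,\g)$) spells out what the paper's one-sentence argument leaves implicit. One cosmetic slip: for $i\neq j$ the graded Jacobi identity gives $D_iD_j+D_jD_i=[[\pi_i,\pi_j]_\NR,-]_\NR$ with coefficient $1$, not $2$ as your formula $\tfrac{2}{1+\delta_{ij}}$ asserts, but this is harmless since that bracket vanishes anyway.
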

\begin{proof}
By the fact that $[\pi_1,\pi_1]_\NR=[\pi_2,\pi_2]_\NR=0$ and the graded Jacobi identity,  we have $\jetd^1\circ\jetd^0=0$.   Then by \eqref{eq:Maurer-Cartatn 1} and the graded Jacobi identity, for $(\omega_1,\omega_2,\cdots,\omega_n)\in\huaC^n(\g,\g),~n\ge 1$, we have
\begin{eqnarray*}
    &&\jetd^{n+1}\jetd^n(\omega_1,\cdots,\omega_{n})\\
    &=&(-1)^{n-1}\delta^{n+1}([\pi_1,\omega_1]_\NR,\cdots,\underbrace{[\pi_2,\omega_{i-1}]_\NR+[\pi_1,\omega_i]_\NR}_i,\cdots,[\pi_2,\omega_{n}]_\NR)\\
    &=&-([\pi_1,[\pi_1,\omega_1]_\NR]_\NR,[\pi_2,[\pi_1,\omega_1]_\NR]_\NR+[\pi_1,[\pi_2,\omega_1]_\NR]_\NR+[\pi_1,[\pi_1,\omega_2]_\NR]_\NR,\cdots,\\
    &&\underbrace{[\pi_2,[\pi_2,\omega_{i-2}]_\NR]_\NR+[\pi_2,[\pi_1,\omega_{i-1}]_\NR]_\NR+[\pi_1,[\pi_2,\omega_{i-1}]_\NR]_\NR+[\pi_1,[\pi_1,\omega_{i}]_\NR]_\NR}_{3\leq i\leq n-1},\cdots,\\
&&[\pi_2,[\pi_2,\omega_{n-1}]_\NR]_\NR+[\pi_2,[\pi_1,\omega_n]_\NR]_\NR+[\pi_1,[\pi_2,\omega_n]_\NR]_\NR,[\pi_2,[\pi_2,\omega_n]_\NR]_\NR)\\
    &=&-(\half[[\pi_1,\pi_1]_\NR,\omega_1]_\NR,[[\pi_1,\pi_2]_\NR,\omega_1]_\NR+\half [[\pi_1,\pi_1]_\NR,\omega_2]_\NR,\cdots,\\
    &&\underbrace{\half[[\pi_2,\pi_2]_\NR,\omega_{i-2}]_\NR+[[\pi_1,\pi_2]_\NR,\omega_{i-1}]_\NR+\half[[\pi_1,\pi_1]_\NR,\omega_i]_\NR}_{3\leq i\leq n-1},\cdots,\\
    &&\half[[\pi_2,\pi_2]_\NR,\omega_{n-1}]_\NR+[[\pi_1,\pi_2]_\NR,\omega_n]_\NR,\half[[\pi_2,\pi_2]_\NR,\omega_n]_\NR)\\
    &=&(0,0,\cdots,0).
    \end{eqnarray*}
    Thus we have $\jetd^{n+1}\circ\jetd^n=0$.\end{proof}
\begin{defi} Let $(\mathfrak{g},[-,-],\{-,-\})$ be a compatible Lie
algebra.
  The cohomology of the cochain complex $(\huaC^*(\g,\g),\jetd^*)$  is  called {\bf the cohomology of   $(\g,[-,-],\{-,-\})$}.  We denote the $n$-th cohomology group by  $\huaH^n(\g,\g)$.
\end{defi}

  Let $(\mathfrak{g},[-,-],\{-,-\})$ be a compatible
Lie algebra.  Clearly, $$\huaH^0(\g,\g)=\{x\in \g\mid
[x,y]=\{x,y\}=0,~\forall~y\in\g\}.$$ Thus $\huaH^0(\g,\g)$ is both
the center of the Lie algebra $(\g,[-,-])$ and  the Lie algebra
$(\g,\{-,-\})$.

A linear map $f\in\Hom(\g,\g)$ is called a {\bf derivation} of   $(\mathfrak{g},[-,-],\{-,-\})$ if
$$f([x,y])=[f(x),y]+[x,f(y)],  \quad f(\{x,y\})=\{f(x),y\}+\{x,f(y)\},~ \quad \forall~x,y\in \g.$$
A derivation $f$ is called an inner derivation   of
$(\mathfrak{g},[-,-],\{-,-\})$  if there exists $x\in {\g}$ such
that
$$f(y)=[x,y]=\{x,y\}, ~\forall~y\in \g.$$
 We denote by $\Der(\g,\g)$ and $ {\IDer}(\g,\g)$ the space of derivations and inner derivations respectively.
Then the first cohomology group $\huaH^1(\g,\g)=\Der(\g,\g)/\IDer(\g,\g).$

\subsection{Infinitesimal deformations of compatible Lie algebras}

In this subsection, we study infinitesimal deformations of compatible Lie algebras using the cohomology theory of compatible Lie algebras. In particular, we introduce the notion of a Nijenhuis operator on a compatible Lie algebra which gives rise to a trivial infinitesimal deformation.
\begin{defi}
  Let $(\mathfrak{g},[-,-],\{-,-\})$ be a compatible Lie algebra,
$(\omega_1,\omega_2)\in \huaC^{2}(\g,\g)$.
Define
\begin{equation}
[x,y]_t=[x,y]+t\omega_1(x,y),\;\;\{x,y\}_t=\{x,y\}+t\omega_2(x,y),\;\;\forall
x,y\in \mathfrak{g}.
\end{equation}
If for any $t$, $(\mathfrak{g},[-,-]_t,\{-,-\}_t)$ is still a
compatible Lie algebra, then we say that $(\omega_1,\omega_2)$
  generates  an {\bf infinitesimal deformation} of
$(\mathfrak{g},[-,-],\{-,-\})$.
\end{defi}
It is straightforward to verify that $(\omega_1,\omega_2)$
generates an infinitesimal deformation of a
compatible Lie algebra
 $(\frkg,[-,-],\{-,-\})$ if and only if for any $k_1,k_2\in
 \mathbb K$, $k_1\omega_1+k_2\omega_2$ generates an infinitesimal
 deformation of the Lie algebra  $(\frak g,
 \Courant{-,-}=k_1[-,-]+k_2\{-,-\})$.

We set
$$\pi_1(x,y)=[x,y],\quad \pi_2(x,y)=\{x,y\}.$$
By Theorem \ref{pro:lsymNR},  $(\g,[-,-]_t,\{-,-\}_t)$ is an infinitesimal deformation of $(\g,\pi_1,\pi_2)$  if and only if
\begin{equation}\label{eq:2-closed omega bracket}
 \begin{array}{rclrclrcl}
[{\pi}_1,{\omega}_1]_{\NR}&=&0,&[{\pi}_1,{\omega}_2]_{\NR}+[{\pi}_2,{\omega}_1]_{\NR}&=&0,&[{\pi}_2,{\omega}_2]_{\NR}&=&0,\\
{[{\omega}_1,{\omega}_1]}_{\NR}&=&0,& {[{\omega}_1,{\omega}_2]}_{\NR}&=&0,&{[{\omega}_2,{\omega}_2]}_{\NR}&=&0.
 \end{array}
\end{equation}
The first line of \eqref{eq:2-closed omega bracket} means that
$(\omega_1,\omega_2)\in \huaC^2(\g,\g)$  is a $2$-cocycle for the
compatible Lie algebra $(\mathfrak{g},[-,-],\{-,-\})$, i.e.
$\jetd^2(\omega_1,\omega_2)=0$, and the second line of
\eqref{eq:2-closed omega bracket} means that
$(\g,\omega_1,\omega_2)$ is a compatible Lie algebra. Moreover,
 by these equations  it is obvious that both $(\g,
\pi_1,\omega_1)$ and $(\g, \pi_2,\omega_2)$ are compatible Lie
algebras.

\begin{defi}
Two infinitesimal deformations $(\g,[-,-]_t,\{-,-\}_t)$ and
$(\g,[-,-]_t',\{-,-\}_t')$ of a compatible Lie algebra
$(\g,[-,-],\{-,-\})$ generated by $(\omega_1,\omega_2)$ and
$(\omega'_1,\omega'_2) $   respectively  are said to be
{\bf equivalent} if there exists $N\in \gl(\g)$ such that
${\Id}+tN: (\g,[-,-]_t',\{-,-\}_t')\longrightarrow
(\g,[-,-]_t,\{-,-\}_t)$ is a compatible Lie algebra homomorphism.
\end{defi}

Two infinitesimal deformations $(\g,[-,-]_t,\{-,-\}_t)$ and
$(\g,[-,-]_t',\{-,-\}_t')$ generated by $(\omega_1,\omega_2)$ and
$(\omega'_1,\omega'_2) $    respectively  are equivalent
if and only if
\begin{eqnarray}
\omega_1(x,y)-\omega_1'(x,y)&=&[x, N(y)]+[N(x), y]-N([x, y]),\label{2-exact1}\\
N\omega_1(x,y)&=&\omega_1'(x,N(y))+\omega_1'(N(x),y)+[N(x),N(y)],\label{integral condition 1}\\
\omega_2(x,y)-\omega_2'(x,y)&=&\{x, N(y)\}+\{N(x), y\}-N(\{x, y\}),\label{2-exact2}\\
N\omega_2(x,y)&=&\omega_2'(x,N(y))+\omega_2'(N(x),y)+\{N(x),N(y)\},\label{integral condition 2}\\
\omega_1'(N(x),N(y))&=&0,\label{eq:con1}\\
\omega_2'(N(x),N(y))&=&0.\label{eq:con2}
\end{eqnarray}
It is easy to see that $(\ref{2-exact1})$ and $(\ref{2-exact2})$ mean that $(\omega_1- \omega_1',\omega_2-\omega_2')=\jetd^1 N$.

We summarize the above discussion into the following
 conclusion:
\begin{thm}\label{thm:deformation}
Let $(\g,[-,-]_t,\{-,-\}_t)$ be an infinitesimal deformation  of a compatible Lie algebra $(\g,[-,-],\{-,-\})$ generated by $(\omega_1,\omega_2)$. Then $(\omega_1,\omega_2)$ is
closed, i.e. $\jetd^2(\omega_1,\omega_2)=0.$

Furthermore, if two infinitesimal deformations
$(\g,[-,-]_t,\{-,-\}_t)$ and $(\g,[-,-]_t',\{-,-\}_t')$ of a
compatible Lie algebra $(\g,[-,-],\{-,-\})$ generated by
$(\omega_1,\omega_2)$ and $(\omega'_1,\omega'_2) $
respectively  are equivalent,  then $(\omega_1,\omega_2)$ and
$(\omega_1',\omega'_2)$ are in the same cohomology class in
$\huaH^2(\g,\g)$.
\end{thm}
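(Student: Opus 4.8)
The plan is to prove the two assertions of Theorem \ref{thm:deformation} separately, treating the closedness statement first and the cohomology-class statement second, while leveraging the reformulations already established in equations \eqref{eq:2-closed omega bracket} through \eqref{eq:con2}.

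For the first assertion, I would start from the definition of an infinitesimal deformation: the requirement that $(\g,[-,-]_t,\{-,-\}_t)$ be a compatible Lie algebra for all $t$. By Theorem \ref{pro:lsymNR}, this is equivalent to the system \eqref{eq:2-closed omega bracket}. Expanding the bracket conditions $[\pi_i+t\omega_i,\pi_j+t\omega_j]_\NR=0$ in powers of $t$ and collecting the coefficient of $t$, I obtain exactly the first line of \eqref{eq:2-closed omega bracket}, namely $[\pi_1,\omega_1]_\NR=0$, $[\pi_1,\omega_2]_\NR+[\pi_2,\omega_1]_\NR=0$, and $[\pi_2,\omega_2]_\NR=0$. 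By the definition of $\jetd^2$ given in the preceding subsection, this triple of equations is precisely the statement $\jetd^2(\omega_1,\omega_2)=0$, so $(\omega_1,\omega_2)$ is closed. This step is essentially bookkeeping, since the content is already packaged in \eqref{eq:2-closed omega bracket}.

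For the second assertion, I would invoke the equivalence conditions \eqref{2-exact1}--\eqref{eq:con2}, which hold by hypothesis since the two deformations are equivalent via $\Id+tN$. The key observation, already flagged in the text, is that equations \eqref{2-exact1} and \eqref{2-exact2} together state exactly that $(\omega_1-\omega_1',\,\omega_2-\omega_2')=\jetd^1 N$, where $N\in\gl(\g)=C^1(\g,\g)$ and $\jetd^1 N=([\pi_1,N]_\NR,[\pi_2,N]_\NR)$. Unwinding the Nijenhuis-Richardson bracket, $[\pi_1,N]_\NR(x,y)=[N(x),y]+[x,N(y)]-N([x,y])$ and similarly for $\pi_2$, which matches the right-hand sides of \eqref{2-exact1} and \eqref{2-exact2}. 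Hence $\omega_i-\omega_i'$ is a coboundary for $i=1,2$, which is to say $(\omega_1,\omega_2)$ and $(\omega_1',\omega_2')$ differ by an element of the image of $\jetd^1$ and therefore represent the same class in $\huaH^2(\g,\g)$.

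The genuinely substantive work lies not in Theorem \ref{thm:deformation} itself but in establishing the intermediate reformulations that precede it: deriving the full system \eqref{eq:2-closed omega bracket} from the deformation condition, and verifying that equivalence of deformations forces \eqref{2-exact1}--\eqref{eq:con2} by expanding the homomorphism condition for $\Id+tN$ in powers of $t$. Granting those derivations, as the excerpt does, the theorem reduces to matching the linear-in-$t$ data to the cocycle and coboundary descriptions of $\jetd^*$. The main obstacle to watch for is the identification of $\jetd^1 N$ with the pair in \eqref{2-exact1} and \eqref{2-exact2}; one must confirm the sign conventions in the definition of $\jetd^1 f=([\pi_1,f]_\NR,[\pi_2,f]_\NR)$ are consistent with the explicit Chevalley--Eilenberg formula from Proposition \ref{pro:CE-operator}, so that the coboundary genuinely produces the derivation-type expressions appearing in the equivalence relations. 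Once the signs align, the conclusion is immediate.
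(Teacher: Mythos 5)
Your proposal is correct and follows essentially the same route as the paper: the paper states this theorem as a summary of the immediately preceding discussion, in which the deformation condition is expanded in powers of $t$ to yield \eqref{eq:2-closed omega bracket} (whose first line is $\jetd^2(\omega_1,\omega_2)=0$), and the equivalence condition via $\Id+tN$ is expanded to yield \eqref{2-exact1}--\eqref{eq:con2}, with \eqref{2-exact1} and \eqref{2-exact2} read off as $(\omega_1-\omega_1',\omega_2-\omega_2')=\jetd^1 N$. Your sign check of $[\pi_1,N]_\NR(x,y)=[N(x),y]+[x,N(y)]-N([x,y])$ against the paper's definition of $\jetd^1$ is accurate, so the identification with a coboundary goes through exactly as in the paper.
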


In the sequel, we recall Nijenhuis operators on Lie algebras and introduce the notion of Nijenhuis operators on compatible Lie algebras.
Let $(\g,[-,-])$ be a Lie algebra and $N:\g\longrightarrow \g$  a linear map. We use $\pi:\wedge^2\g\longrightarrow \g$ to denote the Lie algebra structure, i.e. $\pi(x,y)=[x, y]$. The deformed structure
$$\pi_N:=[\pi,N]_{\NR}$$
defines a bilinear operation on $\g$ which we denote by $[-,-]_N$. Explicitly,
\begin{equation}
[x, y]_N=[\pi,N]_{\NR}(x,y)=[N(x), y]+[x, N(y)]-N([x, y]),\quad \forall~x,y\in \g.
\end{equation}

We use $T_\pi N:\g\otimes\g\longrightarrow\g$ to denote the
Nijenhuis torsion of $N$ defined by
\begin{equation}\label{eq:LANij1}
T_\pi N:=\frac{1}{2}([\pi,N\circ
N]_{\NR}+[N,[\pi,N]_{\NR}]_{\NR}),\quad\forall x,y\in
\g.
\end{equation}
It is well-known that
  \begin{equation}\label{eq:TN}
T_\pi N(x,y)=N([x, y]_N)-[N(x),N(y)].
\end{equation}

\begin{defi}  Let $(\g,[-,-])$ be a Lie algebra. A linear map $N:\g\longrightarrow
\g$ is called a {\bf Nijenhuis operator}   on
$(\g,[-,-])$ if $T_\pi N=0$, i.e.
\begin{equation}\label{eq:Lie-Nijenhuis}
N([N(x),y]+[x,N(y)]-N([x,y]))=[N(x),N(y)],\quad
\forall~x,y\in\g.
\end{equation}
\end{defi}

Nijenhuis operators on Lie algebras have the following important
property.
 \begin{pro}{\rm(\cite{Dorf})}\label{eq:Nijenhuis property}
 If $N:\g\longrightarrow \g$ is a Nijenhuis operator on a Lie algebra $(\g,[-,-])$, then $(\g,[-, -]_N)$ is also a Lie algebra and $N$ is a  Lie algebra
 homomorphism from $(\g,[-, -]_N)$ to $(\g,[-,-])$.
 \end{pro}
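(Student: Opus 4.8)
The plan is to split the statement into its two assertions and to dispose of the homomorphism property first, since it is essentially a restatement of the Nijenhuis condition. To say that $N$ is a homomorphism from $(\g,[-,-]_N)$ to $(\g,[-,-])$ means precisely that $N([x,y]_N)=[N(x),N(y)]$ for all $x,y\in\g$. By the identity \eqref{eq:TN}, the left-hand side minus the right-hand side equals $T_\pi N(x,y)$, which vanishes by the defining equation \eqref{eq:Lie-Nijenhuis} of a Nijenhuis operator. Hence the homomorphism property is immediate once the first assertion is established.

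For the first assertion, antisymmetry of $[-,-]_N$ is clear from the formula $[x,y]_N=[N(x),y]+[x,N(y)]-N([x,y])$, each summand being antisymmetric in $x,y$. The substance is the Jacobi identity, and I would establish it in the Nijenhuis--Richardson language rather than by a direct triple-sum computation. Writing $\pi_N=[\pi,N]_{\NR}$ for the deformed structure, Proposition \ref{pro:MCLie} reduces the claim to showing $[\pi_N,\pi_N]_{\NR}=0$. The key identity to prove is
\begin{equation}
[\pi_N,\pi_N]_{\NR}=2[\pi,T_\pi N]_{\NR},
\end{equation}
after which the conclusion follows at once from $T_\pi N=0$.

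To derive this identity I would use the graded Jacobi identity for $[-,-]_{\NR}$ together with the degrees $|\pi|=1$ and $|N|=0$. A preliminary observation is that $[\pi,[\pi,N]_{\NR}]_{\NR}=0$: applying graded Jacobi to the entries $\pi,\pi,N$ gives $[\pi,[\pi,N]_{\NR}]_{\NR}=[[\pi,\pi]_{\NR},N]_{\NR}-[\pi,[\pi,N]_{\NR}]_{\NR}$, and since $[\pi,\pi]_{\NR}=0$ this forces $[\pi,[\pi,N]_{\NR}]_{\NR}=0$; in particular $[\pi,\pi_N]_{\NR}=0$, and the same argument gives $[\pi,[\pi,N\circ N]_{\NR}]_{\NR}=0$. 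Next, applying graded Jacobi to $\pi,N,\pi_N$ and using $[\pi,\pi_N]_{\NR}=0$ yields $[\pi_N,\pi_N]_{\NR}=[\pi,[N,\pi_N]_{\NR}]_{\NR}$. Finally I substitute the defining expression \eqref{eq:LANij1} of the Nijenhuis torsion in the form $[N,[\pi,N]_{\NR}]_{\NR}=2T_\pi N-[\pi,N\circ N]_{\NR}$ and use $[\pi,[\pi,N\circ N]_{\NR}]_{\NR}=0$ to arrive at the displayed identity. The main obstacle is purely the bookkeeping in these graded-Jacobi manipulations, namely keeping the signs $(-1)^{|\cdot||\cdot|}$ straight and invoking $[\pi,\pi]_{\NR}=0$ at the right moments; once $[\pi_N,\pi_N]_{\NR}=2[\pi,T_\pi N]_{\NR}$ is secured, both assertions fall out immediately.
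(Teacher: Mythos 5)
Your proof is correct, but note that the paper does not actually prove this proposition: it is quoted from Dorfman's book \cite{Dorf}, where the standard argument is a direct verification of the Jacobi identity for $[-,-]_N$ by expanding the cyclic triple sum and cancelling terms via \eqref{eq:Lie-Nijenhuis}. Your route through the Nijenhuis--Richardson calculus is genuinely different from that classical computation, and it fits the framework of this paper particularly well: the paper defines the torsion operadically in \eqref{eq:LANij1} as $T_\pi N=\frac{1}{2}([\pi,N\circ N]_{\NR}+[N,[\pi,N]_{\NR}]_{\NR})$, and your chain of graded-Jacobi manipulations exploits exactly that definition. The steps all check out with the degrees $|\pi|=1$, $|N|=|N\circ N|=0$: from $[\pi,[\pi,c]]_{\NR}=\frac{1}{2}[[\pi,\pi]_{\NR},c]_{\NR}=0$ for any $c$ you get $[\pi,\pi_N]_{\NR}=0$ and $[\pi,[\pi,N\circ N]_{\NR}]_{\NR}=0$, then graded Jacobi on $[[\pi,N]_{\NR},\pi_N]_{\NR}$ gives $[\pi_N,\pi_N]_{\NR}=[\pi,[N,\pi_N]_{\NR}]_{\NR}$, and substituting $[N,\pi_N]_{\NR}=2T_\pi N-[\pi,N\circ N]_{\NR}$ yields $[\pi_N,\pi_N]_{\NR}=2[\pi,T_\pi N]_{\NR}$, so that both assertions drop out of $T_\pi N=0$ (Jacobi via Proposition \ref{pro:MCLie}, the homomorphism identity via \eqref{eq:TN}). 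What your approach buys is a sign-bookkeeping, coordinate-free argument; what the classical approach buys is independence from the NR machinery. One tiny slip: in the antisymmetry check, the individual summand $[N(x),y]$ is not antisymmetric in $x,y$ — antisymmetry holds for the combination $[N(x),y]+[x,N(y)]$ — or, more simply, $\pi_N=[\pi,N]_{\NR}$ lies in $C^2(\g,\g)=\Hom(\wedge^2\g,\g)$ by construction, which is in any case what you need in order to invoke Proposition \ref{pro:MCLie}.
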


\begin{defi}
Let $(\mathfrak{g},[-,-],\{-,-\})$ be a
compatible Lie algebra. A linear map $N: \g\rightarrow
\mathfrak{g}$  is called a {\bf Nijenhuis operator} on
$(\mathfrak{g},[-,-],\{-,-\})$ if $N$ is both a Nijenhuis operator on the Lie algebra
$(\mathfrak{g},[-,-])$ and a Nijenhuis operator on the Lie algebra
$(\mathfrak{g},\{-,-\})$.
\end{defi}

\begin{pro}\label{pro:Nijenhuis torsion}
 Let $(\g,[-,-],\{-,-\})$ be a compatible Lie algebra with $\pi_1(x,y)=[x,y]$ and $\pi_2(x,y)=\{x,y\}$. Let $N:\g\rightarrow \g$ be a linear map.
 Then $N$ is a Nijenhuis operator on the compatible Lie algebra $(\g,[-,-],\{-,-\})$ if and only if for any $k_1,k_2\in\K$, the Nijenhuis torsion of $N$ for the Lie algebra
 structure $\pi=k_1\pi_1+k_2\pi_2$ vanishes, i.e.
  for any $k_1,k_2\in\K$, $N$ is a Nijenhuis operator on the Lie algebra  $(\frak g,
 \Courant{-,-}=k_1[-,-]+k_2\{-,-\})$.
\end{pro}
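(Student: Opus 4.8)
The plan is to exploit the fact that, for a \emph{fixed} linear map $N$, the Nijenhuis torsion $T_\pi N$ depends $\K$-linearly on the bracket $\pi$. Starting from the Nijenhuis--Richardson description \eqref{eq:LANij1}, namely $T_\pi N=\frac12([\pi,N\circ N]_\NR+[N,[\pi,N]_\NR]_\NR)$, I would observe that $N\circ N$ does not involve $\pi$, that $[\pi,N]_\NR$ is linear in $\pi$, and that $[-,-]_\NR$ is bilinear; hence both summands are linear in the slot occupied by $\pi$, so $T_\pi N$ is linear in $\pi$ with $N$ held fixed. Equivalently, one reads off linearity directly from the explicit formula \eqref{eq:TN}, since $T_\pi N(x,y)=N(\pi(Nx,y)+\pi(x,Ny)-N\pi(x,y))-\pi(Nx,Ny)$ is manifestly linear in $\pi$.

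First I would record the resulting identity for $\pi=k_1\pi_1+k_2\pi_2$:
$$T_{k_1\pi_1+k_2\pi_2}N=k_1\,T_{\pi_1}N+k_2\,T_{\pi_2}N,\qquad \forall~k_1,k_2\in\K.$$
With this in hand the equivalence is immediate. For the forward direction, if $N$ is a Nijenhuis operator on the compatible Lie algebra $(\g,[-,-],\{-,-\})$, then by definition $T_{\pi_1}N=0$ and $T_{\pi_2}N=0$, so the displayed identity gives $T_{k_1\pi_1+k_2\pi_2}N=0$ for all $k_1,k_2$; since $\Courant{-,-}=k_1\pi_1+k_2\pi_2$ is a Lie bracket by Proposition \ref{eq:CL}, this says precisely that $N$ is a Nijenhuis operator on $(\g,\Courant{-,-})$ for every $k_1,k_2$. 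Conversely, if $T_{k_1\pi_1+k_2\pi_2}N=0$ for all $k_1,k_2$, then specializing to $(k_1,k_2)=(1,0)$ and $(k_1,k_2)=(0,1)$ yields $T_{\pi_1}N=0$ and $T_{\pi_2}N=0$, which is exactly the definition of $N$ being a Nijenhuis operator on the compatible Lie algebra.

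The computation itself is routine; the only point that deserves care---and the one place where intuition can mislead---is the worry that a ``mixed'' torsion term pairing $\pi_1$ against $\pi_2$ through $N$ might survive and force an extra compatibility condition on $N$. The resolution is precisely the linearity observation above: $T_\pi N$ is homogeneous of degree one in $\pi$ while being quadratic in $N$, so no cross-term between $\pi_1$ and $\pi_2$ can appear, and the vanishing of the two ``diagonal'' torsions $T_{\pi_1}N$ and $T_{\pi_2}N$ is both necessary and sufficient. Thus no hypothesis beyond $N$ being Nijenhuis for each bracket separately is needed.
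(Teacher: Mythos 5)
Your proof is correct and follows essentially the same route as the paper's: the central identity $T_{k_1\pi_1+k_2\pi_2}N=k_1T_{\pi_1}N+k_2T_{\pi_2}N$ is exactly the display in the paper's proof (stated there as ``by a direct calculation''), and the equivalence follows from it in the same way. Your added justification of the linearity of $T_\pi N$ in $\pi$ via \eqref{eq:LANij1} and \eqref{eq:TN}, and the explicit specialization to $(k_1,k_2)=(1,0)$ and $(0,1)$ for the converse, merely spell out what the paper leaves implicit.
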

\begin{proof}
  By a direct calculation, we have
\begin{equation*}\label{eq:Nijenhui torsion}
T_{\pi}N=k_1T_{\pi_1}N+k_2T_{\pi_2}N.
\end{equation*}
Then $T_{\pi}N=0$ if and only if $T_{\pi_1}N=T_{\pi_2}N=0$. Thus $T_{\pi}N=0$ if and only if $N$ is a Nijenhuis operator on the compatible Lie algebra $(\g,[-,-],\{-,-\})$.
\end{proof}

\begin{pro}\label{pro:Nijenhuis operator property}
 Let $(\mathfrak{g},[-,-],\{-,-\})$ be a compatible Lie algebra and $N:\g\rightarrow \g$   a Nijenhuis operator. Then $(\g,[-,-]_N,\{-,-\}_N)$ is also a compatible Lie algebra and $N$ is a compatible Lie algebra
 homomorphism from
 $(\g,[-,-]_N,\{-,-\}_N)$  to $(\g,[-,-],\{-,-\})$.
\end{pro}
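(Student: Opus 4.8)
The plan is to reduce the whole statement to the single-bracket Nijenhuis theory of Proposition~\ref{eq:Nijenhuis property}, using the linear-combination criterion of Proposition~\ref{eq:CL} to handle compatibility.

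First I would unwind the hypothesis. By definition, $N$ being a Nijenhuis operator on $(\g,[-,-],\{-,-\})$ means exactly that $N$ is a Nijenhuis operator on each of the Lie algebras $(\g,[-,-])$ and $(\g,\{-,-\})$. Applying Proposition~\ref{eq:Nijenhuis property} separately to these two Lie algebras shows that $(\g,[-,-]_N)$ and $(\g,\{-,-\}_N)$ are both Lie algebras and that $N$ is a Lie algebra homomorphism from $(\g,[-,-]_N)$ to $(\g,[-,-])$ and from $(\g,\{-,-\}_N)$ to $(\g,\{-,-\})$. By the definition of a homomorphism of compatible Lie algebras, this already establishes that $N$ is a compatible Lie algebra homomorphism, so the only remaining task is to prove that $[-,-]_N$ and $\{-,-\}_N$ are compatible.

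For this I would appeal to Proposition~\ref{eq:CL}: it suffices to check that $k_1[-,-]_N+k_2\{-,-\}_N$ is a Lie bracket for all $k_1,k_2\in\K$. Fixing $k_1,k_2$ and writing $\Courant{-,-}=k_1[-,-]+k_2\{-,-\}$ (a Lie bracket since the original structure is compatible), Proposition~\ref{pro:Nijenhuis torsion} guarantees that $N$ is a Nijenhuis operator on $(\g,\Courant{-,-})$. The crucial---and essentially only---computation is that the deformation construction is linear in the bracket: expanding
\begin{equation*}
\Courant{x,y}_N=\Courant{N(x),y}+\Courant{x,N(y)}-N(\Courant{x,y})
\end{equation*}
and substituting $\Courant{-,-}=k_1[-,-]+k_2\{-,-\}$ yields $\Courant{x,y}_N=k_1[x,y]_N+k_2\{x,y\}_N$. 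Proposition~\ref{eq:Nijenhuis property}, now applied to the Nijenhuis operator $N$ on $(\g,\Courant{-,-})$, shows that $\Courant{-,-}_N=k_1[-,-]_N+k_2\{-,-\}_N$ is a Lie bracket; since $k_1,k_2$ are arbitrary, Proposition~\ref{eq:CL} gives that $(\g,[-,-]_N,\{-,-\}_N)$ is compatible.

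I do not expect any genuine difficulty here: the proof is a clean two-step reduction, and the single point needing care is the linearity identity $\Courant{-,-}_N=k_1[-,-]_N+k_2\{-,-\}_N$, i.e.\ that passing to the deformed bracket commutes with taking linear combinations of Lie brackets. Once that is in hand, the compatibility of the deformed structure follows directly from the already-established single-bracket theory, with no need to revisit the compatibility condition \eqref{eq:cl} by hand.
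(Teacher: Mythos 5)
Your proposal is correct and follows essentially the same route as the paper's proof: reduce via Proposition~\ref{pro:Nijenhuis torsion} to the single bracket $\Courant{-,-}=k_1[-,-]+k_2\{-,-\}$, verify the linearity identity $\Courant{-,-}_N=k_1[-,-]_N+k_2\{-,-\}_N$, and then invoke Proposition~\ref{eq:Nijenhuis property} together with Proposition~\ref{eq:CL}. The only cosmetic difference is that you extract the homomorphism property separately at the start by applying Proposition~\ref{eq:Nijenhuis property} to each bracket, while the paper obtains everything at once from the combined-bracket argument.
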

\begin{proof}
  Let  $N:\g\rightarrow \g$ be a Nijenhuis operator on the compatible Lie algebra $(\mathfrak{g},[-,-],\{-,-\})$. By Proposition \ref{pro:Nijenhuis torsion}, $N$ is a Nijenhuis operator on the Lie algebra $(\g,\Courant{-,-})$, where the bracket $\Courant{-,-}$ is given by \eqref{eq:Courant bracket}. By a direct calculation, we have
  \begin{eqnarray*}
    \Courant{x,y}_N&=&\Courant{N(x),y}+\Courant{x,N(y)}-N\Courant{x,y}\\
    &=&k_1[x,y]_N+k_2\{x,y\}_N.
  \end{eqnarray*}
By Proposition \ref{eq:Nijenhuis property},
$(\g,[-,-]_N,\{-,-\}_N)$ is a compatible Lie algebra and $N$ is a
 compatible
Lie algebra
 homomorphism from $(\g,[-,-]_N,\{-,-\}_N)$  to $(\g,[-,-],\{-,-\})$.
\end{proof}

\begin{defi}
An infinitesimal deformation $(\g,[-,-]_t,\{-,-\}_t)$ of a compatible Lie algebra $(\g,[-,-],\{-,-\})$ generated by $(\omega_1,\omega_2)$ is {\bf trivial} if   there exists $N\in\gl(\g)$ such that ${\Id}+tN:(\g,[-,-]_t,\{-,-\}_t)\longrightarrow (\g,[-,-],\{-,-\})$ is a compatible Lie algebra homomorphism.
\end{defi}

One can deduce that $(\mathfrak{g},[-,-]_t,\{-,-\}_t)$ is a  trivial infinitesimal deformation if and only if
\begin{eqnarray}
\omega_1(x,y)&=&[x, N(y)]+[N(x), y]-N([x, y]),\label{Nij1}\\
N(\omega_1(x,y))&=&[N(x), N(y)],\label{Nij2}\\
\omega_2(x,y)&=&\{x, N(y)\}+\{N(x), y\}-N(\{x, y\}),\label{Nij3}\\
N(\omega_2(x,y))&=&\{N(x), N(y)\}.\label{Nij4}
\end{eqnarray}
By $(\ref{Nij1})$ and $(\ref{Nij2})$, $N$ is a Nijenhuis operator on the Lie algebra $(\g,[-,-])$.  By $(\ref{Nij3})$ and $(\ref{Nij4})$, $N$ is a Nijenhuis operator on the Lie algebra $(\g,\{-,-\})$. Thus $N$ is a Nijenhuis operator on the compatible Lie algebra $(\g,[-,-],\{-,-\})$.

We have seen that a trivial infinitesimal deformation of a compatible Lie algebra gives rise to a Nijenhuis operator. In fact, the converse is also true.

\begin{thm}\label{thm:trivial def}
  Let $N$ be a Nijenhuis operator on a compatible Lie algebra $(\g,[-,-],\{-,-\})$. Then an infinitesimal
  deformation of  $(\g,[-,-],\{-,-\})$ can be
  obtained by putting
  \begin{eqnarray}
\omega_1(x,y)&=&[x, N(y)]+[N(x), y]-N[x, y],\\
\omega_2(x,y)&=&\{x, N(y)\}+\{N(x), y\}-N\{x, y\},
\end{eqnarray}
  for any $x,y\in\g$.   Furthermore, this infinitesimal
deformation is trivial.
\end{thm}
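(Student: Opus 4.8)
The plan is to verify directly that the pair $(\omega_1,\omega_2)$ satisfies the six identities \eqref{eq:2-closed omega bracket}, which by Theorem \ref{pro:lsymNR} characterize exactly when $(\omega_1,\omega_2)$ generates an infinitesimal deformation, and then to check the triviality conditions \eqref{Nij1}--\eqref{Nij4}. The starting observation is that, with $\pi_1(x,y)=[x,y]$ and $\pi_2(x,y)=\{x,y\}$, the two maps in the statement are precisely $\omega_1=[\pi_1,N]_\NR$ and $\omega_2=[\pi_2,N]_\NR$; that is, $\omega_1=[-,-]_N$ and $\omega_2=\{-,-\}_N$ in the notation of the deformed brackets introduced just above.

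First I would dispose of the three cocycle identities in the first row of \eqref{eq:2-closed omega bracket}. These follow purely formally from the graded Jacobi identity for $[-,-]_\NR$ together with the Maurer-Cartan equations \eqref{eq:Maurer-Cartatn 1}, namely $[\pi_1,\pi_1]_\NR=[\pi_2,\pi_2]_\NR=[\pi_1,\pi_2]_\NR=0$. Writing the graded Jacobi identity in the derivation form $[a,[b,c]_\NR]_\NR=[[a,b]_\NR,c]_\NR+(-1)^{|a||b|}[b,[a,c]_\NR]_\NR$ and using $|\pi_1|=|\pi_2|=1$, one gets $[\pi_1,\omega_1]_\NR=[\pi_1,[\pi_1,N]_\NR]_\NR=\half[[\pi_1,\pi_1]_\NR,N]_\NR=0$ and likewise $[\pi_2,\omega_2]_\NR=0$, while the mixed identity collapses to $[\pi_1,\omega_2]_\NR+[\pi_2,\omega_1]_\NR=[[\pi_1,\pi_2]_\NR,N]_\NR=0$. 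This is exactly the point at which genuine compatibility enters, through $[\pi_1,\pi_2]_\NR=0$.

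For the three compatibility identities in the second row of \eqref{eq:2-closed omega bracket}, rather than recompute I would invoke Proposition \ref{pro:Nijenhuis operator property}: since $N$ is a Nijenhuis operator on the compatible Lie algebra, the triple $(\g,[-,-]_N,\{-,-\}_N)=(\g,\omega_1,\omega_2)$ is again a compatible Lie algebra. Applying Theorem \ref{pro:lsymNR} to this compatible Lie algebra is precisely the statement that $[\omega_1,\omega_1]_\NR=[\omega_2,\omega_2]_\NR=[\omega_1,\omega_2]_\NR=0$. Combined with the first row, all of \eqref{eq:2-closed omega bracket} holds, so $(\omega_1,\omega_2)$ generates an infinitesimal deformation.

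Finally, for triviality I would check the conditions \eqref{Nij1}--\eqref{Nij4}, which (as recorded just before the theorem) are equivalent to $\Id+tN$ being a compatible Lie algebra homomorphism from $(\g,[-,-]_t,\{-,-\}_t)$ to $(\g,[-,-],\{-,-\})$. Conditions \eqref{Nij1} and \eqref{Nij3} hold by the very definition of $\omega_1$ and $\omega_2$. Conditions \eqref{Nij2} and \eqref{Nij4} read $N(\omega_1(x,y))=[N(x),N(y)]$ and $N(\omega_2(x,y))=\{N(x),N(y)\}$; by the torsion formula \eqref{eq:TN} these are exactly $T_{\pi_1}N=0$ and $T_{\pi_2}N=0$, which hold since $N$ is a Nijenhuis operator on each of $(\g,[-,-])$ and $(\g,\{-,-\})$ (equivalently, they are the homomorphism part already supplied by Proposition \ref{pro:Nijenhuis operator property}). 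Hence the deformation is trivial. There is no deep obstacle here: the only delicate point is the sign bookkeeping in the graded Jacobi identity for the first-row cocycle conditions, the genuine content of compatibility being concentrated in the mixed terms, while everything in the second row and the torsion-vanishing conditions is a direct appeal to Proposition \ref{pro:Nijenhuis operator property} and \eqref{eq:TN}. (As an alternative, one could reduce the whole statement to the classical single-Lie-algebra case via Proposition \ref{pro:Nijenhuis torsion}, applying the scalar Nijenhuis-deformation result to each bracket $k_1[-,-]+k_2\{-,-\}$.)
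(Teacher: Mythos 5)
Your proof is correct and follows essentially the same route as the paper: recognize $(\omega_1,\omega_2)=\jetd^1 N$ so the cocycle conditions hold (your graded-Jacobi computation is exactly the content of $\jetd^2\circ\jetd^1=0$ from Theorem \ref{thm:cohomology of CLA}), invoke Proposition \ref{pro:Nijenhuis operator property} to get that $(\g,\omega_1,\omega_2)$ is a compatible Lie algebra for the second row of \eqref{eq:2-closed omega bracket}, and verify \eqref{Nij1}--\eqref{Nij4} via the torsion formula for triviality. The only difference is that you spell out steps the paper leaves as citations or as ``straightforward to verify,'' which is fine.
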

\begin{proof}
Since $(\omega_1,\omega_2)=\jetd^1 N$, $(\omega_1,\omega_2)$ is a $2$-cocycle. Since $N$ is a Nijenhuis operator on the compatible Lie algebra $(\g,[-,-],\{-,-\})$, $\omega_1(x,y)=[x,y]_N$ and $\omega_2(x,y)=\{x,y\}_N$, by Proposition \ref{pro:Nijenhuis operator property}, $(\g,\omega_1,\omega_2)$ is also a compatible Lie algebra. Thus \eqref{eq:2-closed omega bracket} holds and then $(\g,[-,-]_t,\{-,-\}_t)$ is an infinitesimal
deformation of  $(\g,[-,-],\{-,-\})$.

It is straightforward to verify that
\eqref{Nij1}-\eqref{Nij4} are satisfied. Thus this infinitesimal
deformation is trivial.
\end{proof}

\section{Cohomologies  of compatible Lie algebras: general theory}\label{sec:cohomology II}

In this section, we give   cohomologies of
compatible Lie algebras with coefficients in arbitrary
representations, and use the second cohomology group
to classify abelian extensions of a compatible Lie algebra. We also
introduce the  reduced cohomology of a compatible Lie algebra and
give its connection to Dubrovin and Zhang's cohomology of
compatible Poisson structures.

\subsection{Cohomology of a compatible Lie algebra  with coefficients in arbitrary representation}\label{sec:cohomology-gc}
In the following, we give the cohomology of a compatible Lie
algebra  associated to arbitrary representation.

Let $(\mathfrak{g},[-,-],\{-,-\})$ be a compatible Lie algebra with $\pi_1(x,y)=[x,y]$ and $\pi_2(x,y)=\{x,y\}$ and $(V;\rho,\mu)$  a representation of $(\mathfrak{g},[-,-],\{-,-\})$.

\begin{pro}
 With the above notations,  the pair $(\hat{\pi}_1+\hat{\rho},\hat{\pi}_2+\hat{\mu})$ is a Maurer-Cartan element of the bidifferential graded Lie algebra $(C^*(\g\oplus V,\g\oplus V),[-,-]_\NR,\partial_1=0,\partial_2=0)$, i.e.,
\begin{eqnarray}\label{eq:Maurer-Cartatn 12}
  [\hat{\pi}_1+\hat{\rho},\hat{\pi}_1+\hat{\rho}]_{\NR}=0,\quad[\hat{\pi}_2+\hat{\mu},\hat{\pi}_2+\hat{\mu}]_{\NR}=0,\quad [\hat{\pi}_1+\hat{\rho},\hat{\pi}_2+\hat{\mu}]_{\NR}=0.
\end{eqnarray}
\end{pro}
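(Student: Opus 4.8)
The plan is to reduce the three identities in \eqref{eq:Maurer-Cartatn 12} to statements we have already established for single Lie algebras together with a bidegree bookkeeping argument. First I would observe that by Proposition \ref{pro:MC}, applied separately to the Lie algebra $(\g,\pi_1)$ with its representation $(V;\rho)$ and to the Lie algebra $(\g,\pi_2)$ with its representation $(V;\mu)$, the first two equations
$$[\hat{\pi}_1+\hat{\rho},\hat{\pi}_1+\hat{\rho}]_{\NR}=0,\qquad [\hat{\pi}_2+\hat{\mu},\hat{\pi}_2+\hat{\mu}]_{\NR}=0$$
hold immediately. So the only genuinely new content is the mixed equation $[\hat{\pi}_1+\hat{\rho},\hat{\pi}_2+\hat{\mu}]_{\NR}=0$, and this is where the compatibility hypotheses on the representation, in particular \eqref{eq:r3}, must enter.

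To handle the mixed term I would expand by bilinearity of the Nijenhuis--Richardson bracket:
$$[\hat{\pi}_1+\hat{\rho},\hat{\pi}_2+\hat{\mu}]_{\NR}=[\hat{\pi}_1,\hat{\pi}_2]_{\NR}+[\hat{\pi}_1,\hat{\mu}]_{\NR}+[\hat{\rho},\hat{\pi}_2]_{\NR}+[\hat{\rho},\hat{\mu}]_{\NR}.$$
The key simplification is that by Lemma \ref{lem:bidegree perserve} these four summands live in different bidegree components: $[\hat{\pi}_1,\hat{\pi}_2]_{\NR}$ has bidegree $2|0$, the two cross terms $[\hat{\pi}_1,\hat{\mu}]_{\NR}+[\hat{\rho},\hat{\pi}_2]_{\NR}$ contribute to the $1|1$ component, and $[\hat{\rho},\hat{\mu}]_{\NR}$ has bidegree $0|2$. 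Hence the single mixed equation is equivalent to the vanishing of each homogeneous piece separately. The $2|0$ piece vanishes precisely because $(\g,\pi_1,\pi_2)$ is a compatible Lie algebra, i.e. it is the content of the identity $[\pi_1,\pi_2]_{\NR}=0$ established in Theorem \ref{pro:lsymNR} via \eqref{eq:cl}.

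The remaining work is to verify the $1|1$ and $0|2$ components. For the $0|2$ part, I would evaluate $[\hat{\rho},\hat{\mu}]_{\NR}$ on arguments of the form $(x,u),(y,v)$ (two elements of $\g\oplus V$ projecting appropriately) and check that its $V$-component reproduces the condition \eqref{eq:r3} together with the fact that $\rho$ and $\mu$ are representations; this is a direct unwinding of the lift formulas \eqref{semi-direct-1}--\eqref{semi-direct-2} for $\hat{\rho}$ and $\hat{\mu}$. For the $1|1$ part, evaluating $[\hat{\pi}_1,\hat{\mu}]_{\NR}+[\hat{\rho},\hat{\pi}_2]_{\NR}$ on mixed arguments and collecting the $V$-valued output again yields exactly the compatibility relation \eqref{eq:r3}. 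The main obstacle I anticipate is purely organizational rather than conceptual: one must keep the sign conventions in the Nijenhuis--Richardson bracket and the unshuffle signs straight when expanding $P\circ Q$ on arguments drawn from $\g\oplus V$, and confirm that the $V$-component of each bracket collapses to the three defining axioms of a representation of a compatible Lie algebra. Since no new structural idea is needed beyond Proposition \ref{pro:MC} and the bidegree decomposition, I would present this as a routine but careful computation, matching each homogeneous component to one of the representation axioms.
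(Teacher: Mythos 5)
Your handling of the first two identities is exactly the paper's: apply Proposition \ref{pro:MC} to $(\g,\pi_1)$ with $(V;\rho)$ and to $(\g,\pi_2)$ with $(V;\mu)$. The gap is in the mixed identity, and it is genuine: your bidegree bookkeeping is wrong. With the paper's conventions (stated right after \eqref{semi-direct-1}--\eqref{semi-direct-2}, and restated in Section \ref{sec:cohomology III}), the lift of a map $\g\otimes V\to V$ has bidegree $1|0$, not $0|1$; thus $||\hat{\rho}||=||\hat{\mu}||=||\hat{\pi}_1||=||\hat{\pi}_2||=1|0$, and by Lemma \ref{lem:bidegree perserve} all four summands $[\hat{\pi}_1,\hat{\pi}_2]_{\NR}$, $[\hat{\pi}_1,\hat{\mu}]_{\NR}$, $[\hat{\rho},\hat{\pi}_2]_{\NR}$, $[\hat{\rho},\hat{\mu}]_{\NR}$ have the \emph{same} bidegree $2|0$. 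There is no decomposition into $2|0$, $1|1$ and $0|2$ components, so your reduction ``the mixed equation is equivalent to the vanishing of each homogeneous piece separately'' has no basis.

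Worse, the separate vanishings you then assert are false. Evaluating on $((x,0),(y,0),(0,v))$ one finds
\begin{eqnarray*}
\big([\hat{\pi}_1,\hat{\mu}]_{\NR}+[\hat{\rho},\hat{\pi}_2]_{\NR}\big)\big((x,0),(y,0),(0,v)\big)&=&\big(0,\,\rho(\{x,y\})v+\mu([x,y])v\big),\\
{[\hat{\rho},\hat{\mu}]}_{\NR}\big((x,0),(y,0),(0,v)\big)&=&\big(0,\,[\rho(y),\mu(x)]v-[\rho(x),\mu(y)]v\big),
\end{eqnarray*}
and by \eqref{eq:r3} each of these is the negative of the other; neither is zero for a general representation of a compatible Lie algebra (already the adjoint representation gives nonzero values). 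Only the \emph{sum} of the three cross terms vanishes, and its vanishing is precisely condition \eqref{eq:r3}. Since all three cross terms are maps of the same shape $\wedge^2\g\otimes V\to V$, no grading whatsoever can separate them, so the strategy of matching each summand to a separate axiom cannot be repaired; you must group them. The correct coarse splitting --- and essentially what the paper establishes by one direct evaluation on general triples $((x_1,v_1),(x_2,v_2),(x_3,v_3))$ --- is this: the whole bracket $[\hat{\pi}_1+\hat{\rho},\hat{\pi}_2+\hat{\mu}]_{\NR}$ has bidegree $2|0$, hence is determined by its $\g$-valued restriction to $\wedge^3\g$, which equals $[\pi_1,\pi_2]_{\NR}$ and vanishes by the compatibility condition \eqref{eq:cl} (as in Theorem \ref{pro:lsymNR}), together with its $V$-valued restriction to $\wedge^2\g\otimes V$, which equals the sum of the three cross terms and vanishes by \eqref{eq:r3}.
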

\begin{proof}
  Since $(\g,[-,-])$ is a Lie algebra and $(V;\rho)$ is a representation of $(\g,[-,-])$, by Proposition \ref{pro:MC}, we have $$[\hat{\pi}_1+\hat{\rho},\hat{\pi}_1+\hat{\rho}]_{\NR}=0.$$
  Similarly, since $(\g,\{-,-\})$ is a Lie algebra and $(V;\mu)$ is a representation of $(\g,\{-,-\})$, we have
  $$[\hat{\pi}_2+\hat{\mu},\hat{\pi}_2+\hat{\mu}]_{\NR}=0.$$
 For $x_1,x_2,x_3\in\g$ and $v_1,v_2,v_3\in V$, by \eqref{eq:cl} and \eqref{eq:r3} we have
  \begin{eqnarray*}
  &&[\hat{\pi}_1+\hat{\rho},\hat{\pi}_2+\hat{\mu}]_{\NR}((x_1,v_1),(x_2,v_2),(x_3,v_3))\\
  &=&(\hat{\pi}_1+\hat{\rho})\big((\hat{\pi}_2+\hat{\mu})((x_1,v_1),(x_2,v_2)),(x_3,v_3)\big)+(\hat{\pi}_1+\hat{\rho})\big((\hat{\pi}_2+\hat{\mu})((x_2,v_2),(x_3,v_3)),(x_1,v_1)\big)\\
   &&+(\hat{\pi}_1+\hat{\rho})\big((\hat{\pi}_2+\hat{\mu})((x_3,v_3),(x_1,v_1)),(x_2,v_2)\big)+(\hat{\pi}_2+\hat{\mu})\big((\hat{\pi}_1+\hat{\rho})((x_1,v_1),(x_2,v_2)),(x_3,v_3)\big)\\
   &&+(\hat{\pi}_2+\hat{\mu})\big((\hat{\pi}_1+\hat{\rho})((x_2,v_2),(x_3,v_3)),(x_1,v_1)\big)+(\hat{\pi}_2+\hat{\mu})\big((\hat{\pi}_1+\hat{\rho})((x_3,v_3),(x_1,v_1)),(x_2,v_2)\big)\\
   &=&\Big(\pi_1(\pi_2(x_1,x_2),x_3)+\pi_1(\pi_2(x_2,x_3),x_1)+\pi_1(\pi_2(x_3,x_1),x_2)\\
   &&+\pi_2(\pi_1(x_1,x_2),x_3)+\pi_2(\pi_1(x_2,x_3),x_1)+\pi_2(\pi_1(x_3,x_1),x_2),\\&&\rho(\pi_2(x_1,x_2))v_3+\mu(\pi_1(x_1,x_2))v_3-[\rho(x_1),\mu(x_2)](v_3)\\
   &&-[\rho(x_2),\mu(x_1)](v_3)+\rho(\pi_2(x_2,x_3))v_1+\mu(\pi_1(x_2,x_3))v_1\\
   &&-[\rho(x_2),\mu(x_3)](v_1)-[\rho(x_3),\mu(x_2)](v_1)+\rho(\pi_2(x_3,x_1))v_2\\
   &&+\mu(\pi_1(x_3,x_1))v_2-[\rho(x_3),\mu(x_1)](v_2)-[\rho(x_1),\mu(x_3)](v_2)\Big)\\
   &=&(0,0),
  \end{eqnarray*}
  which implies that $[\hat{\pi}_1+\hat{\rho},\hat{\pi}_2+\hat{\mu}]_{\NR}=0.$ Thus $(\hat{\pi}_1+\hat{\rho},\hat{\pi}_2+\hat{\mu})$ is a Maurer-Cartan element of the bidifferential graded Lie algebra $(C^*(\g\oplus V,\g\oplus V),[-,-]_\NR,\partial_1=0,\partial_2=0)$.
\end{proof}

By Proposition \ref{pro:CE-operator}, the Chevalley-Eilenberg  coboundary operators for the Lie algebra $(\g, [-,-])$ with coefficients in $(\g;\rho)$ and the Lie algebra $(\g, \{-,-\})$ with coefficients in $(\g;\mu)$ are given by
 $$\dM^n_{\pi_1+\rho}f=(-1)^{n-1}[\hat{\pi}_1+\hat{\rho},f]_\NR,\quad \dM^n_{\pi_2+\mu}f=(-1)^{n-1}[\hat{\pi}_2+\hat{\mu},f]_\NR,\quad \forall~f\in C^n(\g,V).$$
 By the graded Jacobi identity, \eqref{eq:Maurer-Cartatn 12} is also equivalent to
\begin{equation}\label{eq:MC-direct sum}
\dM^{n+1}_{\pi_1+\rho}\circ\dM^n_{\pi_1+\rho}=0,\quad \dM^{n+1}_{\pi_2+\mu}\circ\dM^n_{\pi_2+\mu}=0,\quad \dM^{n+1}_{\pi_1+\rho}\circ \dM^n_{\pi_2+\mu}+\dM^{n+1}_{\pi_2+\mu}\circ \dM^n_{\pi_1+\rho}=0.
\end{equation}

Define the space of $0$-cochains $\huaC^0(\g,V)$ by
$$\huaC^0(\g,V)=\{v\in V\mid \rho(x)v=\mu(x)v,~\forall~x\in\g\}.$$
For $n\ge 1$, define the space of $n$-cochains $\huaC^n(\g,V)$ by
$$\huaC^n(\g,V)=\underbrace{C^n(\g,V)\oplus C^n(\g,V)\cdots\oplus C^n(\g,V)}_{\mbox{$n$~{copies}}}.$$

Note that $\dM^0_{\pi_1+\rho}(v)(x)=\rho(x)v$ and $\dM^0_{\pi_2+\mu}(v)(x)=\mu(x)v$. Define $\delta^0:\huaC^{0}(\g,V)\longrightarrow \huaC^{1}(\g,V)$  by
$$\delta^0 v=\dM^0_{\pi_1+\rho} v=\dM^0_{\pi_2+\mu} v,\quad v\in \huaC^0(\g,V),$$
define $\delta^1:\huaC^{1}(\g,V)\longrightarrow \huaC^{2}(\g,V)$  by
$$\delta^1 f=(\dM^1_{\pi_1+\rho} f,\dM^1_{\pi_2+\mu} f),\quad f\in \huaC^1(\g,V)$$
and
define $\delta^n:\huaC^{n}(\g,V)\longrightarrow \huaC^{n+1}(\g,V)$  by
\begin{eqnarray*}
    \delta^n(\omega_1,\cdots,\omega_{n})=(\dM^n_{\pi_1+\rho}\omega_1,\cdots,\underbrace{\dM^n_{\pi_2+\mu}\omega_{i-1}+\dM^n_{\pi_1+\rho}\omega_i}_i,\cdots,\dM^n_{\pi_2+\mu}\omega_{n}),
\end{eqnarray*}
where $(\omega_1,\omega_2,\cdots,\omega_n)\in\huaC^n(\g,V)$ and $2\leq i\leq n$.

Similar to the proof of Theorem \ref{thm:cohomology of CLA} or using the relation \eqref{eq:MC-direct sum}, we also have $\delta^{n+1}\circ\delta^n=0$. Thus $(\huaC^*(\g,V)=\oplus_{n=0}^\infty \huaC^n(\g,V),\delta^*)$ is a cochain complex.

\begin{defi}Let $(\mathfrak{g},[-,-],\{-,-\})$ be a compatible Lie algebra and $(V;\rho,\mu)$  a
representation.
  The cohomology of the cochain complex $(\huaC^*(\g,V),\delta^*)$  is  called {\bf the cohomology of  $(\g,[-,-],\{-,-\})$ with coefficients in $V$}.  We denote the $n$-th cohomology group by  $\huaH^n(\g,V)$.
\end{defi}



\subsection{Abelian extensions of compatible Lie algebras}

In this subsection, we study abelian extensions of a compatible Lie
algebra and show that they are classified by the second
cohomology group.
\begin{defi}\label{defi:isomorphic}
\begin{itemize}
\item[\rm(1)] Let $(\g,[-,-]_\g,\{-,-\}_\g)$, $(\h,[-,-]_\h,\{-,-\}_\h)$, $(\hat{\g},[-,-]_{\hat{\g}},\{-,-\}_{\hat{\g}})$ be  compatible Lie algebras. An extension of compatible Lie algebras is a short exact sequence of compatible Lie algebras:
$$ 0\longrightarrow\h\stackrel{\id}{\longrightarrow}\hat{\g}\stackrel{\p}\longrightarrow\g\longrightarrow0.$$
We say that $\hat{\g}$ is an  {\bf extension} of $\g$ by $\h$. An
extension of $\g$ by $\h$ is called {\bf abelian} if the
compatible Lie algebra structure on $\h$ is trivial. \item[\rm(2)]
A {\bf linear section} of $\hat{\g}$ is a linear map
$\sigma:\g\rightarrow\hat{\g}$ such that $\p\circ \sigma=\Id$.
\item[\rm(3)] Two extensions $\hat{\g}_1$ and $\hat{\g}_2$ of $\g$
by $\h$   are said to be {\bf isomorphic} if there exists a
compatible Lie algebra   isomorphism
$\theta:\hat{\g}_1\longrightarrow \hat{\g}_2$ such that we have
the following commutative diagram:
\begin{equation}\label{diagram1}
\begin{array}{ccccccccc}
0&\longrightarrow& \h&\stackrel{\id_1}\longrightarrow&\hat{\g}_1&\stackrel{\p_1}\longrightarrow&\g&\longrightarrow&0\\
 &            &\Big\|&       &\theta\Big\downarrow&          &\Big\|& &\\
 0&\longrightarrow&\h&\stackrel{\id_2}\longrightarrow&\hat{\g}_2&\stackrel{\p_2}\longrightarrow&\g&\longrightarrow&0.
 \end{array}\end{equation}
\end{itemize}
\end{defi}

Let $\hat{\g}$ be an extension of $\g$ by $\h$, and
$\sigma:\g\rightarrow\hat{\g}$ a linear section. Define
$\omega_1,\omega_2:\g\otimes\g\rightarrow\h$ and
$\rho,\mu:\g\longrightarrow\gl(\h)$  respectively   by
\begin{eqnarray}
  \label{eq:str1}\omega_1(x,y)&=&[\sigma(x),\sigma(y)]_{\hat{\g}}-\sigma[x,y]_{\g},\\
  \label{eq:str2}\omega_2(x,y)&=&\{\sigma(x),\sigma(y)\}_{\hat{\g}}-\sigma\{x,y\}_{\g},\\
 \label{eq:str3} \rho(x)v&=&[\sigma(x),v]_{\hat{\g}},\\
 \label{eq:str4} \mu(x)v&=&\{\sigma(x),v\}_{\hat{\g}},\quad \forall x,y\in\g,~v\in \h.
\end{eqnarray}

Given a linear section, we have $\hat{\g}\cong\g\oplus \h$ as
vector spaces, and the compatible Lie algebra structure on
$\hat{\g}$ can be transferred to  $\g\oplus\h$:
\begin{eqnarray}
\label{eq:w-bracket1}[(x,u),(y,v)]_{(\omega_1,\rho)}&=&([x,y]_{\g},\rho(x)v-\rho(y)u+\omega_1(x,y)+[u,v]_\h),\\
\label{eq:w-bracket2}\{(x,u),(y,v)\}_{(\omega_2,\mu)}&=&(\{x,y\}_{\g},\mu(x)v-\mu(y)u+\omega_2(x,y)+\{u,v\}_\h).
\end{eqnarray}
Then $(\hat{\g}\cong\g\oplus
\h,[-,-]_{(\omega_1,\rho)},\{-,-\}_{(\omega_2,\mu)})$
is a compatible Lie algebra.

 Now we assume that the compatible Lie algebra structure on $\h$ is trivial, it is routine to check that  $(\h;\rho,\mu)$ is a representation of the compatible Lie algebra $(\g,[-,-]_\g,\{-,-\}_\g)$ and $(\omega_1,\omega_2)\in \huaC^{2}(\g,\h)$ is a $2$-cocycle of $\g$ with coefficients in the representation $(\h;\rho,\mu)$.

 If we choose another linear section $\sigma':\g\rightarrow \h$ and let $\omega_1',\omega_2'\in\Hom(\g\otimes \g ,\h)$ be the corresponding $2$-cocycle given by
$$\omega_1'(x,y)=[\sigma'(x),\sigma'(y)]_{\hat{\g}}-\sigma'([x,y]_\g),\quad \omega_2'(x,y)=\{\sigma'(x),\sigma'(y)\}_{\hat{\g}}-\sigma'(\{x,y\}_\g).$$

Set $\varphi=\sigma-\sigma'$. Since $\p\circ \sigma=\p\circ \sigma'=\Id$, we have
\begin{eqnarray*}
  \p\circ\varphi= \p\circ\sigma-\p\circ\sigma'=0,
\end{eqnarray*}
which implies that $\varphi\in \Hom(\g,\h)$.

Since $\h$ is abelian, thus
\begin{eqnarray*}
\rho(x)v&=&[\sigma(x), v]_{\hat{g}}=[\sigma'(x), v]_{\hat{g}},\\
\mu(x)v&=&\{\sigma(x),v\}_{\hat{g}}=\{\sigma'(x), v\}_{\hat{g}}.
\end{eqnarray*}
This means that the representation $(\h;\rho,\mu)$ of the compatible Lie algebra $\g$ does not depend on the choice of linear sections.
Then we obtain that
\begin{eqnarray*}
 (\omega_1-\omega_1')(x,y)&=&\rho(x)\varphi(y)-\rho(y)\varphi(x)-\varphi([x,y]_\g)=\dM^1_{\pi_1+\rho}\varphi(x,y),\\
 (\omega_2-\omega_2')(x,y)&=&\mu(x)\varphi(y)-\mu(y)\varphi(x)-\varphi(\{x,y\}_\g)=\dM^1_{\pi_2+\mu}\varphi(x,y).
\end{eqnarray*}
The classical argument in Eilenberg-MacLane cohomology
(\cite{Group extension}) may easily be extended to a proof of the
following   result.
\begin{thm}
Let $(\g,[-,-]_\g,\{-,-\}_\g)$ be a compatible Lie algebra. Assume
that the compatible Lie algebra structure on $\h$ is trivial and
$(\rho,\mu)$ is a representation of $\g$ on $\h$. Then the abelian
extension of $\g$ by $\h$  determined by a $2$-cocycle
$(\omega_1,\omega_2)\in \huaC^{2}(\g,\h) $ builds a one-to-one
correspondence between the   isomorphism
classes of abelian extensions of $\g$ by $\h$ and the cohomology
classes in $\huaH^2(\g,\h)$.
\end{thm}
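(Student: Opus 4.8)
The plan is to establish the claimed bijection in the standard way, following the classical Eilenberg--MacLane scheme adapted to the compatible setting, using the two-step structure of the preceding discussion. We have already seen that any abelian extension, together with a choice of linear section $\sigma$, produces via \eqref{eq:str1}--\eqref{eq:str4} a representation $(\h;\rho,\mu)$ and a pair $(\omega_1,\omega_2)\in\huaC^2(\g,\h)$, and that (under the standing hypothesis that $(\rho,\mu)$ is the fixed representation) this pair is a $2$-cocycle, i.e.\ $\delta^2(\omega_1,\omega_2)=0$. Conversely, given a $2$-cocycle $(\omega_1,\omega_2)$, formulas \eqref{eq:w-bracket1}--\eqref{eq:w-bracket2} endow $\g\oplus\h$ with a compatible Lie algebra structure, yielding an abelian extension. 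Thus the first task is to verify these two assignments are mutually inverse at the level of objects, so that the map on equivalence classes is well defined on representatives.

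First I would show that the cohomology class, not merely the cocycle, is the correct invariant: changing the linear section from $\sigma$ to $\sigma'$ replaces $(\omega_1,\omega_2)$ by a cohomologous pair. This is exactly the computation carried out just before the statement, where $\varphi=\sigma-\sigma'\in\Hom(\g,\h)$ satisfies
\begin{eqnarray*}
(\omega_1-\omega_1')(x,y)&=&\dM^1_{\pi_1+\rho}\varphi(x,y),\\
(\omega_2-\omega_2')(x,y)&=&\dM^1_{\pi_2+\mu}\varphi(x,y),
\end{eqnarray*}
so that $(\omega_1-\omega_1',\omega_2-\omega_2')=\delta^1\varphi$ and the class in $\huaH^2(\g,\h)$ is independent of the section. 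Hence we obtain a well-defined map from isomorphism classes of abelian extensions to $\huaH^2(\g,\h)$.

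Next I would prove injectivity: if two extensions $\hat\g_1,\hat\g_2$ give rise to cohomologous cocycles, I would build the isomorphism $\theta$ of Definition~\ref{defi:isomorphic}(3). Concretely, using the identifications $\hat\g_i\cong\g\oplus\h$ furnished by sections $\sigma_i$, and writing $(\omega_1,\omega_2)-(\omega_1',\omega_2')=\delta^1\varphi$ for some $\varphi\in\Hom(\g,\h)$, I would define $\theta(x,u)=(x,u+\varphi(x))$ and check that $\theta$ respects both brackets $[-,-]_{(\omega_1,\rho)}$ and $\{-,-\}_{(\omega_2,\mu)}$ of \eqref{eq:w-bracket1}--\eqref{eq:w-bracket2}; the cocycle difference being $\delta^1\varphi$ is precisely what makes $\theta$ bracket-preserving, and $\theta$ visibly fits into the commutative diagram \eqref{diagram1}. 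Surjectivity is immediate, since any cocycle arises from the extension $(\g\oplus\h,[-,-]_{(\omega_1,\rho)},\{-,-\}_{(\omega_2,\mu)})$ with its canonical section.

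The main obstacle I anticipate is bookkeeping rather than conceptual: one must keep the two brackets in play simultaneously and confirm that the single map $\theta$ intertwines \emph{both} structures at once, and that the compatibility condition \eqref{eq:cl} on $\g\oplus\h$ is genuinely encoded by the full $2$-cocycle condition $\delta^2(\omega_1,\omega_2)=0$ (which couples $\omega_1,\omega_2$ through the mixed term $\dM^1_{\pi_2+\mu}\omega_1+\dM^1_{\pi_1+\rho}\omega_2$) and not merely by two separate Chevalley--Eilenberg conditions. Once the cochain complex $(\huaC^*(\g,\h),\delta^*)$ is recognized as the right receptacle, the verification that both directions are compatible with the cohomological equivalence reduces to the same linear identities already recorded, so I would invoke the classical Eilenberg--MacLane argument (\cite{Group extension}) to package these routine checks and conclude the one-to-one correspondence.
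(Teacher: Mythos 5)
Your proposal is correct and follows essentially the same route as the paper: the paper performs the identical preparatory steps (section-induced cocycles, independence of the representation and of the cohomology class from the choice of section) and then simply invokes the classical Eilenberg--MacLane argument, whose standard details (the isomorphism $\theta(x,u)=(x,u+\varphi(x))$ for injectivity, the canonical section for surjectivity) you spell out explicitly. The only quibble is notational: the mixed cocycle term should be written $\dM^2_{\pi_2+\mu}\omega_1+\dM^2_{\pi_1+\rho}\omega_2$ rather than with $\dM^1$, since $\omega_1,\omega_2\in C^2(\g,\h)$.
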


\subsection{Reduced cohomologies  of compatible Lie algebras
and cohomologies of bi-Hamiltonian structures}  In this
subsection, we introduce the   reduced cohomology  of a compatible
Lie algebra and establish the relation between the reduced
cohomology of a compatible Lie algebra and the cohomology of the
corresponding compatible linear Poisson structures introduced by
Dubrovin and Zhang in \cite{DZ} in the study of
  bi-Hamiltonian structures.

Let $(\mathfrak{g},[-,-],\{-,-\})$ be a compatible Lie algebra and $(V;\rho,\mu)$ its representation.  Denote by $\tilde{C}^n(\g,V)=\ker~\dM^n_{\pi_1+\rho}\mid_{C^n(\g,V)}$. By the fact $\dM^{n+1}_{\pi_1+\rho}\circ \dM^{n}_{\pi_2+\mu}+\dM^{n+1}_{\pi_2+\mu}\circ \dM^n_{\pi_1+\rho}=0$,  for $f\in \tilde{C}^n(\g,V)$, we have
$$\dM^{n+1}_{\pi_1+\rho} \dM^n_{\pi_2+\mu}(f)=-\dM^{n+1}_{\pi_2+\mu}\circ \dM^n_{\pi_1+\rho}(f)=0,$$
which implies that $\dM^n_{\pi_2+\mu} f\in \tilde{C}^{n+1}(\g,V)$.
Since $\dM^{n+1}_{\pi_2+\mu}\circ\dM_{\pi_2+\mu}^n=0$,  $(\tilde{C}^\ast(\g,V)=\oplus_{n=0}^\infty
\tilde{C}^n(\g,V),\dM^\ast_{\pi_2+\mu})$ is a cochain complex.

\begin{defi}  Let $(\mathfrak{g},[-,-],\{-,-\})$ be a compatible Lie algebra and $(V;\rho,\mu)$ a
representation.  The cohomology of the cochain complex
$(\tilde{C}^*(\g,V),\dM^*_{\pi_2+\mu})$  is  called {\bf the
reduced cohomology} of  $(\g,[-,-],\{-,-\})$ with
coefficients in $V$. We denote the $n$-th cohomology group by
$\tilde{H}^n(\g,V)$.
\end{defi}

\emptycomment{The $2$-th cohomology group is given by
$$\tilde{H}^2(\g,V)=\frac{\ker~\dM^2_\mu\mid_{\tilde{C}^2(\g,V)}}{\Img~\dM^1_\mu\mid_{\tilde{C}^1(\g,V)}},$$
where $\ker~\dM^2_\mu\mid_{\tilde{C}^2(\g,V)}$ and $\Img~\dM^1_\mu\mid_{\tilde{C}^1(\g,V)}$ are evidently given by
\begin{eqnarray*}
    \ker~\dM^2_\mu\mid_{\tilde{C}^2(\g,V)}&=&\{\omega\in C^{2}(\g,V)\mid \dM^2_\rho \omega=\dM^2_\mu\omega=0\},\\
    \Img~\dM^1_\mu\mid_{\tilde{C}^1(\g,V)}&=&\{\omega\in C^{2}(\g,V)\mid \exists~\mbox{ $f\in \Hom(\g,V)$ ~s.t. }~ \dM^1_\rho f=0,~\omega=\dM^1_\mu f \}.
    \end{eqnarray*}}

Let $M$ be a smooth manifold. The space of multi-vector fields on $M$: $$\frkX^\ast(M)=\oplus_{n=0}^{+\infty}\frkX^n(M), \mbox{ with } \frkX^n(M):=\Gamma(\wedge^n TM)$$
 carries a Schouten bracket $[-,-]_S:\frkX^p(M)\otimes\frkX^q(M)\rightarrow \frkX^{p+q+1}(M) $, which makes $\frkX^\ast(M)$ into a graded Lie algebra. A Poisson structure on $M$ is a bivector field $\Pi\in \frkX^2(M)$ satisfying
$$[\Pi,\Pi]_S=0.$$
The Poisson coboundary operator $\delta^p_\Pi:\frkX^p(M)\rightarrow \frkX^{p+1}(M)$ is given by
$$\delta^p_\Pi(P)=(-1)^{p-1}[\Pi,P]_S,\quad\forall~P\in\frkX^p(M).$$
The cohomology of the cochain complex $(\frkX^\ast(M),\delta^*_\Pi)$ is called the Poisson cohomology complex of $M$.  We denote the corresponding $n$-th cohomology group by  $H^n(M,\Pi)$.

Assume that the Poisson manifold $(M,\Pi_1)$ is endowed with a second Poisson structure $\Pi_2$ which is compatible with $\Pi_1$ in the sense that any linear combination of $\Pi_1$ and $\Pi_2$ is still a Poisson structure on $M$. In this case we say that the manifold has a bi-Hamiltonian structure $(\Pi_1,\Pi_2)$. The two Poisson structures define on $\frkX^\ast(M)$ two coboundary operators:
$$\delta^p_{\Pi_1}(P)=(-1)^{p-1}[\Pi_1,P]_S,\quad \delta^p_{\Pi_2}(P)=(-1)^{p-1}[\Pi_2,P]_S,\quad\forall~P\in\frkX^p(M),$$
which satisfies
$$\delta_{\Pi_1}^{p+1}\circ \delta^p_{\Pi_2}+\delta^{p+1}_{\Pi_2}\circ \delta^p_{\Pi_1}=0.$$
Denote by $\tilde{\frkX}^p(M)=\ker~\delta^p_{\Pi_1}\mid_{\frkX^p(M)}$. Then $(\tilde{\frkX}^\ast(M)=\oplus_{n=0}^\infty \tilde{\frkX}^n(M),\delta^*_{\Pi_2})$ is a cochain complex, which is called the {\bf bi-Hamiltonian cohomology complex} of $M$ given by Dubrovin and Zhang in \cite{DZ}. See \cite{DLZ,DZ,LZ} for more details on this cohomology complex and its applications.  We denote the corresponding $n$-th cohomology group by  $\tilde{H}^n(M,\Pi_1,\Pi_2)$.

Recall that a Poisson structure on a vector space is called {\bf linear} if the set of linear
functions is closed under the Poisson bracket. Usually the linear Poisson structures are also called Lie-Poisson structures. The Lie-Poisson structures are in one-to-one correspondence with Lie algebra structures. Namely, let $(\g,[-,-])$ be a finite dimensional Lie algebra and the associated Lie-Poisson structure $\Pi$ on $\g^*$ is determined by
$$\Pi(d l_x,d l_y)=\{l_x,l_y\}=l_{[x,y]},\quad \forall~x,y\in\g,$$
where for $x\in\g$, $l_x$ is the linear function on $\g^*$ defined by
$$l_x(\xi)=\langle x,\xi\rangle,\quad\forall~\xi\in \g^*.$$
Furthermore, the linear map $\rho:\g\rightarrow\gl(C^{\infty}(\g^*))$ defined by
$$\rho(x)(f)=\{x,f\},\quad \forall x\in\g, f\in C^{\infty}(\g^*)$$
gives a representation of the Lie algebra $\g$ on $C^{\infty}(\g^*)$. Then the Poisson cohomology complex $(\frkX^\ast(\g^*),\delta^*_\Pi)$ is isomorphic to the Chevalley-Eilenberg complex
of $\g$ with coefficients in $C^{\infty}(\g^*)$ (\cite{LPV})
$$(\frkX^\ast(\g^*),\delta^*_\Pi)\backsimeq (C^*(\g,C^{\infty}(\g^*)),\dM^*_{\pi+\rho}).$$

Compatible Lie-Poisson structures on the  vector space $\g^*$ are also in one-to-one correspondence with compatible Lie algebra structures on the vector space $\g$. Namely, let $(\g,[-,-]_1,[-,-]_2)$ be a compatible Lie algebra. Let $\Pi_1$ and $\Pi_2$ be the corresponding Lie-Poisson structures for the Lie algebras $(\g,[-,-]_1)$ and $(\g,[-,-]_2)$, respectively. Then $(\Pi_1,\Pi_2)$ is a bi-Hamiltonian structure on the vector space $\g^*$. Define linear maps
 \begin{eqnarray}\label{eq:compatible-rep}
 \rho(x)(f)=\{x,f\}_1,\quad \mu(x)(f)=\{x,f\}_2,\quad \forall x\in\g, f\in C^{\infty}(\g^*),
 \end{eqnarray}
 where $\{-,-\}_1$ and $\{-,-\}_2$ are Poisson brackets of the Poisson structures $\Pi_1$ and $\Pi_2$, respectively. Then we have
 \begin{pro}\label{pro:new representation}
   Let $(\g,[-,-]_1,[-,-]_2)$ be a compatible Lie algebra. Then the pair $(\rho,\mu)$ defined by \eqref{eq:compatible-rep} gives a representation of the compatible Lie algebra $\g$ on the vector space $C^\infty(\g^*)$.
 \end{pro}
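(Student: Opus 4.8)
The plan is to verify directly the three defining conditions of a representation of a compatible Lie algebra, working entirely inside the Poisson formalism. Throughout I identify each $x\in\g$ with its linear function $l_x$ on $\g^*$, so that by \eqref{eq:compatible-rep} the operators read $\rho(x)(f)=\{l_x,f\}_1$ and $\mu(x)(f)=\{l_x,f\}_2$, and the defining relations of the Lie--Poisson structures give $\{l_x,l_y\}_1=l_{[x,y]_1}$ and $\{l_x,l_y\}_2=l_{[x,y]_2}$.

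The first two conditions are already covered by the discussion recalled above: for a single Lie--Poisson structure the assignment $x\mapsto\{l_x,-\}$ is a representation of the underlying Lie algebra on $C^\infty(\g^*)$, as one sees by applying the Jacobi identity of that Poisson bracket together with the closedness of linear functions under the bracket. Applying this to $\Pi_1$ and to $\Pi_2$ separately shows that $\rho$ is a representation of $(\g,[-,-]_1)$ and $\mu$ a representation of $(\g,[-,-]_2)$. Hence only the cross-relation \eqref{eq:r3} remains, which here (with $[-,-]=[-,-]_1$ matched to $\rho$ and $\{-,-\}=[-,-]_2$ matched to $\mu$) takes the form $\rho([x,y]_2)+\mu([x,y]_1)=[\rho(x),\mu(y)]-[\rho(y),\mu(x)]$.

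The key input will be the compatibility of the two Poisson structures. Since $\Pi_1+\Pi_2$ is again Poisson, its Jacobi identity polarizes: the pure $\Pi_1$- and pure $\Pi_2$-Jacobi contributions vanish separately, leaving the mixed Jacobi identity
\begin{equation*}
\{f,\{g,h\}_1\}_2+\{f,\{g,h\}_2\}_1+\{g,\{h,f\}_1\}_2+\{g,\{h,f\}_2\}_1+\{h,\{f,g\}_1\}_2+\{h,\{f,g\}_2\}_1=0
\end{equation*}
for all $f,g,h\in C^\infty(\g^*)$, which is equivalent to $[\Pi_1,\Pi_2]_S=0$. I would then specialize this to $f=l_x$, $g=l_y$ and an arbitrary $h$. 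Evaluating both sides of \eqref{eq:r3} on $h$, the right-hand side $[\rho(x),\mu(y)]h-[\rho(y),\mu(x)]h$ expands into the four double-bracket terms $\{l_x,\{l_y,h\}_2\}_1-\{l_y,\{l_x,h\}_1\}_2-\{l_y,\{l_x,h\}_2\}_1+\{l_x,\{l_y,h\}_1\}_2$, while the left-hand side $\rho([x,y]_2)h+\mu([x,y]_1)h$ equals $\{l_{[x,y]_2},h\}_1+\{l_{[x,y]_1},h\}_2=\{\{l_x,l_y\}_2,h\}_1+\{\{l_x,l_y\}_1,h\}_2$. Using antisymmetry of each bracket, the specialized mixed Jacobi identity is precisely the assertion that these two expressions coincide.

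The main obstacle will be purely bookkeeping: matching the roles of $(\rho,[-,-]_1)$ versus $(\mu,[-,-]_2)$ against the two slots in \eqref{eq:r3}, and tracking the antisymmetry signs so that the specialized mixed Jacobi lines up termwise with the expansion of the commutators. There is no analytic difficulty once the mixed Jacobi identity is established; everything reduces to this algebraic identity together with the relations $\{l_x,l_y\}_i=l_{[x,y]_i}$.
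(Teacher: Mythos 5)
Your proposal is correct and follows essentially the same route as the paper: both arguments pass to the compatible Lie--Poisson (Poisson) structures on $\g^*$, use the Jacobi identity of each bracket $\{-,-\}_1$, $\{-,-\}_2$ to get that $\rho$ and $\mu$ are representations of $(\g,[-,-]_1)$ and $(\g,[-,-]_2)$ respectively, and derive the cross-relation \eqref{eq:r3} from the mixed Jacobi identity expressing the compatibility of the two Poisson brackets. The only difference is one of detail: the paper states this last implication in one line, while you carry out the specialization $f=l_x$, $g=l_y$, $h$ arbitrary, explicitly, which is a faithful expansion of the same argument.
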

 \begin{proof}
 Since $(\g,[-,-]_1,[-,-]_2)$ is a compatible Lie algebra, $(C^\infty(\g^*),\{-,-\}_1,\{-,-\}_2)$ is a compatible Poisson algebra. The Jacobi identity of the Poisson algebra $(C^\infty(\g^*),\{-,-\}_1)$ implies that $\rho$ is a representation of the Lie algebra $(\g,[-,-]_1)$ on $C^\infty(\g^*)$,  the Jacobi identity of the Poisson algebra $(C^\infty(\g^*),\{-,-\}_2)$ implies that $\mu$ is a representation of the Lie algebra $(\g,[-,-]_2)$ on $C^\infty(\g^*)$ and the compatibility condition between Poisson algebras $(C^\infty(\g^*),\{-,-\}_1)$ and $(C^\infty(\g^*),\{-,-\}_2)$  implies that \eqref{eq:r3} holds. Thus $(\rho,\mu)$ is a representation of the compatible Lie algebra $\g$ on the vector space $C^\infty(\g^*)$.
 \end{proof}
 By Proposition \ref{pro:new representation}, we have a reduced cohomology complex $(\tilde{C}^*(\g,C^{\infty}(\g^*)),\dM^*_{\pi_2+\mu})$ associated to the representation $(C^\infty(\g^*);\rho,\mu)$ of the compatible Lie algebra $(\g,[-,-]_1,[-,-]_2)$. Furthermore, the following theorem says that the reduced cohomology complex of a compatible Lie algebra is isomorphic to the corresponding bi-Hamiltonian cohomology complex.
\begin{thm}
  Let $(\g,[-,-]_1,[-,-]_2)$ be a compatible Lie algebra and $\Pi_1,\Pi_2$  the corresponding compatible Lie-Poisson structures on $\g^*$.
Then the cohomology complex
  $(\tilde{\frkX}^\ast(\g^*),\delta^*_{\Pi_2})$ of the bi-Hamiltonian structure $(\Pi_1,\Pi_2)$ on $\g^*$ is isomorphic to the reduced cohomology complex $(\tilde{C}^*(\g,C^{\infty}(\g^*)),\dM^*_{\pi_2+\mu})$ of the compatible Lie algebra $\g$. Therefore, for any $n\in\Nat$, there is a canonical isomorphism
  $$\tilde{H}^n(\g^*,\Pi_1,\Pi_2)\simeq \tilde{H}^n(\g,C^{\infty}(\g^*)).$$
\end{thm}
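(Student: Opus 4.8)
The plan is to reduce everything to the single classical isomorphism already recalled in the excerpt, namely
$$(\frkX^\ast(\g^*),\delta^*_{\Pi})\cong (C^*(\g,C^\infty(\g^*)),\dM^*_{\pi+\rho}),$$
and to exploit the fact that the underlying identification of cochains is insensitive to the choice of Lie bracket. Concretely, since $\g^*$ is a vector space we have $T_\xi\g^*\cong\g^*$ canonically, so
$$\frkX^p(\g^*)=\Gamma(\wedge^p T\g^*)\cong C^\infty(\g^*)\otimes\wedge^p\g^*\cong\Hom(\wedge^p\g,C^\infty(\g^*))=C^p(\g,C^\infty(\g^*)).$$
First I would fix this isomorphism, call it $\Phi^p$, and emphasize that it uses only the linear structure of $\g$, not any bracket; in particular it is one and the same map for both Lie--Poisson structures $\Pi_1,\Pi_2$.

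Second, I would apply the classical theorem twice, to each structure separately. For the Lie algebra $(\g,[-,-]_1)$ with its Lie--Poisson structure $\Pi_1$ and the Hamiltonian representation $\rho$ of \eqref{eq:compatible-rep}, $\Phi$ intertwines $\delta^*_{\Pi_1}$ with $\dM^*_{\pi_1+\rho}$; for $(\g,[-,-]_2)$ with $\Pi_2$ and $\mu$, the very same $\Phi$ intertwines $\delta^*_{\Pi_2}$ with $\dM^*_{\pi_2+\mu}$. Here one must check that the representations appearing in the classical isomorphism are exactly the $\rho,\mu$ defined by $\rho(x)f=\{x,f\}_1$ and $\mu(x)f=\{x,f\}_2$, which is immediate upon comparing definitions, since in the Lie--Poisson picture the action on $C^\infty(\g^*)$ is realized precisely by these Hamiltonian brackets. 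Thus the single map $\Phi$ is simultaneously a chain isomorphism for the two differentials, and it carries the anticommuting pair $(\delta_{\Pi_1},\delta_{\Pi_2})$ to the pair $(\dM_{\pi_1+\rho},\dM_{\pi_2+\mu})$.

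Finally I would pass to the reduced complexes by restriction. Because $\Phi$ intertwines $\delta_{\Pi_1}$ with $\dM_{\pi_1+\rho}$, it restricts to an isomorphism of the kernels
$$\Phi^p\colon \tilde{\frkX}^p(\g^*)=\ker\delta^p_{\Pi_1}\xrightarrow{\ \cong\ }\ker\dM^p_{\pi_1+\rho}=\tilde{C}^p(\g,C^\infty(\g^*)).$$
On these subspaces $\Phi$ still intertwines $\delta_{\Pi_2}$ with $\dM_{\pi_2+\mu}$ by the previous step, so $\Phi$ is an isomorphism of cochain complexes $(\tilde{\frkX}^*(\g^*),\delta^*_{\Pi_2})\cong(\tilde{C}^*(\g,C^\infty(\g^*)),\dM^*_{\pi_2+\mu})$, and hence induces the asserted isomorphism $\tilde{H}^n(\g^*,\Pi_1,\Pi_2)\cong\tilde{H}^n(\g,C^\infty(\g^*))$ on cohomology.

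The only genuine content---and the step I expect to be the main obstacle---is justifying that a single classical isomorphism can serve both structures at once: that $\Phi$ is bracket-independent at the level of spaces, yet transports the Schouten differential $\delta_{\Pi_i}=(-1)^{p-1}[\Pi_i,-]_S$ to the Chevalley--Eilenberg differential $\dM_{\pi_i+\rho_i}=(-1)^{n-1}[\hat\pi_i+\hat\rho_i,-]_\NR$ for $i=1,2$, with $(\rho_1,\rho_2)=(\rho,\mu)$ the representations of \eqref{eq:compatible-rep}. Once this compatibility of the two bracket structures under one map is established, the restriction to kernels and the passage to cohomology are purely formal.
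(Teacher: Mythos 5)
Your proposal is correct and follows essentially the same route as the paper: your $\Phi^p$ is precisely the paper's map $\Psi_p(P)(x_1,\ldots,x_p)=P(dl_{x_1},\ldots,dl_{x_p})$, the two intertwining identities you invoke are the paper's relations \eqref{eq:commu1}--\eqref{eq:commu2} (obtained there by citing the classical Lie--Poisson/Chevalley--Eilenberg isomorphism with the representations of \eqref{eq:compatible-rep}), and the restriction-to-kernels step is identical. The compatibility you flag as the main obstacle is exactly what the paper disposes of by quoting \cite{LPV} for both Poisson structures through the one bracket-independent map.
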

\begin{proof}
  Let $P$ be a Poisson $p$-cochain of the Poisson structure $\Pi_2$, i.e. $P\in \frkX^p(\g^*)$. The element $\Psi_p(P)\in C^p(\g,C^{\infty}(\g^*))$ associated to $P$ is defined by
  $$\Psi_p(P)(x_1,\cdots,x_p)=P(dl_{x_1},\cdots,dl_{x_p}).$$
  Then for any $p\in\Nat$, the linear map $\Psi_p:\frkX^p(\g^*)\rightarrow C^p(\g,C^{\infty}(\g^*))$ is an isomorphism and satisfies the following relations (\cite{LPV})
  \begin{eqnarray}
 \label{eq:commu1} \Psi_{p+1}\circ \delta^p_{\Pi_1}(P)&=&\dM^{p}_{\pi_1+\rho}\circ\Psi_{p}(P),\\
 \label{eq:commu2} \Psi_{p+1}\circ \delta^p_{\Pi_2}(P)&=&\dM^p_{\pi_2+\mu}\circ\Psi_{p}(P).
  \end{eqnarray}
  Assume that $P\in \tilde{\frkX}^p(\g^*)$, i.e. $\delta^p_{\Pi_1}P=0$. By \eqref{eq:commu1}, $\Psi_{p}(P)\in \tilde{C}^p(\g,C^{\infty}(\g^*))$. Thus for any $p\in\Nat$, the linear map $\Psi_p:\tilde{\frkX}^p(\g^*)\rightarrow \tilde{C}^p(\g,C^{\infty}(\g^*))$ is well-defined and furthermore, the linear map $\Psi_p$ is an isomorphism. By \eqref{eq:commu2}, for any $P\in \tilde{\frkX}^p(\g^*)$, we have
  $$\Psi_{p+1}\circ \delta^p_{\Pi_2}(P)=\dM^p_{\pi_2+\mu}\circ\Psi_{p}(P).$$
  Thus the cohomology complex
  $(\tilde{\frkX}^\ast(\g^*),\delta^*_{\Pi_2})$ of the bi-Hamiltonian structure $(\Pi_1,\Pi_2)$ on $\g^*$ is isomorphic to the reduced cohomology complex $(\tilde{C}^*(\g,C^{\infty}(\g^*)),\dM^*_{\pi_2+\mu})$ of the compatible Lie algebra $\g$. The isomorphisms $(\Psi_p)_{p\in N}$ therefore induce isomorphisms between the corresponding cohomology groups.
\end{proof}

\section{Nonabelian extensions of compatible Lie algebras}\label{sec:cohomology III}
In this section, we use the Maurer-Cartan approach to classify
nonabelian extensions of compatible Lie algebras.

Let
$(\g,[-,-]_\g,\{-,-\}_\g)$ and $(\h,[-,-]_\h,\{-,-\}_\h)$ be two
compatible Lie algebras. Set
$$\pi_1(x,y)=[x,y]_\g,\quad \pi_2(x,y)=\{x,y\}_\g,\quad\vartheta_1(u,v)=[u,v]_\h,\quad \vartheta_2(u,v)=\{u,v\}_\h.$$
Let $\hat{\g}$ be a nonabelian extension of the compatible Lie algebra $\g$ by $\h$, i.e. the compatible Lie algebra structure on $\h$ is not trivial. Given a linear section $\sigma:\g\rightarrow\hat{\g}$, the compatible Lie algebra structure on $\hat{\g}$ is isomorphic  to the compatible Lie algebra structure $(\hat{\g}=\g\oplus\h,[-,-]_{(\omega_1,\rho)},\{-,-\}_{(\omega_2,\mu)})$, where $[-,-]_{(\omega_1,\rho)}$ and $\{-,-\}_{(\omega_2,\mu)}$ are given by \eqref{eq:w-bracket1} and  \eqref{eq:w-bracket2} respectively.
Then we have
\begin{pro}
Let $(\g,[-,-]_\g,\{-,-\}_\g)$ and $(\h,[-,-]_\h,\{-,-\}_\h)$ be two  compatible Lie algebras. $(\g\oplus\h,[-,-]_{(\omega_1,\rho)},\{-,-\}_{(\omega_2,\mu)})$ is a compatible Lie algebra  if and only if
\begin{equation}\label{eq:extension1}
 \begin{array}{rclrclrcl}
[\hat{\pi}_1+\hat{\rho},\hat{\pi}_1+\hat{\rho}]_{\NR}+2[\hat{\omega}_1,\hat{\vartheta}_1]_{\NR}&=&0,&[\hat{\pi}_1+\hat{\rho},\hat{\omega}_1]_{\NR}&=&0,& [\hat{\rho},\hat{\vartheta}_1]_{\NR}&=&0,\\
{[\hat{\pi}_2+\hat{\mu},\hat{\pi}_2+\hat{\mu}]}_{\NR}+2[\hat{\omega}_2,\hat{\vartheta}_2]_{\NR}&=&0,&[\hat{\pi}_2+\hat{\mu},\hat{\omega}_2]_{\NR}&=&0,& [\hat{\mu},\hat{\vartheta}_2]_{\NR}&=&0
 \end{array}
\end{equation}
and
\begin{equation}\label{eq:extension2}
 \begin{array}{rclrcl}
{[\hat{\rho},\hat{\vartheta}_1]}_{\NR}+[\hat{\mu},\hat{\vartheta}_2]_{\NR}&=&0, & & &\\
{[\hat{\pi}_1+\hat{\rho},\hat{\omega}_1]}_{\NR}+[\hat{\pi}_2+\hat{\mu},\hat{\omega}_2]_{\NR}&=&0,& & &\\
{[\hat{\pi}_1+\hat{\rho},\hat{\pi}_2+\hat{\mu}]}_{\NR}+[\hat{\omega}_1,\hat{\vartheta}_2]_{\NR}+[\hat{\omega}_2,\hat{\vartheta}_1]_{\NR}&=&0.& & &
 \end{array}
\end{equation}
\end{pro}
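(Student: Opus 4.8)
The plan is to realize both bilinear operations \eqref{eq:w-bracket1} and \eqref{eq:w-bracket2} as elements of the graded Lie algebra $(C^*(\g\oplus\h,\g\oplus\h),[-,-]_\NR)$ and then to invoke the Maurer--Cartan characterization of Theorem \ref{pro:lsymNR} on the vector space $\g\oplus\h$. Writing $\huaP_1$ and $\huaP_2$ for the two brackets, one reads off directly from the lift formulas \eqref{semi-direct-1}--\eqref{semi-direct-2} that
$$\huaP_1=\hat\pi_1+\hat\rho+\hat\omega_1+\hat\vartheta_1,\qquad \huaP_2=\hat\pi_2+\hat\mu+\hat\omega_2+\hat\vartheta_2,$$
since, evaluated on $(x,u),(y,v)$, the four summands reproduce respectively the $\g$-component $[x,y]_\g$, the action terms $\rho(x)v-\rho(y)u$, the cochain $\omega_1(x,y)$, and the $\h$-bracket $[u,v]_\h$ (and symmetrically for $\huaP_2$). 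By Theorem \ref{pro:lsymNR} applied to $\g\oplus\h$, the pair $(\huaP_1,\huaP_2)$ is a compatible Lie algebra structure if and only if the three Maurer--Cartan equations
$$[\huaP_1,\huaP_1]_\NR=0,\qquad [\huaP_2,\huaP_2]_\NR=0,\qquad [\huaP_1,\huaP_2]_\NR=0$$
hold. The whole proof is the systematic decomposition of these three equations.

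First I would record the bidegrees of the homogeneous pieces: $\hat\pi_1,\hat\pi_2,\hat\rho,\hat\mu$ all have bidegree $1|0$, each $\hat\omega_i$ has bidegree $2|{-1}$, and each $\hat\vartheta_i$ has bidegree $0|1$; these follow at once from the definition of bidegree and the target spaces of $\omega_i$ and $\vartheta_i$. The key tool is then Lemma \ref{lem:bidegree perserve}: the Nijenhuis--Richardson bracket of two homogeneous maps is again homogeneous, with the two bidegrees added. Because $C^*(\g\oplus\h,\g\oplus\h)$ decomposes as a direct sum indexed by bidegree and maps of distinct bidegree are linearly independent, each of the three equations above is equivalent to the simultaneous vanishing of all of its homogeneous components. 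Hence the proof reduces to expanding each bracket bilinearly and collecting the resulting terms by bidegree.

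Carrying this out, $[\huaP_1,\huaP_1]_\NR=0$ produces one equation in each of the bidegrees $2|0$, $3|{-1}$, $1|1$, and $0|2$: the $0|2$-component is $[\hat\vartheta_1,\hat\vartheta_1]_\NR=0$, which holds because $(\h,[-,-]_\h)$ is a Lie algebra, while the other three give the first row of \eqref{eq:extension1}; symmetrically $[\huaP_2,\huaP_2]_\NR=0$ yields the second row. The cross equation $[\huaP_1,\huaP_2]_\NR=0$ splits into its bidegree $2|0$, $3|{-1}$, and $1|1$ components, which are precisely the three identities of \eqref{eq:extension2}, together with two automatically vanishing components: the $0|2$-component $[\hat\vartheta_1,\hat\vartheta_2]_\NR=0$ holds since $\h$ is itself a compatible Lie algebra, and the $4|{-2}$-component $[\hat\omega_1,\hat\omega_2]_\NR=0$ vanishes for free, a map of bidegree $4|{-2}$ being supported only on $\wedge^{5}\g\otimes\wedge^{-2}\h$ and $\wedge^{4}\g\otimes\wedge^{-1}\h$, both of which are zero.

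Two elementary vanishing facts keep the bookkeeping short and should be established once at the outset: $[\hat\pi_i,\hat\vartheta_j]_\NR=0$ for all $i,j$, because $\hat\pi_i$ is supported on $\wedge^2\g$ with values in $\g$ whereas $\hat\vartheta_j$ is supported on $\wedge^2\h$ with values in $\h$, so neither composite $\hat\pi_i\circ\hat\vartheta_j$ nor $\hat\vartheta_j\circ\hat\pi_i$ can be nonzero; and the analogous remark that each $[\hat\omega_i,\hat\omega_j]_\NR$ is forced to vanish purely by its bidegree. It is exactly these reductions that collapse, for instance, the $1|1$-component of $[\huaP_1,\huaP_1]_\NR$ to the single condition $[\hat\rho,\hat\vartheta_1]_\NR=0$. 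The one genuinely delicate point — and the place I expect to have to be most careful — is the bidegree accounting in the cross bracket $[\huaP_1,\huaP_2]_\NR$, where contributions from $\huaP_1$ and from $\huaP_2$ get paired against one another; there one must collect each homogeneous component with the correct terms and signs, keeping in mind that degree-one elements bracket symmetrically, $[f,g]_\NR=[g,f]_\NR$, so that an index or a sign is easy to misplace.
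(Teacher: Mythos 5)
Your strategy is exactly the paper's: identify the two brackets with $\huaP_1=\hat{\pi}_1+\hat{\rho}+\hat{\omega}_1+\hat{\vartheta}_1$ and $\huaP_2=\hat{\pi}_2+\hat{\mu}+\hat{\omega}_2+\hat{\vartheta}_2$, apply the Maurer--Cartan characterization of Theorem \ref{pro:lsymNR} to $\g\oplus\h$, and split the three resulting equations into homogeneous components via Lemma \ref{lem:bidegree perserve}. Your bookkeeping of the automatically vanishing pieces ($[\hat{\pi}_i,\hat{\vartheta}_j]_{\NR}=0$, $[\hat{\omega}_i,\hat{\omega}_j]_{\NR}=0$, and the $0|2$ components) is in fact more explicit than the paper's own proof.

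However, the one step you left unverified --- and which you yourself flagged as the delicate one --- is exactly where your claim fails as stated. Carrying out the collection of terms in $[\huaP_1,\huaP_2]_{\NR}=0$ gives the \emph{crossed} pairings: the bidegree $1|1$ component is $[\hat{\pi}_1+\hat{\rho},\hat{\vartheta}_2]_{\NR}+[\hat{\vartheta}_1,\hat{\pi}_2+\hat{\mu}]_{\NR}=[\hat{\rho},\hat{\vartheta}_2]_{\NR}+[\hat{\mu},\hat{\vartheta}_1]_{\NR}$, and the bidegree $3|{-1}$ component is $[\hat{\pi}_1+\hat{\rho},\hat{\omega}_2]_{\NR}+[\hat{\pi}_2+\hat{\mu},\hat{\omega}_1]_{\NR}$; only the bidegree $2|0$ component agrees with what is printed (the third line of \eqref{eq:extension2}). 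So the components are \emph{not} ``precisely the three identities of \eqref{eq:extension2}'' as printed. This is not a harmless relabeling: the first two lines of \eqref{eq:extension2}, taken literally, are sums of equations already contained in \eqref{eq:extension1} and hence redundant, so the ``if'' direction of the literal statement would be false --- those conditions never force the mixed derivation identity \eqref{eq:EXT-6} nor the mixed cocycle identity \eqref{eq:EXT-9}, both of which pair $\rho$, $\omega_1$ with $\{-,-\}_\h$, $\{-,-\}_\g$ and $\mu$, $\omega_2$ with $[-,-]_\h$, $[-,-]_\g$ in the crossed way. (The paper's own Lemma \ref{lem:nonabelian relations} confirms that the crossed versions are what is intended; the printed \eqref{eq:extension2} contains index typos that the paper's equally terse proof also glosses over.) To make your proof complete, you must actually perform this collection of cross-bracket terms and then either prove the corrected statement or record explicitly that the first two lines of \eqref{eq:extension2} must have their indices crossed.
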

\begin{proof}
Note that $(\g\oplus\h,[-,-]_{(\omega_1,\rho)},\{-,-\}_{(\omega_2,\mu)})$ is a compatible Lie algebra if and only if the pair $(\hat{\pi}_1+\hat{\rho}+\hat{\omega}_1+\hat{\vartheta}_1,\hat{\pi}_2+\hat{\mu}+\hat{\omega}_2+\hat{\vartheta}_2)$ is a Maurer-Cartan element of the bidifferential graded Lie algebra $(C^*(\g\oplus V,\g\oplus V),[-,-]_\NR,\partial_1=0,\partial_2=0)$, i.e.,
  \begin{eqnarray}
 \label{eq:EXT1}   [\hat{\pi}_1+\hat{\rho}+\hat{\omega}_1+\hat{\vartheta}_1,\hat{\pi}_1+\hat{\rho}+\hat{\omega}_1+\hat{\vartheta}_1]_{\NR}&=&0,\\
 \label{eq:EXT2}     [\hat{\pi}_2+\hat{\mu}+\hat{\omega}_2+\hat{\vartheta}_2,\hat{\pi}_2+\hat{\mu}+\hat{\omega}_2+\hat{\vartheta}_2]_{\NR}&=&0,\\
   \label{eq:EXT3}   [\hat{\pi}_1+\hat{\rho}+\hat{\omega}_1+\hat{\vartheta}_1,\hat{\pi}_2+\hat{\mu}+\hat{\omega}_2+\hat{\vartheta}_2]_{\NR}&=&0.
  \end{eqnarray}
 By the definition of bidegrees, we have $||\hat{\pi}_1||=||\hat{\pi}_2||=||\hat{\rho}||=||\hat{\mu}||=1|0$, $||\hat{\omega}_1||=||\hat{\omega}_2||=2\mid -1$ and $||\hat{\vartheta}_1||=||\hat{\vartheta}_2||=0|1$. By Lemma \ref{lem:bidegree perserve}, \eqref{eq:EXT1} and \eqref{eq:EXT2} imply that \eqref{eq:extension1} holds and \eqref{eq:EXT3} implies that \eqref{eq:extension2} holds.
\end{proof}

Moreover, we have
\begin{lem}\label{lem:nonabelian relations}
  Let $(\g,[-,-]_\g,\{-,-\}_\g)$ and $(\h,[-,-]_\h,\{-,-\}_\h)$ be two  compatible Lie algebras. Then $(\g\oplus\h,[-,-]_{(\omega_1,\rho)},\{-,-\}_{(\omega_2,\mu)})$ is a compatible Lie algebra if and only if
  \begin{eqnarray}
  \label{eq:EXT-1}  \rho([x,y]_\g)&=&[\rho(x),\rho(y)]+\ad^{\h}_{\omega_1(x,y)},\\
   \label{eq:EXT-2} \mu(\{x,y\}_\g)&=&[\mu(x),\mu(y)]+\AD^{\h}_{\omega_2(x,y)},\\
   \label{eq:EXT-3} \rho(x)[u,v]_\h&=&[\rho(x)u,v]_\h+[u,\rho(x)v]_\h,\\
    \label{eq:EXT-4}\mu(x)\{u,v\}_\h&=&\{\mu(x)u,v\}_\h+\{u,\mu(x)v\}_\h,\\
     \label{eq:EXT-5}\rho(\{x,y\}_\g)+\mu([x,y]_\g)&=&[\rho(x),\mu(y)]+[\mu(x),\rho(y)]-\ad^{\h}_{\omega_2(x,y)}-\AD^{\h}_{\omega_1(x,y)},\\
    \label{eq:EXT-6} \rho(x)\{u,v\}_\h+\mu(x)[u,v]_\h&=&\{\rho(x)u,v\}_\h+\{u,\rho(x)v\}_\h+[\mu(x)u,v]_\h+[u,\mu(x)v]_\h,\\
    \label{eq:EXT-7}  \rho(x)\omega_1(y,z)+\rho(z)\omega_1(x,y)&=&\rho(y)\omega_1(x,z)+\omega_1([x,y]_\g,z)+\omega_1([z,x]_\g,y)\\
    \nonumber&&+\omega_1([y,z]_\g,x),\\
    \label{eq:EXT-8} \mu(x)\omega_2(y,z)+\mu(z)\omega_2(x,y)&=&\mu(y)\omega_2(x,z)+\omega_2(\{x,y\}_\g,z)+\omega_1(\{z,x\}_\g,y)\\
     \nonumber&&+\omega_2(\{y,z\}_\g,x)
  \end{eqnarray}
  and
  \begin{eqnarray}
   \label{eq:EXT-9} &&\rho(x)\omega_2(y,z)+\rho(z)\omega_2(x,y)+\rho(y)\omega_2(z,x)+ \mu(x)\omega_1(y,z)+\mu(z)\omega_1(x,y)+\mu(y)\omega_1(z,x)\\
  \nonumber  &&=\omega_2([x,y]_\g,z)+\omega_2([z,x]_\g,y)+\omega_2([y,z]_\g,x)+\omega_1(\{x,y\}_\g,z)+\omega_1(\{z,x\}_\g,y)+\omega_1(\{y,z\}_\g,x),
  \end{eqnarray}
  where $\ad^\h_u v=[u,v]_\h$ and $\AD^\h _uv=\{u,v\}_\h$ for $u,v\in\h$.
\end{lem}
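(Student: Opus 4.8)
The plan is to read off the nine identities as the ``mixed'' components of the three structural conditions. By the definition of a compatible Lie algebra, $(\g\oplus\h,[-,-]_{(\omega_1,\rho)},\{-,-\}_{(\omega_2,\mu)})$ is a compatible Lie algebra precisely when (i) $[-,-]_{(\omega_1,\rho)}$ is a Lie bracket, (ii) $\{-,-\}_{(\omega_2,\mu)}$ is a Lie bracket, and (iii) the pair satisfies the compatibility condition \eqref{eq:cl}; antisymmetry and bilinearity are immediate from \eqref{eq:w-bracket1} and \eqref{eq:w-bracket2}. I would evaluate each of these three conditions on triples of elements of $\g\oplus\h$ and split the outcome according to the number of entries lying in $\g$ and in $\h$ (equivalently, by the bidegree, in the sense of Lemma \ref{lem:bidegree perserve}, of the corresponding term in the Nijenhuis--Richardson identities \eqref{eq:extension1}--\eqref{eq:extension2} supplied by the preceding Proposition). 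The organizing observation is that the purely-$\g$ pieces hold automatically because $(\g,[-,-]_\g,\{-,-\}_\g)$ is a compatible Lie algebra, and the purely-$\h$ pieces hold automatically because $(\h,[-,-]_\h,\{-,-\}_\h)$ is; what remains are exactly the mixed identities \eqref{eq:EXT-1}--\eqref{eq:EXT-9}.

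Concretely, for the Jacobi identity of $[-,-]_{(\omega_1,\rho)}$ there are four cases by the number of $\h$-entries. With three arguments in $\g$, the $\g$-component is the Jacobi identity of $[-,-]_\g$ (which holds) while the $\h$-component is the cocycle identity \eqref{eq:EXT-7}; with two arguments in $\g$ and one in $\h$ one obtains \eqref{eq:EXT-1}, in which the $[\hat\rho,\hat\rho]$-type terms give $[\rho(x),\rho(y)]$, the $[\hat\pi_1,\hat\rho]$-type terms give $\rho([x,y]_\g)$, and the cross term with $[-,-]_\h$ gives $\ad^\h_{\omega_1(x,y)}$; with one argument in $\g$ and two in $\h$ one obtains the derivation identity \eqref{eq:EXT-3}; and with no argument in $\g$ one recovers the Jacobi identity of $[-,-]_\h$. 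The identical analysis for $\{-,-\}_{(\omega_2,\mu)}$ yields \eqref{eq:EXT-8}, \eqref{eq:EXT-2}, \eqref{eq:EXT-4} and the Jacobi identity of $\{-,-\}_\h$. Finally, applying the compatibility condition \eqref{eq:cl} to the pair and again splitting by $\h$-count: three $\g$-arguments give the compatibility of $\g$ in the $\g$-slot and the mixed cocycle identity \eqref{eq:EXT-9} in the $\h$-slot; two $\g$- and one $\h$-argument give \eqref{eq:EXT-5}; one $\g$- and two $\h$-arguments give \eqref{eq:EXT-6}; and three $\h$-arguments give the compatibility of $\h$.

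In writing this up I would carry out a single representative evaluation in full and invoke symmetry for the rest. For instance, applying \eqref{eq:cl} to $(x,0),(y,0),(z,0)$ and projecting onto $\h$ kills every $[u,v]_\h$- and $\{u,v\}_\h$-term, and collecting the six cyclic contributions of the form $-\rho(\cdot)\omega_2(\cdot,\cdot)+\omega_1(\{\cdot,\cdot\}_\g,\cdot)$ and $-\mu(\cdot)\omega_1(\cdot,\cdot)+\omega_2([\cdot,\cdot]_\g,\cdot)$ returns precisely \eqref{eq:EXT-9}; the evaluations on one and two $\h$-slots then give \eqref{eq:EXT-5} and \eqref{eq:EXT-6} analogously. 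I expect the only real difficulty to be organizational rather than conceptual: one must track the antisymmetrization signs carefully (the factor $(-1)^{pq}$ in $[P,Q]_\NR$, or equivalently the signs in the cyclic sums of \eqref{eq:cl}) and keep in mind that a single condition such as the Jacobi identity of $[-,-]_{(\omega_1,\rho)}$ simultaneously carries a $\g$-valued and an $\h$-valued component on each argument-type, so that it must be tested on every admissible split before being matched with the listed identities. Once this bookkeeping is in place, the term-by-term identification with \eqref{eq:EXT-1}--\eqref{eq:EXT-9} is routine, and assembling the equivalences completes the proof.
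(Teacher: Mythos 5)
Your proposal is correct and follows essentially the same route as the paper: the paper also decomposes the structural conditions according to how many arguments lie in $\g$ versus $\h$, evaluating the Nijenhuis--Richardson/Maurer--Cartan equations of the preceding proposition on mixed tuples such as $((x,0),(y,0),(0,u))$ and $((x,0),(0,u),(0,v))$ to extract \eqref{eq:EXT-1}--\eqref{eq:EXT-9}, with the purely-$\g$ and purely-$\h$ components absorbed by the hypotheses on $\g$ and $\h$, and it likewise dismisses the converse as a routine check. Your direct evaluation of the two Jacobi identities and of \eqref{eq:cl} is literally the same computation (since $[P,P]_{\NR}=0$ is the Jacobiator and the mixed bracket vanishing is \eqref{eq:cl}), and your case-to-identity correspondence matches the paper's exactly.
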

\begin{proof}
  Assume that $(\g\oplus\h,[-,-]_{(\omega_1,\rho)},\{-,-\}_{(\omega_2,\mu)})$ is a compatible Lie algebra. The equations

  $$([\hat{\pi}_1+\hat{\rho}+\hat{\omega}_1+\hat{\vartheta}_1,\hat{\pi}_1+\hat{\rho}+\hat{\omega}_1+\hat{\vartheta}_1]_{\NR})((x,0),(y,0),(0,u))=0$$ and  ${[\hat{\pi}_1,\hat{\pi}_1]}_{\NR}=0$ imply that \eqref{eq:EXT-1} holds.

  The equations $$([\hat{\pi}_2+\hat{\mu}+\hat{\omega}_2+\hat{\vartheta}_2,\hat{\pi}_2+\hat{\mu}+\hat{\omega}_2+\hat{\vartheta}_2]_{\NR})((x,0),(y,0),(0,u))=0$$ and ${[\hat{\pi}_2,\hat{\pi}_2]}_{\NR}=0$ imply that \eqref{eq:EXT-2} holds.

  The equations
  $$[\hat{\rho},\hat{\vartheta}_1]_{\NR}((x,0),(0,u),(0,v))=0,\quad [\hat{\mu},\hat{\vartheta}_2]_{\NR}((x,0),(0,u),(0,v))=0$$
  imply that \eqref{eq:EXT-3} and \eqref{eq:EXT-4} hold, respectively.

  The equations
  $$[\hat{\pi}_1+\hat{\rho},\hat{\omega}_1]_{\NR}((x,0),(y,0),(z,0))=0,\quad [\hat{\pi}_2+\hat{\mu},\hat{\omega}_2]_{\NR}((x,0),(y,0),(z,0))=0$$
  imply that \eqref{eq:EXT-7} and \eqref{eq:EXT-8} hold, respectively.

  The equations
  $$\big({[\hat{\pi}_1+\hat{\rho},\hat{\pi}_2+\hat{\mu}]}_{\NR}+[\hat{\omega}_1,\hat{\vartheta}_2]_{\NR}+[\hat{\omega}_2,\hat{\vartheta}_1]_{\NR}\big)((x,0),(0,u),(0,v))=0$$
  and ${[\hat{\pi}_1,\hat{\pi}_2]}_{\NR}=0$ imply that \eqref{eq:EXT-5} holds.

  The equation
  $$\big({[\hat{\rho},\hat{\vartheta}_1]}_{\NR}+[\hat{\mu},\hat{\vartheta}_2]_{\NR}\big)((x,0),(0,u),(0,v))=0$$
  implies that \eqref{eq:EXT-6} holds.

  The equation
$$\big({[\hat{\pi}_1+\hat{\rho},\hat{\omega}_1]}_{\NR}+[\hat{\pi}_2+\hat{\mu},\hat{\omega}_2]_{\NR}\big)((x,0),(y,0),(z,0))=0$$
implies that \eqref{eq:EXT-9} holds.

Conversely, if \eqref{eq:EXT-1}-\eqref{eq:EXT-9} hold, it is straightforward to check that $(\g\oplus\h,[-,-]_{(\omega_1,\rho)},\{-,-\}_{(\omega_2,\mu)})$ is a compatible Lie algebra.
\end{proof}

\begin{pro}\label{pro:isomorphism}
  Let $(\g,[-,-]_\g,\{-,-\}_\g)$ and $(\h,[-,-]_\h,\{-,-\}_\h)$ be two  compatible Lie algebras. Assume that $(\g\oplus\h,[-,-]_{(\omega_1,\rho)},\{-,-\}_{(\omega_2,\mu)})$ and $(\g\oplus\h,[-,-]_{(\omega_1',\rho')},\{-,-\}_{(\omega_2',\mu')})$ are two extensions of $\g$ by $\h$. Then they are isomorphic if and only if there exists a linear map $\xi:\g\rightarrow \h$ such that the following equalities holds:
\begin{eqnarray}
 \label{eq:isomorphism1} \rho'(x)-\rho(x)&=&\ad^\h_{\xi(x)},\\
 \label{eq:isomorphism2} \mu'(x)-\mu(x)&=&\AD^\h_{\xi(x)},\\
 \label{eq:isomorphism3} \omega_1'(x,y)-\omega_1(x,y)&=&\rho(x)\xi(y)-\rho(y)\xi(x)-\xi([x,y]_\g)+[\xi(x),\xi(y)]_\h,\\
 \label{eq:isomorphism4} \omega_2'(x,y)-\omega_2(x,y)&=&\mu(x)\xi(y)-\mu(y)\xi(x)-\xi(\{x,y\}_\g)+\{\xi(x),\xi(y)\}_\h,\quad\forall~x,y\in\g.
\end{eqnarray}
\end{pro}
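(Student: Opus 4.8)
The plan is to convert the abstract notion of isomorphism of extensions from Definition \ref{defi:isomorphic}(3) into an explicit normal form for the connecting map $\theta$, and then to read off the four relations \eqref{eq:isomorphism1}--\eqref{eq:isomorphism4} by comparing the $\h$-components of the two transported brackets. Since isomorphism of extensions is a symmetric relation, I will present $\theta$ as a compatible Lie algebra isomorphism from the $(\omega_1',\rho',\omega_2',\mu')$-extension to the $(\omega_1,\rho,\omega_2,\mu)$-extension.

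First I would use the commutative diagram \eqref{diagram1} to pin down the shape of $\theta$. Realizing both extensions on $\g\oplus\h$ with $\id(u)=(0,u)$ and $\p(x,u)=x$, commutativity of the left square forces $\theta(0,u)=(0,u)$ for all $u\in\h$, and commutativity of the right square forces $\p\circ\theta=\p$, i.e. the $\g$-component of $\theta(x,u)$ equals $x$. Because $\theta$ is linear, these two conditions are together equivalent to the existence of a linear map $\xi:\g\rightarrow\h$ with
\[
\theta(x,u)=(x,u+\xi(x)),\qquad\forall~x\in\g,~u\in\h.
\]
Conversely, for any linear $\xi$ this $\theta$ is a linear isomorphism (with inverse $(x,u)\mapsto(x,u-\xi(x))$) that fits into \eqref{diagram1}. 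Hence the two extensions are isomorphic if and only if there is a linear $\xi:\g\rightarrow\h$ for which the associated $\theta$ is simultaneously a homomorphism for $[-,-]$ and for $\{-,-\}$.

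Next I would impose the homomorphism condition for the first bracket, $\theta[(x,u),(y,v)]_{(\omega_1',\rho')}=[\theta(x,u),\theta(y,v)]_{(\omega_1,\rho)}$, and expand both sides using \eqref{eq:w-bracket1}. The $\g$-components agree automatically, so the content is a single identity in $\h$ depending on the arbitrary elements $u,v\in\h$. Comparing the $v$-linear part yields the operator identity $\rho'(x)-\rho(x)=\ad^\h_{\xi(x)}$ on $\h$, which is \eqref{eq:isomorphism1} (the $u$-linear part gives the same identity, and the internal bracket $[u,v]_\h$ cancels because $\theta|_\h=\Id$); comparing the part independent of $u,v$ gives
\[
\omega_1'(x,y)-\omega_1(x,y)=\rho(x)\xi(y)-\rho(y)\xi(x)-\xi([x,y]_\g)+[\xi(x),\xi(y)]_\h,
\]
which is \eqref{eq:isomorphism3}. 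Repeating verbatim for $\{-,-\}$ via \eqref{eq:w-bracket2} produces \eqref{eq:isomorphism2} and \eqref{eq:isomorphism4}. For the converse I would note that each comparison is reversible: given $\xi$ satisfying \eqref{eq:isomorphism1}--\eqref{eq:isomorphism4}, substituting them back shows $\theta(x,u)=(x,u+\xi(x))$ matches both brackets term by term, so $\theta$ is a compatible Lie algebra isomorphism realizing \eqref{diagram1}.

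The only delicate point is the coefficient bookkeeping: one must cleanly separate the $u$-linear, $v$-linear and constant contributions in $\h$ and invoke the arbitrariness of $u,v\in\h$ to upgrade the pointwise equalities to the operator identities \eqref{eq:isomorphism1} and \eqref{eq:isomorphism2}. Everything else is a routine expansion, and the cancellation of the $\h$-internal brackets is automatic from $\theta|_\h=\Id$.
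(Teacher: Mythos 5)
Your proposal is correct and follows essentially the same route as the paper's proof: commutativity of the diagram \eqref{diagram1} forces $\theta$ into the normal form $\theta(x,u)=(x,u\pm\xi(x))$, and expanding the two bracket-homomorphism conditions via \eqref{eq:w-bracket1} and \eqref{eq:w-bracket2} and comparing $\h$-components yields exactly \eqref{eq:isomorphism1}--\eqref{eq:isomorphism4}. The only immaterial differences are that you run $\theta$ from the primed extension to the unprimed one (it is the inverse of the paper's map $\theta(x,u)=(x,-\xi(x)+u)$, so the content is identical by symmetry of the isomorphism relation), and that you expand on general pairs $((x,u),(y,v))$ and separate $u$-linear, $v$-linear and constant parts, where the paper evaluates on the special pairs $((x,0),(0,v))$ and $((x,0),(y,0))$.
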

\begin{proof}
  Let $(\g\oplus\h,[-,-]_{(\omega_1,\rho)},\{-,-\}_{(\omega_2,\mu)})$ and $(\g\oplus\h,[-,-]_{(\omega_1',\rho')},\{-,-\}_{(\omega_2',\mu')})$ be two extensions of $\g$ by $\h$. Assume that they are isomorphic. Then there exists a compatible Lie algebra isomorphism $\theta:\g\oplus\h\longrightarrow\g\oplus\h$, such that the following  diagram is commutative:
 \begin{equation*}
\begin{array}{ccccccccc}
0&\longrightarrow&\h&\stackrel{\id_1}\longrightarrow&\g\oplus\h_{(\omega_1,\omega_2,\rho,\mu)}&\stackrel{\p_1}\longrightarrow&\g&\longrightarrow&0\\
 &            &\Big\|&       &\theta\Big\downarrow&          &\Big\|& &\\
 0&\longrightarrow&\h&\stackrel{\id_2}\longrightarrow&\g\oplus\h_{(\omega'_1,\omega'_2,\rho',\mu')}&\stackrel{\p_2}\longrightarrow&\g&\longrightarrow&0.
 \end{array}\end{equation*}
Since $\p_2\circ\theta=\p_1$, we can assume that $\theta(x,u)=(x,-\xi(x)+u)$ for some $\xi:\g\longrightarrow\h$.

  Since
$\theta[(x,0),(0,v)]_{(\omega_1,\rho)}=[\theta(x,0),\theta(0,v)]_{(\omega'_1,\rho')}$
and
$\theta\{(x,0),(0,v)\}_{(\omega_2,\mu)}=\{\theta(x,0),\theta(0,v)\}_{(\omega'_2,\mu')}$,
we   deduce that \eqref{eq:isomorphism1}
and \eqref{eq:isomorphism2} hold.

 Since
$\theta[(x,0),(y,0)]_{(\omega_1,\rho)}=[\theta(x,0),\theta(y,0)]_{(\omega'_1,\rho')}$
and
$\theta\{(x,0),(y,0)\}_{(\omega_2,\mu)}=\{\theta(x,0),\theta(y,0)\}_{(\omega'_2,\mu')}$,
we   deduce that \eqref{eq:isomorphism3}
and \eqref{eq:isomorphism4} hold.

The converse can be proved similarly. We omit the details.
\end{proof}

Let $(\g,\pi_1=[-,-]_\g,\pi_2=\{-,-\}_\g)$ and $(\h,\vartheta_1=[-,-]_\h,\vartheta_2=\{-,-\}_\h)$ be two  compatible Lie algebras. Then $(\g\oplus\h,\pi_1+\vartheta_1,\pi_2+\vartheta_2)$ is a compatible Lie algebra.
By Theorem \ref{pro:new differential Lie algebra}, there is a bidifferential graded Lie algebra $(C^*(\g\oplus \h,\g\oplus\h),[-,-]_{\NR},\partial_1,\partial_2)$, where $\partial_i$ for $i=1,2$ is defined by
$$\partial_i P={[\hat{\pi}_i+\hat{\vartheta}_i,P]}_{\NR},\quad\forall~P\in C^p(\g\oplus\h,\g\oplus\h).$$
Define $C^k_>(\g\oplus\h,\h)\subset C^k(\g\oplus\h,\h)$ by
$$C^k(\g\oplus\h,\h)={C}^k_>(\g\oplus\h,\h)\oplus C^k(\h,\h).$$
Denote by $C^*_>(\g\oplus\h,\h)=\oplus_kC^k_>(\g\oplus\h,\h)$, which is a graded vector space.

\begin{lem}\label{lem:dgla}
  With the above notations, $({C}^*_>(\g\oplus\h,\h),[-,-]_{\NR},\partial_1,\partial_2)$ is a bidifferential graded subalgebra of $(C^*(\g\oplus\h,\g\oplus\h), [-,-]_{\NR},\partial_1,\partial_2)$.    Furthermore, its degree $0$ part $C^0_>(\g\oplus\h,\h)=\Hom(\g,\h)$ is abelian.
\end{lem}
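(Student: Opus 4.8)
The plan is to read off an intrinsic description of the subspace $C^*_>(\g\oplus\h,\h)$ and then verify the two closure properties directly from the defining formula of the Nijenhuis--Richardson bracket. Since $C^k(\g\oplus\h,\g\oplus\h)=\Hom(\wedge^k(\g\oplus\h),\g\oplus\h)$ splits both according to the target $\g\oplus\h$ and according to $\wedge^k(\g\oplus\h)=\oplus_{i+j=k}\wedge^i\g\otimes\wedge^j\h$, the subspace $C^*_>(\g\oplus\h,\h)$ is exactly the set of maps $f$ taking values in $\h$ that \emph{vanish whenever all of their arguments lie in $\h$}, the complementary summand $C^*(\h,\h)$ being the maps supported on pure $\h$-inputs. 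First I would record the two features of the lifts that drive everything: $\hat\pi_i$ takes values in $\g$ and vanishes as soon as one argument lies in $\h$, whereas $\hat\vartheta_i$ takes values in $\h$ and vanishes as soon as one argument lies in $\g$.

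For closure under $[-,-]_\NR$, take $P,Q\in C^*_>(\g\oplus\h,\h)$ and inspect $P\circ Q$. Its values are computed by feeding $Q(\cdots)\in\h$ into $P$, so $P\circ Q$ again takes values in $\h$; and if all inputs lie in $\h$ then the inner slot $Q(\cdots)$ already vanishes because $Q$ kills pure $\h$-inputs, so $P\circ Q$ vanishes there too. The same holds for $Q\circ P$, whence $[P,Q]_\NR=P\circ Q-(-1)^{pq}Q\circ P\in C^*_>(\g\oplus\h,\h)$. Thus $C^*_>(\g\oplus\h,\h)$ is a graded Lie subalgebra.

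For closure under $\partial_i=[\hat\pi_i+\hat\vartheta_i,-]_\NR$, I would split $\partial_i P=[\hat\pi_i,P]_\NR+[\hat\vartheta_i,P]_\NR$ and treat the two lifts separately; this bookkeeping is the only real obstacle. In the term $[\hat\pi_i,P]_\NR=\hat\pi_i\circ P-(-1)^pP\circ\hat\pi_i$ the composition $\hat\pi_i\circ P$ vanishes identically, because $P$ lands in $\h$ and $\hat\pi_i$ kills any $\h$-valued argument, while $P\circ\hat\pi_i$ lands in $\h$ and vanishes on pure $\h$-inputs since $\hat\pi_i$ of two $\h$-arguments is $0$. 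In the term $[\hat\vartheta_i,P]_\NR$ both $\hat\vartheta_i\circ P$ and $P\circ\hat\vartheta_i$ land in $\h$, and on pure $\h$-inputs each vanishes because $P$ kills all-$\h$ inputs (directly for $\hat\vartheta_i\circ P$, and because the surviving slot $\hat\vartheta_i(\cdots)\in\h$ keeps the inputs of $P$ inside $\h$ for $P\circ\hat\vartheta_i$). Hence $\partial_i P\in C^*_>(\g\oplus\h,\h)$. The remaining axioms ($\partial_i^2=0$, $\partial_1\partial_2+\partial_2\partial_1=0$, the graded Jacobi identity and the derivation property) are inherited from the ambient bidifferential graded Lie algebra of Theorem~\ref{pro:new differential Lie algebra}, so no further computation is needed; equivalently one may package all of this by observing that restriction of a cochain to pure $\h$-inputs is a morphism of bidifferential graded Lie algebras onto $(C^*(\h,\h),[-,-]_\NR,[\vartheta_1,-]_\NR,[\vartheta_2,-]_\NR)$ whose kernel is precisely $C^*_>(\g\oplus\h,\h)$.

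Finally, for the degree-$0$ part, note that $C^0_>(\g\oplus\h,\h)=\Hom(\g,\h)$ consists of maps $\g\to\h$ extended by $0$ on $\h$. For $f,g\in\Hom(\g,\h)$ one has $[f,g]_\NR(x)=f(g(x))-g(f(x))$, and since $g(x),f(x)\in\h$ while $f,g$ vanish on $\h$, both terms are $0$; hence $[f,g]_\NR=0$ and $C^0_>(\g\oplus\h,\h)$ is abelian.
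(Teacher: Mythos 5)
Your proof is correct and takes essentially the same route as the paper: identify ${C}^*_>(\g\oplus\h,\h)$ as the $\h$-valued cochains that vanish on pure $\h$-inputs, verify closure under $[-,-]_{\NR}$ and under $\partial_1,\partial_2$ using the target/vanishing properties of the lifts $\hat{\pi}_i$ and $\hat{\vartheta}_i$, and let the remaining axioms be inherited from the ambient bidifferential graded Lie algebra. The paper leaves these verifications as ``straightforward to check,'' and your case analysis (splitting $\partial_i$ into $[\hat{\pi}_i,-]_{\NR}$ and $[\hat{\vartheta}_i,-]_{\NR}$) supplies exactly those details; your closing remark, realizing $C^*_>(\g\oplus\h,\h)$ as the kernel of the restriction morphism from the $\h$-valued cochains onto $C^*(\h,\h)$, is a valid alternative packaging of the same computation.
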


\begin{proof}
  By the definition of $[-,-]_{\NR}$ and $k_1\partial_1+k_2\partial_2=[k_1(\hat{\pi}_1+\hat{\vartheta}_1)+k_2(\hat{\pi}_2+\hat{\vartheta}_2),-]_{\NR}$, we can show that $({C}^*(\g\oplus\h,\h),[-,-]_{\NR},\partial_1,\partial_2)$ is a bidifferential graded subalgebra of $(C^*(\g\oplus\h,\g\oplus\h), [-,-]_{\NR},\partial_1,\partial_2)$. An element  $P\in C_>^p(\g\oplus\h,\h)$ can be regarded as an element in $C^p(\g\oplus\h,\h)$ such that $P\mid_{C^p(\h,\h)}=0$. Furthermore, for $P\in C_>^p(\g\oplus\h,\h)$ and $Q\in C_>^q(\g\oplus\h,\h)$, it is straightforward to check that $[P,Q]_{\NR}|_{C^{p+q-1}(\h,\h)}=0$ and $(k_1\partial_1+k_2\partial_2) P|_{C^{p+1}(\h,\h)}=0$. Thus $({C}^*_>(\g\oplus\h,\h),[-,-]_{\NR},\partial_1,\partial_2)$ is a bidifferential graded subalgebra of $(C^*(\g\oplus\h,\g\oplus\h), [-,-]_{\NR},\partial_1,\partial_2)$. Obviously, $C^0_>(\g\oplus\h,\h)=\Hom(\g,\h)$ is abelian.
\end{proof}

We  denote this bidifferential graded Lie algebra  by $(L=\oplus_k
L_k,[-,-]_{\NR},\partial_1,\partial_2)$, where
$L_k=C^k_>(\g\oplus\h,\h).$

\begin{pro}\label{pro:JMC}
Let $(\g,[-,-]_\g,\{-,-\}_\g)$ and $(\h,[-,-]_\h,\{-,-\}_\h)$ be two  compatible Lie algebras. The following two statements are equivalent:
\begin{itemize}
  \item[\rm(i)] $(\g\oplus\h,[-,-]_{(\omega_1,\rho)},\{-,-\}_{(\omega_2,\mu)})$ is a compatible Lie algebra, which is a nonabelian extension of $\g$ by $\h$.

  \item[\rm(ii)]The pair $(\hat{\rho}+\hat{\omega}_1,\hat{\mu}+\hat{\omega}_2)$ is a Maurer-Cartan element of the bidifferential graded Lie algebra $(L=\oplus_k L_k,[-,-]_{\NR},\partial_1,\partial_2)$.
\end{itemize}
Thus there is a one-to-one correspondence between  nonabelian extensions of the compatible Lie algebra $\g$ by $\h$ and Maurer-Cartan elements of  the bidifferential graded Lie algebra $(L=\oplus_k L_k,[-,-]_{\NR},\partial_1,\partial_2)$.
\end{pro}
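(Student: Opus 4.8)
The plan is to obtain Proposition~\ref{pro:JMC} as a consequence of the twisting construction in Theorem~\ref{pro:new differential Lie algebra}, applied to the direct-sum compatible Lie algebra $(\g\oplus\h,\pi_1+\vartheta_1,\pi_2+\vartheta_2)$, together with the subalgebra structure of Lemma~\ref{lem:dgla}. The starting observation is that, regarded as elements of $C^2(\g\oplus\h,\g\oplus\h)$, the two transferred brackets \eqref{eq:w-bracket1} and \eqref{eq:w-bracket2} decompose as
\[
[-,-]_{(\omega_1,\rho)}=\hat{\pi}_1+\hat{\vartheta}_1+\hat{\rho}+\hat{\omega}_1,\qquad
\{-,-\}_{(\omega_2,\mu)}=\hat{\pi}_2+\hat{\vartheta}_2+\hat{\mu}+\hat{\omega}_2,
\]
which follows at once from the definitions of the lifts. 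In other words, the extension structure is the compatible Lie algebra structure $(\pi_1+\vartheta_1,\pi_2+\vartheta_2)$ on $\g\oplus\h$ perturbed by the pair $(\hat{\rho}+\hat{\omega}_1,\hat{\mu}+\hat{\omega}_2)$.

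First I would record that $\hat{\rho}+\hat{\omega}_1$ and $\hat{\mu}+\hat{\omega}_2$ are degree-$1$ elements of $L$, that is, they lie in $C^2_>(\g\oplus\h,\h)$: each summand takes values in $\h$, and by the bidegrees $||\hat{\rho}||=||\hat{\mu}||=1|0$ and $||\hat{\omega}_1||=||\hat{\omega}_2||=2\mid-1$ all four maps vanish on $\wedge^2\h=C^2(\h,\h)$. Next I would apply Theorem~\ref{pro:new differential Lie algebra}(2) to the base compatible Lie algebra $(\g\oplus\h,\pi_1+\vartheta_1,\pi_2+\vartheta_2)$, whose associated twisted differentials are exactly $\partial_i=[\hat{\pi}_i+\hat{\vartheta}_i,-]_{\NR}$. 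That theorem gives that the perturbed triple $(\g\oplus\h,[-,-]_{(\omega_1,\rho)},\{-,-\}_{(\omega_2,\mu)})$ is a compatible Lie algebra if and only if $(\hat{\rho}+\hat{\omega}_1,\hat{\mu}+\hat{\omega}_2)$ is a Maurer-Cartan element of the full bidifferential graded Lie algebra $(C^*(\g\oplus\h,\g\oplus\h),[-,-]_{\NR},\partial_1,\partial_2)$. Expanding the three defining Maurer-Cartan equations and using $[\hat{\pi}_i+\hat{\vartheta}_i,\hat{\pi}_j+\hat{\vartheta}_j]_{\NR}=0$ (valid by Theorem~\ref{pro:lsymNR} since $\g\oplus\h$ is compatible) recovers precisely \eqref{eq:EXT1}--\eqref{eq:EXT3}, giving a consistency check with the preceding proposition.

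The one genuinely necessary step is to descend from the full complex to the subalgebra $L=C^*_>(\g\oplus\h,\h)$. By Lemma~\ref{lem:dgla}, $L$ is closed under $[-,-]_{\NR}$, $\partial_1$ and $\partial_2$; since $(\hat{\rho}+\hat{\omega}_1,\hat{\mu}+\hat{\omega}_2)$ already lies in $L_1\oplus L_1$, all terms appearing in the three Maurer-Cartan equations remain inside $L$, so this pair is a Maurer-Cartan element of the full complex if and only if it is one of $(L,[-,-]_{\NR},\partial_1,\partial_2)$. Combined with the previous paragraph this yields the equivalence (i)$\Leftrightarrow$(ii). Finally, since any nonabelian extension of $\g$ by $\h$ becomes, after choosing a linear section, isomorphic to one of the standard form $(\g\oplus\h,[-,-]_{(\omega_1,\rho)},\{-,-\}_{(\omega_2,\mu)})$, reading the quadruple $(\rho,\mu,\omega_1,\omega_2)$ off a Maurer-Cartan element and conversely produces the asserted one-to-one correspondence. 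I expect the only point demanding care to be the verification that the perturbation lands in degree $1$ of $L$ and that $\partial_1,\partial_2$ and the bracket preserve $L$, but both are already supplied by Lemmas~\ref{lem:bidegree perserve} and~\ref{lem:dgla}.
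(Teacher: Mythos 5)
Your proof is correct, and it takes a genuinely different route from the paper's. The paper argues component-wise: it first unpacks condition (i) into the nine structural equations \eqref{eq:EXT-1}--\eqref{eq:EXT-9} via Lemma \ref{lem:nonabelian relations}, then quotes Fr\'egier \cite{nonabelin cohomology of Lie} to identify the two diagonal Maurer-Cartan equations for $\hat{\rho}+\hat{\omega}_1$ and $\hat{\mu}+\hat{\omega}_2$ with \eqref{eq:EXT-1}, \eqref{eq:EXT-3}, \eqref{eq:EXT-7}, respectively \eqref{eq:EXT-2}, \eqref{eq:EXT-4}, \eqref{eq:EXT-8}, and finally carries out a long explicit expansion of $\partial_1(\hat{\mu}+\hat{\omega}_2)+\partial_2(\hat{\rho}+\hat{\omega}_1)+[\hat{\rho}+\hat{\omega}_1,\hat{\mu}+\hat{\omega}_2]_{\NR}$ on the three types of arguments to match the mixed Maurer-Cartan equation with \eqref{eq:EXT-5}, \eqref{eq:EXT-6}, \eqref{eq:EXT-9}, before concluding with Proposition \ref{pro:MC-equivalent 2-para}. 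You instead observe that the extension brackets are exactly $\hat{\pi}_i+\hat{\vartheta}_i$ perturbed by $(\hat{\rho}+\hat{\omega}_1,\hat{\mu}+\hat{\omega}_2)$, apply the twisting result (Theorem \ref{pro:new differential Lie algebra}, part (2)) to the direct-sum compatible Lie algebra $(\g\oplus\h,\pi_1+\vartheta_1,\pi_2+\vartheta_2)$ to translate (i) into the Maurer-Cartan condition in the full complex $(C^*(\g\oplus\h,\g\oplus\h),[-,-]_{\NR},\partial_1,\partial_2)$, and then descend to $L=C^*_>(\g\oplus\h,\h)$: the perturbation lies in $L_1$ (your bidegree argument for vanishing on $\wedge^2\h$ is exactly the needed check), and since Lemma \ref{lem:dgla} makes $L$ a bidifferential graded Lie subalgebra, all terms of the three Maurer-Cartan equations stay inside the subspace $L$, so the equations hold in the ambient complex if and only if they hold in $L$. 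This structural argument buys brevity and self-containedness: it eliminates both the external citation and the computational core of the paper's proof, reducing the proposition to machinery already established in Section \ref{sec:cohomology I}. What the paper's computation buys in exchange is the explicit dictionary \eqref{eq:EXT-1}--\eqref{eq:EXT-9} between the Maurer-Cartan condition and concrete identities (generalized representation conditions and cocycle identities), which makes the content of condition (ii) transparent; your proof does not produce these equations, but the proposition as stated does not need them.
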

\begin{proof}
By Lemma \ref{lem:nonabelian relations}, $(\g\oplus\h,[-,-]_{(\omega_1,\rho)},\{-,-\}_{(\omega_2,\mu)})$ is a compatible Lie algebra if and only if \eqref{eq:EXT-1}-\eqref{eq:EXT-9} hold.

It was shown in \cite{nonabelin cohomology of Lie} that $\hat{\rho}+\hat{\omega}_1$ is a Maurer-Cartan element of the differential graded Lie algebra $(L=\oplus_k L_k,[-,-]_{\NR},\partial_1)$ if and only if \eqref{eq:EXT-1}, \eqref{eq:EXT-3} and \eqref{eq:EXT-7} hold and $\hat{\mu}+\hat{\omega}_2$ is a Maurer-Cartan element of the differential graded Lie algebra $(L=\oplus_k L_k,[-,-]_{\NR},\partial_2)$ if and only if \eqref{eq:EXT-2}, \eqref{eq:EXT-4} and \eqref{eq:EXT-8} hold.

Furthermore, for $e_i=(x_i,v_i)\in\g\oplus\h$, by a direct calculation, we have
\begin{eqnarray*}
  &&\partial_1 (\hat{\mu}+\hat{\omega}_2)(e_1,e_2,e_3)\\
  &=&-[v_1,\omega_2(x_2,x_3)]_\h-[v_2,\omega_2(x_3,x_1)]_\h-[v_3,\omega_2(x_1,x_2)]_\h+\omega_2([x_1,x_2]_\g,x_3)+\omega_2([x_2,x_3]_\g,x_1)\\
  &&+\omega_2([x_3,x_1]_\g,x_2)-[v_1,\mu(x_2)v_3]_\h-[v_2,\mu(x_3)v_1]_\h-[v_3,\mu(x_1)v_2]_\h+[v_1,\mu(x_3)v_2]_\h\\
  &&+[v_2,\mu(x_1)v_3]_\h+[v_3,\mu(x_2)v_1]_\h+\mu([x_1,x_2]_\g)v_3+\mu([x_3,x_1]_\g)v_2+\mu([x_2,x_3]_\g)v_1\\
  &&-\mu(x_3)[v_1,v_2]_\h-\mu(x_2)[v_3,v_1]_\h-\mu(x_1)[v_2,v_3]_\h;\\
  &&\partial_2 (\hat{\rho}+\hat{\omega}_1)(e_1,e_2,e_3)\\
  &=&-\{v_1,\omega_1(x_2,x_3)\}_\h-\{v_2,\omega_1(x_3,x_1)\}_\h-\{v_3,\omega_1(x_1,x_2)\}_\h+\omega_1(\{x_1,x_2\}_\g,x_3)+\omega_1(\{x_2,x_3\}_\g,x_1)\\
  &&+\omega_1(\{x_3,x_1\}_\g,x_2)-\{v_1,\rho(x_2)v_3\}_\h-\{v_2,\rho(x_3)v_1\}-\{v_3,\rho(x_1)v_2\}_\h+\{v_1,\rho(x_3)v_2\}_\h\\
  &&+\{v_2,\rho(x_1)v_3\}_\h+\{v_3,\rho(x_2)v_1\}_\h+\rho(\{x_1,x_2\}_\g)v_3+\rho(\{x_3,x_1\}_\g)v_2+\rho(\{x_2,x_3\}_\g)v_1\\
  &&-\rho(x_3)\{v_1,v_2\}_\h-\rho(x_2)\{v_3,v_1\}_\h-\rho(x_1)\{v_2,v_3\}_\h;\\
  &&{[\hat{\rho}+\hat{\omega}_1,\hat{\mu}+\hat{\omega}_2]}_{\NR}(e_1,e_2,e_3)\\
  &=&-\rho(x_3)\omega_2(x_1,x_2)-\rho(x_3)\mu(x_1)v_2+\rho(x_3)\mu(x_2)v_1+\rho(x_2)\omega_2(x_1,x_3)+\rho(x_2)\mu(x_1)v_3\\
  &&-\rho(x_2)\mu(x_3)v_1-\rho(x_1)\omega_2(x_2,x_3)-\rho(x_1)\mu(x_2)v_3+\rho(x_1)\mu(x_3)v_2-\mu(x_3)\omega_1(x_1,x_2)\\
 &&-\mu(x_3)\rho(x_1)v_2+\mu(x_3)\rho(x_2)v_1+\mu(x_2)\omega_1(x_1,x_3)+\mu(x_2)\rho(x_1)v_3-\mu(x_2)\rho(x_3)v_1\\
 &&-\mu(x_1)\omega_1(x_2,x_3)-\mu(x_1)\rho(x_2)v_3+\mu(x_1)\rho(x_3)v_2.
\end{eqnarray*}
Then
\begin{eqnarray*}
\Big(\partial_1 (\hat{\mu}+\hat{\omega}_2)+\partial_2 (\hat{\rho}+\hat{\omega}_1) +{[\hat{\rho}+\hat{\omega}_1,\hat{\mu}+\hat{\omega}_2]}_{\NR}\Big)((x_1,0),(x_2,0),(0,v_3))=0
\end{eqnarray*}
implies that \eqref{eq:EXT-5} holds.
\begin{eqnarray*}
\Big(\partial_1 (\hat{\mu}+\hat{\omega}_2)+\partial_2 (\hat{\rho}+\hat{\omega}_1) +{[\hat{\rho}+\hat{\omega}_1,\hat{\mu}+\hat{\omega}_2]}_{\NR}\Big)((x_1,0),(0,v_2),(0,v_3))=0
\end{eqnarray*}
implies that \eqref{eq:EXT-6} holds.
\begin{eqnarray*}
\Big(\partial_1 (\hat{\mu}+\hat{\omega}_2)+\partial_2 (\hat{\rho}+\hat{\omega}_1) +{[\hat{\rho}+\hat{\omega}_1,\hat{\mu}+\hat{\omega}_2]}_{\NR}\Big)((x_1,0),(x_2,0),(x_3,0))=0
\end{eqnarray*}
implies that \eqref{eq:EXT-9} holds.

It is  straightforward to see that if
\eqref{eq:EXT-5}, \eqref{eq:EXT-6} and \eqref{eq:EXT-9} hold, then
$$\Big(\partial_1 (\hat{\mu}+\hat{\omega}_2)+\partial_2 (\hat{\rho}+\hat{\omega}_1) +{[\hat{\rho}+\hat{\omega}_1,\hat{\mu}+\hat{\omega}_2]}_{\NR}\Big)(e_1,e_2,e_3)=0.$$

By Proposition \ref{pro:MC-equivalent 2-para}, the conclusion follows.
\end{proof}

Let $(L,[-,-],\partial)$ be a differential graded Lie algebra with $L_0$ abelian. Recall that two Maurer-Cartan elements $P$ and $P'$  are called {\bf gauge equivalent} if  there exists an element
$\xi\in L_0$ such that
\begin{equation}
  P'=e^{\ad_\xi}P-\frac{e^{\ad_\xi} -1}{\ad_\xi}\partial \xi.
\end{equation}

\begin{defi}
Let $(L,[-,-],\partial_1,\partial_2)$ be a bidifferential graded
Lie algebra with $L_0$ abelian. Two Maurer-Cartan elements $(P_1,P_2)$ and
$(P'_1,P'_2)$ of
$(L,[-,-],\partial_1,\partial_2)$ are called {\bf gauge
equivalent} if  there exists an element $\xi\in L_0$ such that
\begin{eqnarray}
  P_1'&=&e^{\ad_\xi}P_1-\frac{e^{\ad_\xi} -1}{\ad_\xi}\partial_1 \xi,\\
  P_2'&=&e^{\ad_\xi}P_2-\frac{e^{\ad_\xi} -1}{\ad_\xi}\partial_2 \xi.
\end{eqnarray}
\end{defi}

It is straightforward to check that
\begin{pro}
 Two Maurer-Cartan elements $(P_1,P_2)$ and $(P_1',P_2')$ of the bidifferential graded Lie algebra $(L,[-,-],\partial_1,\partial_2)$ are  gauge equivalent if and only if for any $k_1,k_2\in\K$, the Maurer-Cartan elements $k_1 P_1+k_2 P_2$ and $k_1 P'_1+k_2 P'_2$ of the differential graded Lie algebra $(L,[-,-],\partial_{k_1,k_2}=k_1\partial_1+k_2\partial_2)$ are gauge equivalent.
\end{pro}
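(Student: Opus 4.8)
The plan is to reduce everything to the $\K$-linearity of the gauge transformation formula in the pair $(P,\partial)$ for a fixed gauge parameter $\xi$. For $\xi\in L_0$ and any differential $\partial$ making $(L,[-,-],\partial)$ a differential graded Lie algebra, write
\[
\Phi_\xi^\partial(P)=e^{\ad_\xi}P-\frac{e^{\ad_\xi}-1}{\ad_\xi}\partial\xi ,
\]
so that $(P_1,P_2)$ and $(P_1',P_2')$ are gauge equivalent via $\xi$ precisely when $P_1'=\Phi_\xi^{\partial_1}(P_1)$ and $P_2'=\Phi_\xi^{\partial_2}(P_2)$. The first thing I would record is that both $e^{\ad_\xi}=\sum_{n\geq 0}\frac{1}{n!}(\ad_\xi)^n$ and $\frac{e^{\ad_\xi}-1}{\ad_\xi}=\sum_{n\geq 0}\frac{1}{(n+1)!}(\ad_\xi)^n$ are $\K$-linear endomorphisms of $L$; since $\partial_{k_1,k_2}=k_1\partial_1+k_2\partial_2$, we also have $\partial_{k_1,k_2}\xi=k_1\partial_1\xi+k_2\partial_2\xi$.

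The key step is then the identity obtained purely by linearity: for the \emph{same} $\xi$ and all $k_1,k_2\in\K$,
\[
\Phi_\xi^{\partial_{k_1,k_2}}(k_1P_1+k_2P_2)=k_1\Phi_\xi^{\partial_1}(P_1)+k_2\Phi_\xi^{\partial_2}(P_2).
\]
This is immediate since $e^{\ad_\xi}(k_1P_1+k_2P_2)=k_1e^{\ad_\xi}P_1+k_2e^{\ad_\xi}P_2$ and $\frac{e^{\ad_\xi}-1}{\ad_\xi}\partial_{k_1,k_2}\xi=k_1\frac{e^{\ad_\xi}-1}{\ad_\xi}\partial_1\xi+k_2\frac{e^{\ad_\xi}-1}{\ad_\xi}\partial_2\xi$.

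For the forward direction I would assume $(P_1,P_2)\sim(P_1',P_2')$ via some $\xi\in L_0$, so that $P_i'=\Phi_\xi^{\partial_i}(P_i)$ for $i=1,2$; substituting into the displayed identity yields $k_1P_1'+k_2P_2'=\Phi_\xi^{\partial_{k_1,k_2}}(k_1P_1+k_2P_2)$, which is exactly the statement that $k_1P_1+k_2P_2$ and $k_1P_1'+k_2P_2'$ are gauge equivalent in $(L,[-,-],\partial_{k_1,k_2})$ via the same $\xi$. For the converse I would specialize the hypothesis to the two parameter choices $(k_1,k_2)=(1,0)$ and $(k_1,k_2)=(0,1)$, which recover $P_1'=\Phi_\xi^{\partial_1}(P_1)$ and $P_2'=\Phi_\xi^{\partial_2}(P_2)$ respectively, i.e. precisely the definition of gauge equivalence of $(P_1,P_2)$ and $(P_1',P_2')$ in the bidifferential graded Lie algebra.

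The one point requiring care, and the main obstacle, is the role of the gauge parameter in the converse: the biconditional must be read so that a single $\xi\in L_0$ simultaneously realizes all the one-parameter gauge equivalences, matching the uniform $\xi$ produced in the forward direction. With this natural reading the two special values $(1,0)$ and $(0,1)$ suffice and the argument closes; were one instead to allow a different parameter for each $(k_1,k_2)$, extracting a common $\xi$ would require additional input (such as injectivity of $\xi\mapsto\Phi_\xi^{\partial}$), which is why I would fix the uniform-$\xi$ interpretation at the outset, consistent with Proposition \ref{pro:MC-equivalent 2-para}.
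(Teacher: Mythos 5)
Your proof is correct and is exactly the argument the paper has in mind: the paper states this proposition without any written proof (prefacing it with ``It is straightforward to check that''), and the intended verification is precisely your observation that the gauge transform $\Phi^{\partial}_\xi(P)=e^{\ad_\xi}P-\frac{e^{\ad_\xi}-1}{\ad_\xi}\partial\xi$ is linear in the pair $(P,\partial\xi)$ for fixed $\xi$, which gives the forward direction by linearity and the converse by specializing to $(k_1,k_2)=(1,0)$ and $(0,1)$. Your closing caveat is also the right call: the statement is only ``straightforward'' under the uniform-$\xi$ reading (a single $\xi$ witnessing all parameter values, as your forward direction produces); under the literal pointwise reading, where $\xi$ may depend on $(k_1,k_2)$, the specialization yields two possibly distinct parameters and extracting a common one would require genuinely additional input, so fixing that interpretation at the outset, as you do, is what makes the proposition consistent with the paper's definition of gauge equivalence for pairs.
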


\begin{lem}\label{lem:gauge equivalent}
Let $\g$ and $\h$ be two compatible Lie algebras. Then the two Maurer-Cartan elements $(\hat{\rho}+\hat{\omega}_1,\hat{\mu}+\hat{\omega}_2)$ and $(\hat{\rho'}+\hat{\omega'}_1,\hat{\mu'}+\hat{\omega'}_2)$ of the bidifferential graded Lie algebra $(L=\oplus_k L_k,[-,-]_{\NR},\partial_1,\partial_2)$ are gauge equivalent if and only if there exists an element $\xi\in L_0$ such that the following equations holds:
\begin{eqnarray}
\label{eq:guage equivalent1}   \hat{\rho'}+\hat{\omega'}_1&=&\hat{\rho}+\hat{\omega}_1+[\xi,\hat{\rho}]_{\NR}-\partial_1\xi-\half[\xi,\partial_1\xi]_{\NR},\\
 \label{eq:guage equivalent2}  \hat{\mu'}+\hat{\omega'}_2&=&\hat{\mu}+\hat{\omega}_2+[\xi,\hat{\mu}]_{\NR}-\partial_2\xi- \half[\xi,\partial_2\xi]_{\NR}.
\end{eqnarray}
\end{lem}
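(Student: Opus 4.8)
The plan is to unwind the definition of gauge equivalence for the bidifferential graded Lie algebra $(L,[-,-]_\NR,\partial_1,\partial_2)$ and to show that, for $\xi\in L_0=\Hom(\g,\h)$, the two series $e^{\ad_\xi}$ and $\frac{e^{\ad_\xi}-1}{\ad_\xi}$ appearing in that definition collapse to their first two terms when applied to the Maurer-Cartan datum at hand. The whole argument is bidegree bookkeeping based on Lemma \ref{lem:bidegree perserve}. First I would record the relevant bidegrees: $\|\hat\rho\|=\|\hat\mu\|=\|\hat\pi_1\|=\|\hat\pi_2\|=1|0$, $\|\hat\omega_1\|=\|\hat\omega_2\|=2|-1$, $\|\hat\vartheta_1\|=\|\hat\vartheta_2\|=0|1$, and, for $\xi:\g\to\h$, $\|\xi\|=1|-1$ (a map $\g^{1,0}\to\h$ forces $p=1$, $q=-1$ in the defining condition (ii) of bidegree). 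The decisive observation is that a nonzero homogeneous element of $L=C^*_>(\g\oplus\h,\h)$, being valued in $\h$, must have bidegree $p|q$ with $q\ge -1$: if $q<-1$ then $\g^{p,q+1}=0$, condition (ii) is vacuous, and condition (i) forces the ($\g$-valued, hence zero) component to vanish. Together with Lemma \ref{lem:bidegree perserve}, any $\NR$-bracket producing second bidegree below $-1$ must therefore be zero.

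Using this I would verify the truncations directly. Since $[\xi,\hat\omega_1]_\NR$ has bidegree $3|-2$ it vanishes, so $\ad_\xi(\hat\rho+\hat\omega_1)=[\xi,\hat\rho]_\NR$; and since $\ad_\xi^2(\hat\rho+\hat\omega_1)=[\xi,[\xi,\hat\rho]_\NR]_\NR$ again has bidegree $3|-2$, it too vanishes. Hence $e^{\ad_\xi}(\hat\rho+\hat\omega_1)=\hat\rho+\hat\omega_1+[\xi,\hat\rho]_\NR$. Likewise $\partial_1\xi=[\hat\pi_1,\xi]_\NR+[\hat\vartheta_1,\xi]_\NR\in L_1$, and because $[\xi,[\hat\pi_1,\xi]_\NR]_\NR$ has bidegree $3|-2=0$ while $[\xi,[\hat\vartheta_1,\xi]_\NR]_\NR$ has bidegree $2|-1$, one more application of $\ad_\xi$ pushes the latter to bidegree $3|-2$, so $\ad_\xi^2(\partial_1\xi)=0$ and $\frac{e^{\ad_\xi}-1}{\ad_\xi}\partial_1\xi=\partial_1\xi+\half[\xi,\partial_1\xi]_\NR$. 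Substituting both expansions into $P_1'=e^{\ad_\xi}P_1-\frac{e^{\ad_\xi}-1}{\ad_\xi}\partial_1\xi$ with $P_1=\hat\rho+\hat\omega_1$ yields exactly \eqref{eq:guage equivalent1}, and the identical computation with $\partial_2$, $\hat\mu$, $\hat\omega_2$ in place of $\partial_1$, $\hat\rho$, $\hat\omega_1$ gives \eqref{eq:guage equivalent2}. The converse is immediate: reading \eqref{eq:guage equivalent1} and \eqref{eq:guage equivalent2} backwards recovers the defining gauge-equivalence formulas with the same $\xi$.

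The main obstacle is essentially the one substantive point underlying all the truncations, namely that elements of $L$ with second bidegree below $-1$ must vanish; everything else is the routine tracking of bidegrees through the exponential series, made automatic by Lemma \ref{lem:bidegree perserve}. I would stress that no further use of $L_0$ being abelian is needed inside the computation beyond its role in making gauge equivalence well defined, which is already guaranteed by Lemma \ref{lem:dgla}.
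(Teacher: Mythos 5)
Your proposal is correct and follows essentially the same route as the paper: unwind the definition of gauge equivalence and truncate the series $e^{\ad_\xi}$ and $\frac{e^{\ad_\xi}-1}{\ad_\xi}$ via bidegree counting with Lemma \ref{lem:bidegree perserve}, exactly as the paper does when it records $\ad_\xi^n(\hat{\rho})=0$ for $n\geq 2$ and $\ad_\xi^n(\hat{\omega}_1)=0$ for $n\geq 1$. Your explicit justification that $\h$-valued homogeneous maps of second bidegree below $-1$ vanish, and your splitting of $\partial_1\xi$ into $[\hat\pi_1,\xi]_{\NR}+[\hat\vartheta_1,\xi]_{\NR}$ to verify $\ad_\xi^2(\partial_1\xi)=0$, merely make explicit what the paper leaves implicit in its ``Similarly'' step.
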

\begin{proof}
By Proposition \ref{pro:MC-equivalent 2-para}, the two Maurer-Cartan elements $(\hat{\rho}+\hat{\omega}_1,\hat{\mu}+\hat{\omega}_2)$ and $(\hat{\rho'}+\hat{\omega'}_1,\hat{\mu'}+\hat{\omega'}_2)$ are gauge equivalent if and only if there exists an element
$\xi\in L_0$ such that
\begin{eqnarray}
\label{eq:guage equivalent3} \hat{\rho'}+\hat{\omega'}_1&=&e^{\ad_\xi}(\hat{\rho}+\hat{\omega}_1)-\frac{e^{\ad_\xi} -1}{\ad_\xi}\partial_1 \xi,\\
\label{eq:guage equivalent4} \hat{\mu'}+\hat{\omega'}_2&=&e^{\AD_\xi}(\hat{\mu}+\hat{\omega}_2)-\frac{e^{\AD_\xi} -1}{\AD_\xi}\partial_2 \xi.
\end{eqnarray}
By the definition of bidegrees, we have $||\xi||=1|-1$,  $||\hat{\rho}||=||\hat{\mu}||=1|0$ and $||\hat{\omega}_1||=||\hat{\omega}_2||=2|-1$. Then we have $$\ad_\xi^n(\hat{\rho})=\AD_\xi^n(\hat{\mu})=0,\quad n\geq 2,\quad \ad_\xi^n(\hat{\omega}_1)=\AD_\xi^n(\hat{\omega}_2)=0,\quad n\geq 1.$$
Thus we have
\begin{eqnarray*}
e^{\ad_\xi}(\hat{\rho}+\hat{\omega}_1)&=&\hat{\rho}+\hat{\omega}_1+[\xi,\hat{\rho}]_{\NR},\\
e^{\AD_\xi}(\hat{\mu}+\hat{\omega}_2)&=&\hat{\mu}+\hat{\omega}_2+[\xi,\hat{\mu}]_{\NR}.
\end{eqnarray*}
Similarly, we have
\begin{eqnarray*}
\frac{e^{\ad_\xi} -1}{\ad_\xi}\partial_1 \xi&=&\partial_1\xi+\half[\xi,\partial_1\xi]_{\NR},\\
\frac{e^{\AD_\xi} -1}{\AD_\xi}\partial_2 \xi&=&\partial_2\xi+ \half[\xi,\partial_2\xi]_{\NR}.
\end{eqnarray*}
Thus \eqref{eq:guage equivalent3} and \eqref{eq:guage equivalent4}
are equivalent to \eqref{eq:guage equivalent1} and \eqref{eq:guage
equivalent2}, respectively.
\end{proof}

Now we are ready to give the main result in the section which classify nonabelian extensions of compatible Lie algebras.

\begin{thm}
 Let $\g$ and $\h$ be two compatible Lie algebras. Then there is a one-to-one correspondence between the isomorphism classes of nonabelian
extensions of the compatible Lie algebra $\g$ by $\h$ and the gauge equivalence classes of Maurer-Cartan elements in the bidifferential graded Lie algebra $(L=\oplus_k L_k,[-,-]_{\NR},\partial_1,\partial_2)$.
\end{thm}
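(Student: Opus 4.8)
The plan is to promote the bijection of Proposition \ref{pro:JMC} --- which identifies a nonabelian extension datum $(\g\oplus\h,[-,-]_{(\omega_1,\rho)},\{-,-\}_{(\omega_2,\mu)})$ with a Maurer-Cartan element $(\hat{\rho}+\hat{\omega}_1,\hat{\mu}+\hat{\omega}_2)$ of $(L,[-,-]_{\NR},\partial_1,\partial_2)$ --- to a bijection between the two quotient sets. First I would fix, for each nonabelian extension $\hat{\g}$ of $\g$ by $\h$, a linear section $\sigma$; by the discussion preceding Proposition \ref{pro:JMC} this realizes $\hat{\g}$ as a standard structure on $\g\oplus\h$, to which Proposition \ref{pro:JMC} attaches a Maurer-Cartan element of $L$. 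I would then check that a different section $\sigma'$ produces a standard structure isomorphic to the first in the sense of Definition \ref{defi:isomorphic}(3), via the map $\theta(x,u)=(x,(\sigma-\sigma')(x)+u)$ with $\sigma-\sigma'\in\Hom(\g,\h)$. Thus passing to isomorphism classes of extensions is the same as passing to isomorphism classes of standard structures on $\g\oplus\h$, and it remains to match these with gauge equivalence classes of Maurer-Cartan elements.

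The core is to show that, under Proposition \ref{pro:JMC}, two standard structures are isomorphic as extensions if and only if their Maurer-Cartan elements are gauge equivalent. For this I would compare the two characterizations already available: Proposition \ref{pro:isomorphism} says $(\omega_1,\omega_2,\rho,\mu)$ and $(\omega_1',\omega_2',\rho',\mu')$ are isomorphic exactly when some $\xi\in\Hom(\g,\h)$ satisfies \eqref{eq:isomorphism1}--\eqref{eq:isomorphism4}, while Lemma \ref{lem:gauge equivalent} says the associated Maurer-Cartan elements are gauge equivalent exactly when some $\xi\in L_0=\Hom(\g,\h)$ satisfies \eqref{eq:guage equivalent1} and \eqref{eq:guage equivalent2}. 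Hence it suffices to prove that these two systems of equations are equivalent for the same gauge parameter, up to the sign forced by the convention $\theta(x,u)=(x,-\xi(x)+u)$ used in Proposition \ref{pro:isomorphism} (so that the Lemma's $\xi$ corresponds to $-\xi$ there).

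To establish this equivalence I would decompose \eqref{eq:guage equivalent1} and \eqref{eq:guage equivalent2} by bidegree, using Lemma \ref{lem:bidegree perserve} together with the bidegrees $||\xi||=1|-1$, $||\hat{\rho}||=||\hat{\mu}||=||\hat{\pi}_i||=1|0$, $||\hat{\vartheta}_i||=0|1$, and $||\hat{\omega}_i||=2|-1$ recorded in the proof of Lemma \ref{lem:gauge equivalent}. Writing $\partial_i\xi=[\hat{\pi}_i+\hat{\vartheta}_i,\xi]_{\NR}$, the bidegree $1|0$ part of each identity recovers \eqref{eq:isomorphism1} and \eqref{eq:isomorphism2}, since the only $1|0$ contribution comes from $-[\hat{\vartheta}_i,\xi]_{\NR}$, which evaluates on $\big((x,0),(0,v)\big)$ to $(0,\ad^\h_{\xi(x)}v)$ and $(0,\AD^\h_{\xi(x)}v)$ respectively. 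The bidegree $2|-1$ part recovers \eqref{eq:isomorphism3} and \eqref{eq:isomorphism4}: the term $[\xi,\hat{\rho}]_{\NR}$ evaluated on $\big((x,0),(y,0)\big)$ yields $\rho(x)\xi(y)-\rho(y)\xi(x)$ (and the $\hat{\mu}$ analogue), the term $[\hat{\pi}_i,\xi]_{\NR}$ yields $\xi([x,y]_\g)$ (resp.\ $\xi(\{x,y\}_\g)$), and the quadratic term $\tfrac12[\xi,\partial_i\xi]_{\NR}$ yields the nonabelian bracket $[\xi(x),\xi(y)]_\h$ (resp.\ $\{\xi(x),\xi(y)\}_\h$).

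I expect the main obstacle to be precisely this last quadratic term. One must verify that in bidegree $2|-1$ only the summand $[\xi,[\hat{\vartheta}_i,\xi]_{\NR}]_{\NR}$ of $[\xi,\partial_i\xi]_{\NR}$ survives --- the summand $[\xi,[\hat{\pi}_i,\xi]_{\NR}]_{\NR}$ has bidegree $3|-2$, where the relevant wedge spaces $\g^{3,-2}$ vanish --- and that it reproduces the $\h$-bracket term with exactly the coefficient and sign needed to cancel the factor $\tfrac12$ against the antisymmetrization in the Nijenhuis--Richardson composition. Granting this bidegree bookkeeping, the two systems of equations coincide, so isomorphism of standard structures matches gauge equivalence of Maurer-Cartan elements; combining this with Proposition \ref{pro:JMC} and the section-independence established at the outset yields the asserted one-to-one correspondence.
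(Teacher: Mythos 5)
Your proposal is correct and takes essentially the same approach as the paper: the paper's proof likewise combines Proposition \ref{pro:JMC}, Proposition \ref{pro:isomorphism} and Lemma \ref{lem:gauge equivalent}, and then checks by a direct calculation that \eqref{eq:guage equivalent1}--\eqref{eq:guage equivalent2} are equivalent to \eqref{eq:isomorphism1}--\eqref{eq:isomorphism4}. Your bidegree-by-bidegree organization of that check (vanishing of the $3|-2$ component, cancellation of the $\tfrac{1}{2}$ against the antisymmetrization in the quadratic term) is just a cleaner packaging of the paper's expansion on elements $((x,u),(y,v))$, and your sign caveat $\xi\leftrightarrow-\xi$ is harmless for the stated bijection since negation is a bijection of $L_0=\Hom(\g,\h)$.
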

\begin{proof}
  By a direct calculation, for all $x,y\in\g$ and $u,v\in \h$, we have
  \begin{eqnarray*}
    &&\big(\hat{\rho'}+\hat{\omega'}_1-\hat{\rho}-\hat{\omega}_1-[\xi,\hat{\rho}]_{\NR}+\partial_1\xi+\half[\xi,\partial_1\xi]_{\NR}\big)((x,u),(y,v))\\
    &=&\rho'(x)v-\rho(x)v-[\xi(x),v]_\h-\rho'(y)u+\rho(y)u+[\xi(y),u]_\h\\
    &&+\omega_1'(x,y)-\omega_1(x,y)-\rho(x)\xi(y)+\rho(y)\xi(x)+\xi([x,y]_\g)-[\xi(x),\xi(y)]_\h,
  \end{eqnarray*}
  which implies that \eqref{eq:guage equivalent1} holds if and only if \eqref{eq:isomorphism1} and \eqref{eq:isomorphism3} hold. Similarly, \eqref{eq:guage equivalent2} holds if and only if \eqref{eq:isomorphism2} and \eqref{eq:isomorphism4} hold. By Lemma \ref{lem:gauge equivalent} and Proposition \ref{pro:JMC}, the conclusion follows.
\end{proof}

 \end{document}